\declaretheorem{theorem}
\declaretheorem[sibling=theorem]{definition}
\declaretheorem[sibling=theorem]{lemma}
\declaretheorem[sibling=theorem]{corollary}
\declaretheorem[sibling=theorem]{fact}
\newcommand\numberthis{\addtocounter{equation}{1}\tag{\theequation}}
\newcommand{\OPT}{\ensuremath{\mathit{OPT}}}
\newcommand{\poly}{\ensuremath{\text{poly}}}
\newcommand{\EMD}{\ensuremath{\textnormal{EMD}}}
\newcommand{\R}{\ensuremath{\mathbb{R}}}
\newcommand{\reals}{\ensuremath{\mathbb{R}}}
\newcommand{\N}{\ensuremath{\mathbb{N}}}
\newcommand{\eps}{\varepsilon}
\newcommand{\xhat}{\ensuremath{\widehat{x}}}
\newcommand{\pinv}{^\dag}
\newcommand{\med}[1]{\textnormal{MED}(#1)}
\newcommand{\Asupports}{\ensuremath{\mathbb{A}}}
\newcommand{\Bsupports}{\ensuremath{\mathbb{B}}}
\newcommand{\Csupports}{\ensuremath{\mathbb{C}}}
\newcommand{\Msupports}{\ensuremath{\mathbb{M}}}
\newcommand{\Msupportsclosure}{\ensuremath{\mathbb{M}^+}}
\newcommand{\Mmodel}{\ensuremath{\mathcal{M}}}
\newcommand{\Amodel}{\ensuremath{\mathcal{A}}}
\newcommand{\Bmodel}{\ensuremath{\mathcal{B}}}
\newcommand{\Cmodel}{\ensuremath{\mathcal{C}}}
\newcommand{\powerset}{\ensuremath{\mathcal{P}}}
\newcommand{\supp}{\ensuremath{\textnormal{supp}}}
\newcommand{\colsupp}{\ensuremath{\textnormal{col-supp}}}
\newcommand{\supportplus}{\ensuremath{\mathbin{\oplus}}}
\DeclarePairedDelimiter{\ceil}{\lceil}{\rceil}
\DeclarePairedDelimiter{\floor}{\lfloor}{\rfloor}
\DeclarePairedDelimiter{\norm}{\lVert}{\rVert}
\DeclarePairedDelimiter{\parens}{\lparen}{\rparen}
\DeclarePairedDelimiter{\abs}{\lvert}{\rvert}
\let\oldparens\parens
\def\parens{\@ifstar{\oldparens}{\oldparens*}}
\let\oldnorm\norm
\def\norm{\@ifstar{\oldnorm}{\oldnorm*}}
\let\oldceil\ceil
\def\ceil{\@ifstar{\oldceil}{\oldceil*}}
\let\oldfloor\floor
\def\floor{\@ifstar{\oldfloor}{\oldfloor*}}
\let\oldabs\abs
\def\abs{\@ifstar{\oldabs}{\oldabs*}}
\let\OldStatex\Statex
\renewcommand{\Statex}[1][3]{%
\setlength\@tempdima{\algorithmicindent}%
\OldStatex\hskip\dimexpr#1\@tempdima\relax}
\let\oldr@@t\r@@t
\def\r@@t#1#2{%
\setbox0=\hbox{$\oldr@@t#1{#2\,}$}\dimen0=\ht0
\advance\dimen0-0.2\ht0
\setbox2=\hbox{\vrule height\ht0 depth -\dimen0}%
{\box0\lower0.4pt\box2}}
\LetLtxMacro{\oldsqrt}{\sqrt}
\renewcommand*{\sqrt}[2][\ ]{\oldsqrt[#1]{#2}}
\begin{document}

\title{Approximation Algorithms for \\ Model-Based Compressive Sensing}

\author{Chinmay Hegde, Piotr Indyk, Ludwig Schmidt \thanks{Authors
    listed in alphabetical order. The authors
    would like to thank Lei Hamilton, Chris Yu, Ligang Lu, and Detlef
    Hohl for helpful discussions. This work was supported in part by
    grants from the MITEI-Shell program, the MADALGO center, and the
    Packard Foundation. A conference version of this manuscript~\cite{approxSODA}
    appeared in the Proceedings of the ACM-SIAM Symposium on Discrete
    Algorithms (SODA), held in January 2014.}
\\[.2cm]CSAIL, MIT
}

\maketitle
\thispagestyle{empty}
\begin{abstract}

Compressive Sensing (CS) states that a sparse signal can be
recovered from a small number of linear measurements, and that
this recovery can be performed efficiently in polynomial time. The
framework of \emph{model-based compressive sensing} (model-CS) leverages
additional structure in the signal and provides new recovery schemes that can
reduce the number of measurements even further. This idea has led to
measurement-efficient recovery schemes for a variety of signal
models.  However, for any given model, model-CS
requires an algorithm that solves the
\emph{model-projection problem}: given a
query signal, report the signal in the model that is closest to the query signal.
Often, this optimization problem can be computationally very expensive.
Moreover, an \emph{approximation} algorithm is not sufficient to provably recover the signal. As a result, the model-projection problem poses a fundamental obstacle for extending model-CS to many interesting classes of models.  

In this paper, we introduce a new framework that we call
\emph{approximation-tolerant model-based compressive sensing}.
This framework includes a range of algorithms for sparse recovery that
require only \emph{approximate solutions} for the model-projection problem.
In essence, our work removes the aforementioned obstacle to model-based
compressive sensing, thereby extending model-CS to a much wider class of signal models.
Interestingly, all our algorithms involve both the minimization and maximization variants of the model-projection problem. 

We instantiate our new framework for a new signal model that we call
the Constrained Earth Mover Distance (CEMD) model. This model is
particularly useful for signal ensembles where the positions of the
nonzero coefficients do not change significantly as a function of
spatial (or temporal) location. We develop novel approximation algorithms
for both the maximization and the minimization versions of the
model-projection problem via graph optimization techniques. Leveraging
these algorithms and our framework results in a nearly sample-optimal
sparse recovery scheme for the CEMD model.

\end{abstract}

\section{Introduction}
\label{sec:intro}
Over the last decade, a new \emph{linear} approach for obtaining a succinct
representation of $n$-dimensional vectors (or signals) has emerged.
For any signal $x$, the representation is given by $Ax$, where $A$ is an $m \times n$ matrix, or possibly a random variable chosen from a suitable distribution over such matrices.
The vector $Ax$ is referred to as the {\em measurement vector} or \emph{linear sketch} of $x$.
Although $m$ is usually chosen to be much smaller than $n$, the measurement vector $Ax$ often contains plenty of useful information about the signal $x$.

A particularly useful and well-studied problem in this context is that of \emph{robust sparse recovery}.
A vector $x$ is $k$-sparse if it has at most $k$ non-zero coordinates.
The robust sparse recovery problem is typically defined as follows: 
given the measurement vector $y=Ax+e$, where $x$ is a $k$-sparse vector and  $e$ is the ``noise'' vector
%
, find a signal estimate $\xhat$ such that:
\begin{equation}
\label{e:lplq}
\norm{x-\xhat}_2 \; \le \; C \cdot \norm{e}_2 \; .
\end{equation}
Sparse recovery has a tremendous number of applications in areas such as compressive sensing of signals~\cite{cs_incomplete_frequency,cs_donoho}, genetic data analysis~\cite{poolmc}, and data stream algorithms~\cite{M05,cs_sparse_matrices_survey}.

It is known that there exist matrices $A$ and associated recovery algorithms that produce a signal estimate $\xhat$ satisfying Equation~\eqref{e:lplq} with a constant approximation factor $C$ and number of measurements $m=O(k \log (n/k))$.
It is also known that this bound on the number of measurements $m$ is asymptotically \emph{optimal} for some constant $C$; see~\cite{DIPW10} and~\cite{FPRU10} (building upon the classical results of~\cite{K77,GG84,G84}).  
The necessity of the ``extra'' logarithmic factor multiplying $k$ is rather unfortunate: the quantity $m$ determines the ``compression rate'', and for large $n$ any logarithmic factor in $m$ can worsen this rate tenfold.

On the other hand, more careful signal \emph{modeling} offers a way to overcome the aforementioned limitation.
Indeed, decades of research in signal processing have shown that not all signal supports (i.e., sets of non-zero coordinates) are equally common in practice.
For example, in the case of certain time-domain signals such as signals transmitted by push-to-talk radios, the dominant coefficients of the signal tend to cluster together in contiguous ``bursts''.
A formal approach to capture this additional \emph{structure} is to assume that the support of the vector $x$ belongs to a given family of supports $\Msupports$, a so-called ``model'' (we say that $x$ is  $\Msupports$-{sparse}).
Note that the original $k$-sparse recovery problem corresponds to the particular case when the model $\Msupports$ is the family of all $k$-subsets of $[n]$.

This modeling approach has several interesting ramifications, particularly in the context of robust sparse recovery.
Recently, Baraniuk et al.\ provided a general framework called \emph{model-based compressive sensing} \cite{modelcs}.
For any ``computationally tractable'' and ``small'' family of supports, the scheme proposed in their work guarantees robust signal recovery with a nearly-optimal number of measurements $m=O(k)$, i.e.,  \emph{without any logarithmic dependence on $n$.}
Several other works have achieved similar performance gains both in theory and in practice; see, for example,~\cite{EldarUSS,DE11,rao2012universal,BJMO12a,Wain14}.

While the model-based compressive sensing framework is general, it relies on two model-specific assumptions: 
\begin{enumerate}[label={(\arabic*)}]
\item\emph{Model-based Restricted Isometry Property (RIP)}: The matrix $A$ approximately preserves the $\ell_2$-norm of all $\Msupports$-sparse vectors. 
\item\emph{Model projection oracle}: There exists an efficient algorithm that solves the {\em model-projection problem}: given an arbitrary vector $x$, the algorithm finds the $\Msupports$-sparse vector $x'$ that is closest to $x$, i.e., minimizes the $\ell_2$-norm of the ``tail'' error $\|x-x'\|_2$.
\end{enumerate}
By constructing matrices satisfying (1) and algorithms satisfying (2), researchers have developed robust signal recovery schemes for a wide variety of signal models, including block-sparsity~\cite{modelcs}, tree-sparsity~\cite{modelcs}, clustered sparsity~\cite{ModelCSSAMPTA}, and separated spikes~\cite{spikes}, to name a few.

Unfortunately, extending the model-based compressive sensing framework to more general models faces a significant obstacle.
For the framework to apply, the model projection oracle has to be \emph{exact} (i.e., the oracle finds the signal in the model with exactly minimal tail error).
This fact may appear surprising, but in Section~\ref{sec:counterexample} we provide a \emph{negative result} and prove that existing model-based recovery approaches fail to achieve the robust sparse recovery criterion \eqref{e:lplq} if the model projection oracle is not exact.
Consequently, this burden of ``exactness'' excludes several useful design paradigms employed in {\em approximation algorithms}, i.e., algorithms which find a signal in the model that only approximately minimizes the tail error.
A rich and extensive literature on approximation algorithms has emerged over the last 15 years, encompassing a variety of techniques such as greedy optimization, linear programming (LP) rounding, semidefinite programming (SDP) rounding, and Lagrangian relaxation. To the best of our knowledge, existing approaches for the model projection problem have largely focused on exact optimization techniques (e.g. dynamic programming~\cite{modelcs,CT13,ModelCSSAMPTA}, solving LPs without an integrality gap~\cite{spikes}, etc.).

\subsection{Summary of Our Results}

In this paper, we introduce a new framework that we call \emph{approximation-tolerant model-based compressive sensing}.
This framework includes a range of algorithms for sparse recovery that require only \emph{approximate solutions} for the model-projection problem. In essence, our work removes the aforementioned obstacle to model-based compressive sensing and therefore extends the framework to a much wider class of models.
Simultaneously, our framework provides a principled approach to leverage the wealth of approximation algorithms for recovering structured sparse signals from linear measurements.

Instead of requiring one exact model projection oracle, our algorithms assume the existence of \emph{two} oracles with complementary approximation guarantees: (i) Given $x \in \R^n$, a \emph{tail approximation} oracle returns a support $\Omega_t$ in the model such that the norm of the tail $\norm{x-x_{\Omega_t}}_2$ is approximately minimized. (ii) A \emph{head approximation} oracle  returns a support $\Omega_h$ in the model such that the norm of the head $\norm{x_{\Omega_h}}_2$ is approximately maximized.
Formally, we have:
\begin{align}
\norm{x - x_{\Omega_t}}_2 \; &\leq \; c_T \cdot \min_{\Omega \in \Msupports} \norm{x - x_\Omega}_2 \qquad\textrm{and} \label{eq:tail} \\
\norm{x_{\Omega_h}}_2 \; &\geq \; c_H \cdot \min_{\Omega \in \Msupports} \norm{x_\Omega}_2 \label{eq:head}
\end{align}
for some positive constants $c_H \leq 1$ and $c_T \geq 1$. Given access to these approximation oracles, we prove the following main result.

\begin{theorem}[Signal recovery]
\label{thm:mainresult}
Consider a structured sparsity model $\Mmodel \subseteq \reals^n$ and norm parameter $p \in \{1,2\}$.
Suppose that $x \in \Mmodel$ and that we observe $m$ noisy linear measurements $y = Ax + e$. 
Suppose further that $A$ satisfies the model-RIP in terms of the $\ell_p$-norm, and that we are given access to head- and tail-approximation oracles $H(\cdot)$ and $T(\cdot)$ satisfying \eqref{eq:tail} and \eqref{eq:head}, respectively. Then there exists an efficient algorithm that outputs a signal estimate $\xhat$ such that $\norm{x - \xhat}_p \leq C \norm{e}_p $ for some constant $C > 0$. 
\end{theorem}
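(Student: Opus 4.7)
The plan is to design an iterative greedy algorithm in the style of CoSaMP/IHT that makes one call to the head oracle $H$ and one call to the tail oracle $T$ per iteration, and then argue that the signal error contracts geometrically from one iteration to the next. Specifically, starting from $\xhat^{(0)} = 0$, each iteration computes a proxy $b = A^\top(y - A\xhat^{(i)})$ for the residual $r^{(i)} = x - \xhat^{(i)}$, invokes the head oracle to obtain a support $\Gamma = H(b)$ that captures a large fraction of the residual's energy, merges $\Gamma$ with $\supp(\xhat^{(i)})$, forms an intermediate signal $z$ by least squares on the merged support, and finally prunes via $\Omega = T(z)$ to return $\xhat^{(i+1)} = z_\Omega$. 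Unrolling the contraction over $O(\log(\norm{x}_p/\norm{e}_p))$ iterations then yields the claimed bound $\norm{x - \xhat}_p \leq C \norm{e}_p$.

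The central quantitative step is the per-iteration contraction lemma. I would establish it in three pieces. First, a \emph{head lemma}: using the model-RIP on the union $\supp(r^{(i)}) \cup \supp(H(b))$ and the head-approximation guarantee \eqref{eq:head}, show that
\begin{equation*}
\norm{r^{(i)}_\Gamma}_p \;\geq\; c_H'\,\norm{r^{(i)}}_p - \gamma_H \norm{e}_p
\end{equation*}
for some $c_H' > 0$ depending on $c_H$ and the model-RIP constant. Second, a \emph{least-squares lemma}: the minimizer $z$ over the merged support satisfies $\norm{x - z}_p \leq \rho \norm{x - x_{\supp(z)}}_p + \gamma_{LS}\norm{e}_p$ with $\rho$ close to $1$, again by model-RIP. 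Third, a \emph{tail lemma}: using \eqref{eq:tail},
\begin{equation*}
\norm{x - \xhat^{(i+1)}}_p \;\leq\; (1 + c_T)\,\norm{x - z}_p,
\end{equation*}
since projecting $z$ onto a support whose tail error is at most $c_T$ times the optimal (namely $\supp(x)$) cannot increase the distance to $x$ by more than this factor. Chaining these three estimates gives $\norm{r^{(i+1)}}_p \leq \alpha \norm{r^{(i)}}_p + \beta \norm{e}_p$ with $\alpha < 1$ whenever the model-RIP constant is chosen sufficiently small (as a function of $c_H$ and $c_T$).

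Once per-iteration contraction is in place, the theorem follows by a standard geometric-series argument: after $i$ iterations, $\norm{r^{(i)}}_p \leq \alpha^i \norm{x}_p + \frac{\beta}{1-\alpha}\norm{e}_p$, so halting at $i = \ceil{\log_{1/\alpha}(\norm{x}_p/\norm{e}_p)}$ gives the recovery guarantee with $C = 1 + \frac{\beta}{1-\alpha}$. The case $p=2$ uses the standard dense Gaussian/subgaussian model-RIP, while the $p=1$ case uses a sparse-matrix model-RIP; the algorithmic template is identical but the proxy step and the inner least-squares are replaced by their $\ell_1$-analogues (median recovery for $p=1$).

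The main obstacle is the head lemma, because an approximate oracle only competes with the \emph{best model support on $b$}, not with $\supp(r^{(i)})$ directly. To transfer the guarantee from $b = A^\top A r^{(i)} + A^\top e$ back to $r^{(i)}$, I have to control $\norm{(I - A^\top A)r^{(i)}}$ restricted to model-structured supports, which requires the model-RIP to hold not just on $\Mmodel$ but on a sum-model such as $\Mmodel \oplus \Mmodel \oplus \Mmodel$ (covering $\supp(x) \cup \supp(\xhat^{(i)}) \cup \Gamma$). Setting up this enlarged model and ensuring that $A$ satisfies the appropriate model-RIP on it with a constant small enough to drive $\alpha < 1$ is the delicate part; the dependence of $\alpha$ on $(c_H, c_T, \delta_\Mmodel)$ needs to be tracked carefully, and is where a naive application of the standard CoSaMP/IHT analysis would fail.
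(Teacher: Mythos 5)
Your algorithmic template is essentially the paper's AM-CoSaMP (Algorithm 2): proxy, head support, merge, least squares, tail prune. Your three lemmas correspond closely to the paper's argument --- the ``head lemma'' is the paper's Lemma 12 (which derives exactly the lower bound $\norm{r^i_\Gamma}_2 \geq \alpha_0\norm{r^i}_2 - \beta_0\norm{e}_2$ with $\alpha_0 = c_H(1-\delta)-\delta$ and then converts it, via $\norm{r^i}_2^2 = \norm{r^i_\Gamma}_2^2 + \norm{r^i_{\Gamma^c}}_2^2$ and a tangent-line estimate, into the upper bound $\norm{r^i_{\Gamma^c}}_2 \lesssim \sqrt{1-\alpha_0^2}\,\norm{r^i}_2$ that the CoSaMP chain actually consumes), and your least-squares and tail steps match the proof of Theorem 13. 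You also correctly identify that the head oracle must be defined relative to the sum model $\Msupports_T \supportplus \Msupports$ and that the RIP must hold on $\Msupports \supportplus \Msupports_T \supportplus \Msupports_H$; this is exactly assumption (iv)--(v) in Sections 4 and 5.

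There is, however, one genuine gap. You assert that chaining the three estimates gives $\alpha < 1$ ``whenever the model-RIP constant is chosen sufficiently small (as a function of $c_H$ and $c_T$).'' This is false: even in the limit $\delta \to 0$ the contraction factor tends to $(1+c_T)\sqrt{1-c_H^2}$, which exceeds $1$ unless $c_H^2 > 1 - (1+c_T)^{-2}$. For example $c_T = 2$, $c_H = 1/2$ gives $3\sqrt{3}/2 > 1$ no matter how small $\delta$ is. Since the theorem is claimed for \emph{arbitrary} constants $c_H \leq 1$ and $c_T \geq 1$, your argument as written only proves it in the restricted regime where head and tail factors are compatible. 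The missing ingredient is a boosting device (the paper's Section 7): run the head oracle $t = O(1)$ times on successive residuals $x_{[n]\setminus\Omega_{i-1}}$ and take the union of the returned supports; this amplifies the head factor from $c_H$ to $(1-(1-c_H^2)^t)^{1/2}$, which can be driven above any threshold $\gamma < 1$ at the cost of enlarging the output model to $\Msupports_H^{\supportplus t}$ (and hence the RIP requirement) by a constant factor. With that addition your outline closes. A smaller remark on $p=1$: the least-squares step has no clean $\ell_1$/expander analogue, and the paper handles that case with an IHT-style update built on the median operator rather than a CoSaMP variant, so your claim that ``the algorithmic template is identical'' needs adjusting there.
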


We analyze the two cases $p=1$ and $p=2$ separately and develop two different types of recovery algorithms.
The case of $p=2$ is perhaps more well-studied in the literature and corresponds to the ``standard'' notion of the RIP.
In this case, our recovery algorithms are extensions of IHT, CoSaMP, and their model-based counterparts \cite{cosamp,iht,modelcs}.
The case of $p=1$ has received attention in recent years and is applicable to the situation where the measurement matrix $A$ is \emph{itself} sparse.
In this case, our recovery algorithms are extensions of those developed in~\cite{BGIK09,RF13,BBC14}. 
For both types of algorithms, the sequence of signal estimates $(x_k)$ produced by our algorithms exhibits \emph{geometric convergence} to the true signal $x$, i.e., the norm of the error $\norm{x_k-x}_p$ decreases by at least a constant factor in every iteration.
The rate of convergence depends on the approximation constants $c_T$ and $c_H$, as well as the RIP constants of the matrix $A$. 

As a case study, we instantiate both the $p=1$ and $p=2$ cases in the context of the {\em Constrained Earth Mover's Distance} (CEMD) model introduced in~\cite{HIS}.
In this model, the signal coefficients form an $h \times w$ grid and the support of each column has size at most $s$, for  $n=h \cdot w$ and $k=s \cdot w$.
For each pair of consecutive columns, say $c$ and $c'$, we define the Earth Mover's Distance (EMD) between them to be the minimum cost of matching the support sets of $c$ and $c'$ when viewed as point sets on a line.
A signal support is said to belong to the CEMD model with ``budget'' $B$ if the sum of all EMD distances between the consecutive columns is at most $B$.
See Section~\ref{s:emd} for a formal definition.
Our framework leads to the first \emph{nearly sample-optimal} recovery scheme for signals belonging to this model.
The result is obtained by designing a novel head-approximation algorithm and proving approximation guarantees for the tail-approximation algorithm that was first described in~\cite{HIS}.

\subsection{Paper Outline}

This paper includes the following contributions, organized by section.
Before our contributions, we briefly review some background in Section \ref{sec:prelims}.

\paragraph{A negative result for approximation oracles.}
In Section~\ref{sec:counterexample}, we begin with the following negative result:
combining an apprixmate model-projection oracle with the existing model-based compressive sensing approach of Baraniuk et al.\ \cite{modelcs} \emph{does not suffice} to guarantee robust signal recovery for even the most trivial model.
This serves as the motivation for a more sophisticated approach, which we develop throughout the rest of the paper.

\paragraph{Approximate model-iterative hard threholding (AM-IHT).} 
In Section~\ref{sec:amiht}, we propose a new extension of the iterative hard thresholding (IHT) algorithm~\cite{iht}, which we call \emph{approximate model iterative hard thresholding} (or AM-IHT).
Informally, given head- and tail-approximation oracles and measurements $y = Ax + e$ with a matrix $A$ satisfying the model-RIP, AM-IHT returns a signal estimate $\widehat{x}$ satisfying \eqref{e:lplq}.
We show that AM-IHT exhibits {geometric convergence}, and that the recovery guarantee for AM-IHT is asymptotically equivalent to the best available guarantees for model-based sparse recovery, despite using only approximate oracles.

\paragraph{Approximate model-CoSaMP (AM-CoSaMP).} 
In Section~\ref{sec:amcosamp}, we propose a new extension of the compressive sampling matching pursuit algorithm (CoSaMP)~\cite{cosamp}, which we call {\em approximate model CoSaMP} (or AM-CoSAMP).
As with AM-IHT, our proposed AM-CoSaMP algorithm requires a head-approximation oracle and a tail-approximation oracle.
We show that AM-CoSaMP also exhibits geometric convergence, and that the recovery guarantee for AM-CoSaMP, as well as the RIP condition on $A$ required for successful signal recovery, match the corresponding parameters for AM-IHT up to constant factors.

\paragraph{AM-IHT with sparse measurement matrices.}
In Section~\ref{sec:modelrip1}, we show that an approximation-tolerant approach similar to AM-IHT succeeds even when the measurement matrix $A$ is {\em itself} sparse.
Our approach leverages the notion of the restricted isometry property in the $\ell_1$-norm, also called the {\em RIP-1}, which was first introduced in~\cite{BGIK09} and developed further in the model-based context by~\cite{IR13,RF13,BBC14}.  
For sparse $A$, we propose a modification of AM-IHT, which we call \emph{AM-IHT with RIP-1}.
Our proposed algorithm also exhibits geometric convergence under the {\em model RIP-1} assumption on the measurement matrix $A$.

\paragraph{Compressive sensing with the CEMD Model.}
We design both head- and tail-approximation algorithms for the CEMD model:
(i) Our tail-approximation oracle returns a support set with tail-approximation error at most a constant times larger than the optimal tail error.
At the same time, the EMD-budget of the solution is still $O(B)$ (Theorem~\ref{thm:tailapprox}).
(ii) Our head-approximation oracle returns a support set with head value at least a constant fraction of the optimal head value.
Moreover, the EMD-budget of the solution is $O(B \log \frac{k}{w})$ (Theorem~\ref{thm:headapproxbasic}).
Combining these algorithms into our new framework, we obtain a compressive sensing scheme for the CEMD model using $O(k \log(\frac{B}{k} \log(\frac{k}{w})))$ measurements for robust signal recovery.
For a reasonable choice of parameters, e.g., $B = O(k)$, the bound specializes to $m = O(k \log \log(\frac{k}{w}))$, which is very close to the information-theoretic optimum of $m = O(k)$.

\subsection{Prior Work}
Prior to this paper, several efforts have been made to enable compressive sensing recovery for structured sparse signals with approximate projection oracles.
The paper~\cite{Blum11} discusses a Projected Landweber-type method that succeeds even when the projection oracle is approximate.
However, the author assumes that the projection oracle provides an {\em $\epsilon$-additive} tail approximation guarantee.
In other words, for any given $x \in \R^n$, the model-approximation oracle returns a $\widehat{x} \in \Mmodel$ satisfying:
\begin{equation}
\norm{ x - \widehat{x}}_2 = \min_{x' \in \Mmodel} \norm{x - x'}_2 + \eps
\label{eq:addapprox}
\end{equation}
for some parameter $\eps > 0$.
Under such conditions, there exists an algorithm that returns a signal within an
$O(\epsilon)$-neighborhood of the optimal solution.
However, approximation oracles that achieve low additive approximation guarantees satisfying \eqref{eq:addapprox} are rather rare.

On the other hand, the works~\cite{KC,clash} assume the existence of a head-approximation oracle similar to our definition \eqref{eq:head} and develop corresponding signal recovery algorithms. However, these approaches only provide signal recovery guarantees with an additive error term of $O(\|x_{\Omega}\|)$, where $\Omega$ is the set of the $k$ largest coefficients in $x$. Therefore, this result is not directly comparable to our desired recovery guarantee~\eqref{e:lplq}.

Some more recent works have introduced the use of approximate projection oracles, albeit for a different type of signal model.
There, the underlying assumption is that the signals of interest are sparse in a \emph{redundant dictionary}.
The paper~\cite{GE} presents a sparse recovery algorithm for redundant dictionaries that succeeds with multiplicative approximation guarantees.
However, their framework uses only the tail oracle and therefore is subject to the lower bound that we provide in Section~\ref{sec:counterexample}.
In particular, their guarantees make stringent assumptions on the maximum singular values of the sensing matrix $A$. 

The paper~\cite{davenport_redundant} introduces an algorithm called \emph{Signal Space CoSaMP} (SSCoSaMP), which also assumes the existence of multiplicative approximate oracles.
However, the assumptions made on the oracles are restrictive.
Interpreted in the model-based context, the oracles must capture a significant fraction of the optimal support in each iteration, which can be hard to achieve in practice.
The more recent paper~\cite{GN13} proposes a version of SSCoSaMP which succeeds with oracles satisfying both multiplicative head- and tail-approximation guarantees.
Indeed, our AM-CoSaMP algorithm and associated proofs are closely related to this work.
However, AM-CoSaMP requires technically weaker conditions to succeed and our proof techniques are somewhat more concise.
See Section~\ref{sec:amcosamp} for a more detailed discussion on this topic. 

In a parallel line of research, there have been several proposals for compressive sensing methods using {\em sparse} measurement matrices~\cite{BGIK09,cs_sparse_matrices_survey}.
Recent efforts have extended this line of work into the model-based setting.
The paper~\cite{IR13} establishes both lower and upper bounds on the number of measurements required to satisfy the model RIP-1 for certain structured sparsity models.
Assuming that the measurement matrix $A$ satisfies the model RIP-1,
the paper~\cite{BBC14} proposes a modification of \emph{expander iterative hard thresholding} (EIHT)~\cite{RF13}, which achieves stable recovery for arbitrary structured sparsity models.
As with the other algorithms for model-based compressive sensing, EIHT only works with \emph{exact} model projection oracles.
In Section~\ref{sec:modelrip1}, we propose a more general algorithm suitable for model-based recovery using only approximate projection oracles.

We instantiate our algorithmic results in the context of the Constrained Earth Mover's Distance (CEMD) model, developed in~\cite{HIS}. 
The model was originally motivated by the task of reconstructing time
sequences of spatially sparse signals. There has been a substantial
amount of work devoted to such signals, e.g., \cite{modifiedcs,duartedcs}.
We refer the reader to~\cite{HIS} for a more detailed discussion about the model and its applications.
The paper introduced a tail oracle for the problem and empirically
evaluated the performance of the recovery scheme.
Although the use of the oracle was heuristic, the experiments demonstrate a substantial reduction in the number of measurements needed to recover slowly
varying signals.
In this paper, we provide a rigorous analysis of the tail-approximation oracle originally proposed in~\cite{HIS}, as well as a novel head-approximation algorithm.
Combining these two sub-routines yields a model-based compressive sensing scheme for the CEMD model using a nearly optimal number of measurements.

\subsection{Subsequent Work}

Since the appearance of the conference version of this manuscript~\cite{approxSODA}, a number of works have explored some of its implications. The works~\cite{treesISIT,modelcsICALP} develop approximation algorithms for the \emph{tree-sparsity} model~\cite{modelcs}. These algorithms, coupled with our framework, immediately imply sample-optimal recovery schemes for tree-sparse signals that run in \emph{nearly linear-time}.  Additionally, our approximation oracles for the CEMD model can be of independent interest in signal processing applications. For instance,~\cite{faultICASSP} uses the tail-approximation procedure developed in Section~\ref{sec:tail} for detecting \emph{faults} in subsurface seismic images. Investigations into further extensions are currently underway.

\section{Preliminaries}
\label{sec:prelims}
We write $[n]$ to denote the set $\{1, 2, \ldots, n\}$ and $\powerset(A)$ to denote the power set of a set $A$.
For a vector $x \in \R^n$ and a set $\Omega \subseteq [n]$, we write $x_\Omega$ for the restriction of $x$ to $\Omega$, i.e., $(x_\Omega)_i = x_i$ for $i \in \Omega$ and $(x_\Omega)_i = 0$ otherwise.
Similarly, we write $X_\Omega$ for the submatrix of a matrix $X \in \R^{m \times n}$ containing the columns corresponding to $\Omega$, i.e., a matrix in $\R^{m \times \abs{\Omega}}$.
Sometimes, we also restrict a matrix element-wise: for a set $\Omega \subseteq [m] \times [n]$, the matrix $X_\Omega$ is identical to $X$ but the entries not contained in $\Omega$ are set to zero.
The distinction between these two conventions will be clear from context.

A vector $x \in \R^n$ is said to be $k$-sparse if at most $k \leq n$ coordinates are nonzero.
The support of $x$, $\supp(x) \subseteq [n]$, is the set of indices with nonzero entries in $x$.
Hence $x_{\supp(x)} = x$.
Observe that the set of \emph{all} $k$-sparse signals is geometrically equivalent to the union of the $\binom{n}{k}$ canonical $k$-dimensional subspaces of $\R^n$.
For a matrix $X \in \R^{h \times w}$, the support $\supp(X) \subseteq [h] \times [w]$ is also the set of indices corresponding to nonzero entries.
For a matrix support set $\Omega$, we denote the support of a column $c$ in $\Omega$ with $\colsupp(\Omega, c) = \{ r \, | \, (r,c) \in \Omega \}$.

Often, some prior information about the support of a sparse signal $x$ is available.
A flexible way to model such prior information is to consider only the $k$-sparse signals with a permitted configuration of $\supp(x)$.
This restriction motivates the notion of a {\em structured sparsity model}, which is geometrically equivalent to a subset of the $\binom{n}{k}$ canonical $k$-dimensional subspaces of $\R^n$.
\begin{definition}[Structured sparsity model. From Definition 2 in~\cite{modelcs}]
A structured sparsity model $\Mmodel \subseteq \R^n$ is the set of vectors
$\Mmodel = \{ x \in \R^n \, | \, \supp(x) \subseteq S \text{ for some } S \in \Msupports \}$,
where $\Msupports = \{\Omega_1, \ldots, \Omega_l\}$ is the set of allowed structured supports with $\Omega_i \subseteq [n]$.
We call $l = \abs{\Msupports}$ the size of the model $\Mmodel$.
\end{definition}

Note that the $\Omega_i$ in the definition above can have different cardinalities, but the largest cardinality will dictate the sample complexity in our bounds.
Often it is convenient to work with the closure of $\Msupports$ under taking subsets, which we denote with $\Msupportsclosure = \{ \Omega \subseteq [n] \, | \, \Omega \subseteq S \text{ for some } S \in \Msupports \}$.
Then we can write the set of signals in the model as $\Mmodel = \{ x \in \R^n \, | \, \supp(x) \in \Msupportsclosure \}$.

In the analysis of our algorithms, we also use the notion of \emph{model addition}: given two structured sparsity models $\Amodel$ and $\Bmodel$, we define the sum $\Cmodel = \Amodel \supportplus \Bmodel$ as $\Cmodel = \{a + b \, | \, a \in \Amodel \text{ and } b \in \Bmodel \}$ (i.e., the Minkowski sum).
Similarly, we define the corresponding set of allowed supports as $\Csupports = \Asupports \supportplus \Bsupports = \{ \Omega \cup \Gamma \, | \, \Omega \in \Asupports \text{ and } \Gamma \in \Bsupports \}$.
We also use $\Csupports^{\supportplus t}$ as a shorthand for $t$-times addition, i.e., $\Csupports \supportplus \Csupports \supportplus \ldots \supportplus \Csupports$.

The framework of {model-based compressive sensing}~\cite{modelcs} leverages the above notion of a structured sparsity model to design robust sparse recovery schemes. 
Specifically, the framework states that it is possible to recover a structured sparse signal $x \in \Mmodel$ from linear measurements $y = Ax + e$, provided that two conditions are satisfied:
(i) the matrix $A$ satisfies a variant of the restricted isometry property known as the {\em model-RIP}, and
(ii) there exists an oracle that can efficiently {\em project} an arbitrary signal in $\R^n$ onto the model $\Mmodel$. We formalize these conditions as follows.

\begin{definition}[Model-RIP. From Definition 3 in~\cite{modelcs}]
\label{def:modelrip}
The matrix $A\in\R^{m\times n}$ has the $(\delta, \Msupports)$-model-RIP if the following inequalities hold for all $x$ with $\supp(x) \in \Msupports^+$:
\begin{equation}
\label{eq:modelrip}
(1 - \delta) \norm{x}^2_2 \, \leq \, \norm{A x}^2_2 \leq (1 + \delta) \norm{x}^2_2 \, .
\end{equation}
\end{definition}

The following properties are direct consequences of the model-RIP and will prove useful in our proofs in Sections \ref{sec:amiht} and \ref{sec:amcosamp}.
\begin{fact}[adapted from Section 3 in~\cite{cosamp}]
\label{fact:modelrip}
Let $A\in\R^{m \times n}$ be a matrix satisfying the $(\delta, \Msupports)$-model-RIP.
Moreover, let $\Omega$ be a support in the model, i.e., $\Omega \in \Msupports^+$.
Then the following properties hold for all $x \in \R^n$ and $y \in \R^m$:
\begin{align*}
\norm{A^T_\Omega y}_2 & \leq \sqrt{1+\delta}\norm{y}_2 \, ,\\
\norm{A^T_\Omega A_\Omega x}_2 & \leq (1 + \delta) \norm{x}_2 \, ,\\
\norm{\parens{I - A^T_\Omega A_\Omega}x}_2 & \leq \delta \norm{x}_2 \,.
\end{align*}
\end{fact}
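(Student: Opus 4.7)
The plan is to reduce all three inequalities to a single structural fact: for any support $\Omega \in \Msupports^+$, the submatrix $A_\Omega \in \R^{m \times \abs{\Omega}}$ has all its singular values in the interval $[\sqrt{1-\delta}, \sqrt{1+\delta}]$. This follows immediately from the model-RIP, because every $z \in \R^{\abs{\Omega}}$ extends to a vector $x \in \R^n$ with $\supp(x) \subseteq \Omega \in \Msupports^+$ by zero-filling, and for this $x$ we have $A_\Omega z = Ax$ and $\norm{z}_2 = \norm{x}_2$, so Definition~\ref{def:modelrip} gives $(1-\delta)\norm{z}_2^2 \leq \norm{A_\Omega z}_2^2 \leq (1+\delta)\norm{z}_2^2$. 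Hence $\norm{A_\Omega}_{\text{op}} \leq \sqrt{1+\delta}$ and every eigenvalue of the symmetric matrix $A_\Omega^T A_\Omega$ lies in $[1-\delta, 1+\delta]$.

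Given this, the first inequality follows by transposition: $\norm{A_\Omega^T y}_2 \leq \norm{A_\Omega^T}_{\text{op}} \norm{y}_2 = \norm{A_\Omega}_{\text{op}} \norm{y}_2 \leq \sqrt{1+\delta}\,\norm{y}_2$ for arbitrary $y \in \R^m$ (no RIP-type assumption on $y$ is needed, since we only use the operator norm bound). The key small point is that although the statement is written with $x \in \R^n$ and $y \in \R^m$ without a support restriction, the operator $A_\Omega^T$ acts only through the columns in $\Omega$, so the bound depends solely on $A_\Omega$.

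For the second inequality, I would observe that $A_\Omega^T A_\Omega$ is a self-adjoint operator on $\R^{\abs{\Omega}}$ with spectrum contained in $[1-\delta, 1+\delta]$, so $\norm{A_\Omega^T A_\Omega}_{\text{op}} \leq 1+\delta$ and therefore $\norm{A_\Omega^T A_\Omega x}_2 \leq (1+\delta)\norm{x}_2$ for any $x \in \R^n$ (again, the product depends only on $x_\Omega$, but the bound works for arbitrary $x$ since $\norm{x_\Omega}_2 \leq \norm{x}_2$).

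The third inequality follows by the same spectral reasoning: the eigenvalues of $I_{\abs{\Omega}} - A_\Omega^T A_\Omega$ are $1 - \lambda_i$ with $\lambda_i \in [1-\delta, 1+\delta]$, so they lie in $[-\delta, \delta]$, giving $\norm{I - A_\Omega^T A_\Omega}_{\text{op}} \leq \delta$. I expect no real obstacle here; the only care needed is to be precise about the ambient dimension in the identity $I$ on each side and to note that the bounds extend from $x_\Omega$ to arbitrary $x \in \R^n$ because $A_\Omega^T A_\Omega x$ depends only on the $\Omega$-columns of $A$ and the relevant coordinates of $x$ — or equivalently, because $I - A_\Omega^T A_\Omega$, viewed as an operator on $\R^n$ in the obvious way, still has operator norm at most $\max(1, \delta) = 1$ only when $\delta < 1$, so one must interpret the third inequality as holding after restricting to coordinates in $\Omega$, which is the natural convention used elsewhere in the paper.
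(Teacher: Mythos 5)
Your proposal is correct and is essentially the standard argument: the paper states Fact~\ref{fact:modelrip} without proof, citing Section 3 of the CoSaMP paper, and the argument there is exactly the reduction you give --- the model-RIP forces the singular values of $A_\Omega$ into $[\sqrt{1-\delta},\sqrt{1+\delta}]$ (using that $\Msupports^+$ is closed under subsets, so zero-filled extensions of $z\in\R^{\abs{\Omega}}$ are admissible), and all three bounds follow from operator-norm/spectral estimates on $A_\Omega$ and $A_\Omega^T A_\Omega$. Your caveat about the third inequality --- that it holds with $I - A_\Omega^T A_\Omega$ read as an operator on the coordinates in $\Omega$ (equivalently, for $x$ supported on $\Omega$), and not for arbitrary $x\in\R^n$ under the zero-padded interpretation --- is accurate and consistent with how the paper actually applies the fact, e.g.\ in Lemma~\ref{lem:resid}, where the relevant vector $r^i$ is supported inside $\Gamma\cup\Omega$.
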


\begin{definition}[Model-projection oracle. From Section 3.2 in~\cite{modelcs}]
\label{def:projoracle}
A model-projection oracle is a function $M:\R^n \rightarrow \powerset([n])$ such that the following two properties hold for all $x \in \R^n$.
\begin{description}[font=\normalfont\itshape,nosep]
  \item[Output model sparsity:] $M(x) \in \Msupportsclosure$.
  \item[Optimal model projection:] Let $\Omega' = M(x)$. Then $\norm{x - x_{\Omega'}}_2 = \min_{\Omega \in \Msupports} \norm{x - x_\Omega}_2$.
\end{description}
\end{definition}

Sometimes, we use a model-projection oracle $M$ as a function from $\R^n$ to $\R^n$.
This can be seen as a simple extension of Definition \ref{def:projoracle} where $M(x) = x_\Omega$, $\Omega = M'(x)$, and $M'$ satisfies Definition \ref{def:projoracle}.

Under these conditions, the authors of~\cite{modelcs} show that compressive sampling matching pursuit (CoSaMP~\cite{cosamp}) and iterative hard thresholding
(IHT~\cite{iht}) --- two popular algorithms for sparse recovery --- can be
modified to achieve robust sparse recovery for the model $\Mmodel$.
In particular, the modified version of IHT (called \emph{Model-IHT}~\cite{modelcs}) executes the following iterations until convergence:
\begin{equation}
\label{eq:iht}
x^{i+1} \gets M(x^i + A^T (y - A x^i)) \, ,
\end{equation}
where $x^1=0$ is the initial signal estimate.
From a sampling complexity perspective, the benefit of this approach stems from the model-RIP assumption.
Indeed, the following result indicates that with high probability, a large class of measurement matrices $A$ satisfies the model-RIP with a \emph{nearly optimal} number of rows:
\begin{fact}[\cite{samplingunion,modelcs}]
\label{fact:modelripbound}
Let $\Msupports$ be a structured sparsity model and let $k$ be the size of the largest support in the model, i.e., $k = \max_{\Omega \in \Msupports} \abs{\Omega}$.
Let $A\in\R^{m \times n}$ be a matrix with i.i.d.\ sub-Gaussian entries.
Then there is a constant $c$ such that for $0 < \delta < 1$, any $t > 0$, and
\[
  m \; \geq \; \frac{c}{\delta^2} \,\parens{k \log \frac{1}{\delta} + \log \abs*{\Msupports} + t} \; ,
\]
$A$ has the $(\delta, \Msupports)$-model-RIP with probability at least $1 - e^{-t}$.
\end{fact}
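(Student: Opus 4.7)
The plan is to reduce the model-RIP statement to a standard subspace-RIP argument applied to each fixed support, and then union-bound over the (relatively few) supports in $\Msupportsclosure$. Recall that every $x$ with $\supp(x) \in \Msupportsclosure$ lies in one of at most $\abs{\Msupports}$ canonical subspaces of dimension $k$. So it suffices to show that, for each fixed $\Omega \in \Msupportsclosure$ of size at most $k$, the submatrix $A_\Omega \in \R^{m \times k}$ is a near-isometry on $\R^k$ with failure probability small enough to absorb a union bound over $\abs{\Msupports}$ supports and the slack $e^{-t}$.

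For a fixed $\Omega$, I would run the classical covering argument. By sub-Gaussian concentration (Hanson--Wright or a Johnson--Lindenstrauss-style moment generating function bound), for any fixed unit vector $u \in \R^k$,
\[
\Pr\bigl[\,\bigl|\norm{A_\Omega u}_2^2 - \norm{u}_2^2\bigr| > \delta/2\,\bigr] \;\leq\; 2 \exp(-c_1 \delta^2 m)
\]
for an absolute constant $c_1$ depending only on the sub-Gaussian norm of the entries. Next, take a $(\delta/8)$-net $\mathcal{N}$ of the unit sphere of $\R^k$; a standard volumetric bound gives $\abs{\mathcal{N}} \leq (24/\delta)^k$. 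A union bound over $\mathcal{N}$ shows that with probability at least $1 - 2(24/\delta)^k e^{-c_1 \delta^2 m}$, the $\delta/2$-deviation bound holds simultaneously on $\mathcal{N}$. A routine approximation argument then upgrades this to a $\delta$-deviation bound on the whole unit sphere of $\R^k$, which is exactly the $(\delta,\{\Omega\})$-RIP statement.

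Finally, I would union-bound this over all $\Omega \in \Msupports$ (every set in $\Msupportsclosure$ is a subset of some $S \in \Msupports$, so the RIP for the enclosing supports implies it for subsets). The total failure probability is at most
\[
2 \,\abs{\Msupports}\, (24/\delta)^k \exp(-c_1 \delta^2 m) \;=\; 2 \exp\!\bigl(\log \abs{\Msupports} + k \log(24/\delta) - c_1 \delta^2 m\bigr).
\]
To make this at most $e^{-t}$, it suffices to choose
\[
m \;\geq\; \frac{c}{\delta^2}\bigl(k \log(1/\delta) + \log \abs{\Msupports} + t\bigr)
\]
for a sufficiently large absolute constant $c$, absorbing the $\log 24$ and $\log 2$ constants. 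This is exactly the claimed bound.

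The only delicate point is that the per-subspace analysis must simultaneously (i) handle $k$-dimensional subspaces via a net whose cardinality is controlled by $(1/\delta)^k$ rather than something larger, so that the $k \log(1/\delta)$ term appears with the right constant, and (ii) yield a sub-Gaussian tail $\exp(-c_1 \delta^2 m)$ rather than a weaker Chebyshev-type bound, so that the exponent scales as $\delta^2 m$. Both are standard for sub-Gaussian entries, but everything downstream depends on getting the interaction of these two exponentials right; the rest is bookkeeping and a union bound.
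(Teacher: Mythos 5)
The paper states this result as a \emph{Fact} imported from the cited references and gives no proof of its own, so there is nothing internal to compare against; your covering-number-plus-union-bound argument is precisely the standard proof found in those references (a per-subspace net argument with sub-Gaussian concentration $e^{-c_1\delta^2 m}$, then a union bound over the $\abs{\Msupports}$ subspaces, noting that supports in $\Msupportsclosure$ are handled by their enclosing maximal supports). Your accounting of where the $k\log(1/\delta)$, $\log\abs{\Msupports}$, and $t$ terms come from is correct, so the proposal is sound.
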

Since $\delta$ and $t$ are typically constants, this bound can often be summarized as
\[
  m \; = \; O(k + \log \abs*{\Msupports}) \; .
\]
If the number of permissible supports (or equivalently, subspaces) $\abs*{\Msupports}$ is asymptotically smaller than $\binom{n}{k}$, then $m$ can be smaller than the $O(k \log \frac{n}{k})$ measurement bound from ``standard'' compressive sensing.
In the ideal case, we have $m = \poly(n) \cdot 2^{O(k)}$, which implies a measurement bound of $m = O(k)$ under the very mild assumption that $k = \Omega(\log n)$.
Since $m = k$ measurements are necessary to reconstruct any $k$-sparse signal, this asymptotic behavior of $m$ is information-theoretically optimal up to constant factors.

While model-based recovery approaches improve upon ``standard'' sparsity-based approaches in terms of sample-complexity, the computational cost of signal recovery crucially depends on the model-projection oracle $M$.
Observe that Model-IHT (Equation \ref{eq:iht}) involves one invocation of 
the model-projection oracle $M$ per iteration, and hence its overall running time scales with that of $M$.
Therefore, model-based recovery approaches are relevant \emph{only} in situations where efficient algorithms for finding the optimal model-projection are available. 

\section{A Negative Result}\label{sec:counterexample}
For many structured sparsity models, computing an optimal model-projection can be a challenging task.
One way to mitigate this computational burden is to use \emph{approximate} model-projection oracles, i.e., oracles that solve the model-projection problem only approximately.
However, in this section we show that such oracles cannot be integrated into Model-IHT (Equation \ref{eq:iht}) in a straightforward manner.

Consider the standard compressive sensing setting, where the ``model'' consists of the set of all $k$-sparse signals.
Of course, finding the optimal model projection in this case is simple: for any signal $x$, the oracle $T_k(\cdot)$ returns the $k$ largest coefficients of $x$ in terms of absolute value.
But for illustrative purposes, let us consider a slightly different
oracle that is approximate in the following sense.
Let $c$ be an arbitrary constant and let $T'_k$ be a projection oracle such that for any $a \in \R^n$ we have:
\begin{equation}
  \label{eq:adversarialguarantee}
  \norm{a - T'_k(a)}_2 \leq c \norm{a - T_k(a)}_2 \, .
\end{equation}
We show that we can construct an ``adversarial'' approximation oracle $T'_k$ that always returns $T'_k(a) = 0$ but still satisfies \eqref{eq:adversarialguarantee} for all signals $a$ encountered during the execution of Model-IHT.
In particular, we use this oracle in Model-IHT and start with the initial signal estimate $x^0 = 0$. 
We will show that such an adversarial oracle still satisfies \eqref{eq:adversarialguarantee} for the first iteration of Model-IHT.
As a result, Model-IHT with this adversarial oracle remains stuck at the zero signal estimate and cannot recover the true signal.

Recall that Model-IHT with projection oracle $T'_k$ iterates
\begin{equation}
\label{eq:bad_miht}
  x^{i+1} \leftarrow T'_k(x^i + A^T (y - A x^i)) \, ,
\end{equation}
which in the first iteration gives
\begin{equation*}
  x^1 \leftarrow T'_k(A^T y) \, .
\end{equation*}
Consider the simplest case where the signal $x$ is $1$-sparse with $x_1 = 1$ and $x_i = 0$ for $i \neq 1$, i.e., $x = e_1$.
Given a measurement matrix $A$ with $(\delta, O(1))$-RIP for small $\delta$, Model-IHT needs to perfectly recover $x$ from $A x$.
It is known that random matrices $A\in\R^{m\times n}$ with $A_{i,j} =
\pm 1/\sqrt{m}$ chosen i.i.d.\ uniformly at random satisfy this RIP for $m =
O(\log n)$ with high probability \cite{jlcs}.\footnote{These are the so-called \emph{Rademacher} matrices.}
We prove that our ``adversarial''  oracle $T'_k(a) = 0$ satisfies the approximation guarantee \eqref{eq:adversarialguarantee} for its input $a = A^T y = A^T A e_1$ with high probability.
Hence, $x^1 = x^0 = 0$ and Model-IHT cannot make
progress.
Intuitively, the tail $a - T_k(a)$ contains so much ``noise'' that
the adversarial approximation oracle $T'_k$ does not need to find a good
sparse support for $a$ and can simply return a signal estimate of $0$.

Consider the components of the vector $a = A^T A e_1$: $a_i$ is the inner product of the first column of $A$ with the $i$-th column of $A$.
Clearly, we have $a_1 = 1$ and $-1 \leq a_i \leq 1$ for $i \neq 1$.
Therefore, $T_k(a) = e_1$ is an optimal projection and $\norm{a - T_k(a)}_2^2 = \norm{a}_2^2 - 1$.
In order to show that the adversarial oracle $T'_k(a)$ satisfies the guarantee \eqref{eq:adversarialguarantee} with constant $c$, we need to prove that:
\begin{equation*}
  \norm{a}_2^2 \leq c^2 (\norm{a}_2^2 - 1) \; .
\end{equation*}
Therefore, it suffices to show that $\norm{a}_2^2 \geq \frac{c^2}{c^2 -1}$. 
Observe that $\norm{a}_2^2 = 1 + \sum_{i=2}^n a_i^2$, where the $a_i$ are independent.
For $i \neq 1$, each $a_i$ is the sum of $m$ independent $\pm\frac{1}{m}$ random variables (with $p = 1/2$) and so $\mathbb{E}[a_i^2] = \frac{1}{m}$.
We can use Hoeffding's inequality to show that $\sum_{i=2}^n a_i^2$ does not deviate from its mean $\frac{n-1}{m}$ by more than $O(\sqrt{n \log n})$ with high probability.
Since $m = O(\log n)$, this shows that for any constant $c > 1$, we will have
\begin{equation*}
\norm{a}_2^2 = 1 + \sum_{i=2}^n a_i^2 \geq \frac{c^2}{c^2 - 1}
\end{equation*}
with high probability for sufficiently large $n$.

Therefore, we have shown that \eqref{eq:bad_miht} does \emph{not} result in a model-based signal recovery algorithm with provable convergence to the correct result $x$.
In the rest of this paper, we develop several alternative approaches that do achieve convergence to the correct result while using approximate projection-oracles.

\section{Approximate Model-IHT}
\label{sec:amiht}

We now introduce our \emph{approximation-tolerant model-based
  compressive sensing} framework. Essentially, we extend the model-based compressive sensing framework to
work with approximate projection oracles, which we formalize in the definitions below.
This extension enables model-based compressive sensing in cases where optimal model projections are beyond our reach, but approximate projections are still efficiently computable. 

The core idea of our framework is to utilize \emph{two} different notions of
approximate projection oracles, defined as follows.

\begin{definition}[Head approximation oracle]
\label{def:headapproxoracle}
  Let $\Msupports, \Msupports_H \subseteq \powerset([n])$, $p \geq 1$, and $c_H \in \R$.
  Then $H : \R^n \rightarrow \powerset([n])$ is a $(c_H, \Msupports, \Msupports_H, p)$-head-approximation oracle if the following two properties hold for all $x \in \R^n$:
  \begin{description}[font=\normalfont\itshape,nosep]
    \item[Output model sparsity:] $H(x) \in \Msupportsclosure_H$.
    \item[Head approximation:] Let $\Omega' = H(x)$.
        Then $\norm{x_{\Omega'}}_p \geq c_H \norm{x_\Omega}_p$ for all $\Omega \in \Msupports$.
  \end{description}
\end{definition}
\begin{definition}[Tail approximation oracle]\label{def:tailapproxoracle}
  Let $\Msupports, \Msupports_T \subseteq \powerset([n])$, $p \geq 1$ and $c_T \in \R$.
  Then $T : \R^n \rightarrow \powerset([n])$ is a $(c_T, \Msupports, \Msupports_T, p)$-tail-approximation oracle if the following two properties hold for all $x \in \R^n$:
  \begin{description}[font=\normalfont\itshape,nosep]
    \item[Output model sparsity:] $T(x) \in \Msupportsclosure_T$.
    \item[Tail approximation:] Let $\Omega' = T(x)$.
        Then $\norm{x - x_{\Omega'}}_p \leq c_T \norm{x - x_\Omega}_p$ for all $\Omega \in \Msupports$.
  \end{description}
\end{definition}

We trivially observe that a head approximation oracle with approximation factor $c_H = 1$ is equivalent to a tail approximation oracle with factor $c_T = 1$, and vice versa.
Further, we observe that for any model $\Mmodel$, if $x \in \Mmodel$ then $\norm{x - x_\Omega}_2 = 0$ for some $\Omega \in \Msupports$. Hence, \emph{any} tail approximation oracle must be exact in the sense that the returned support $\Omega'$ has to satisfy $\norm{x - x_{\Omega'}}_2 = 0$, or equivalently, $\supp(x) \subseteq T(x)$.
On the other hand, we note that $H(x)$ does not need to return an optimal support if the input signal $x$ is in the model $\Mmodel$.

An important feature of the above definitions of approximation oracles is that they permit projections into \emph{larger} models.
In other words, the oracle can potentially return a signal that
belongs to a larger model $\Msupports' \supseteq \Msupports$.
For example, a tail-approximation oracle for the CEMD model with parameters $(k,B)$ is allowed to return a signal with parameters $(2k,2B)$, thereby relaxing both the sparsity constraint and the EMD-budget. We exploit this feature in our algorithms in Section \ref{s:emd}.

Equipped with these notions of approximate projection oracles, we introduce a new algorithm for model-based compressive sensing.
We call our algorithm \emph{Approximate Model-IHT} (AM-IHT); see Algorithm~\ref{alg:approxmodeliht} for a full description.
Notice that every iteration of AM-IHT uses \emph{both} a head-approximation oracle $H$ and a tail-approximation oracle $T$.
This is in contrast to the Model-IHT algorithm discussed above in Section~\ref{sec:counterexample}, which solely made use of a tail approximation oracle $T'$. 

\begin{algorithm}[!t]
\caption{Approximate Model-IHT}
\label{alg:approxmodeliht}
\begin{algorithmic}[1]
\Function{AM-IHT}{$y, A, t$}
\State $x^0 \gets 0$
\For{$i \gets 0, \ldots, t$}
\State $b^i \gets A^T (y - A x^i)$
\State $x^{i+1} \gets T(x^i + H(b^i))$
\EndFor
\State \textbf{return} $x^{t+1}$
\EndFunction
\end{algorithmic}
\end{algorithm}

Our main result of this section (Theorem~\ref{thm:amiht}) states the following:
if the measurement matrix $A$ satisfies the model-RIP for $\Msupports \supportplus \Msupports_T \supportplus \Msupports_H$ and approximate projection oracles $H$ and $T$ are available, then AM-IHT exhibits provably robust recovery.
We make the following assumptions in the analysis of AM-IHT:
(i) $x \in \R^n$ and $x \in \Mmodel$.
(ii) $y = A x + e$ for an arbitrary $e \in \R^m$ (the measurement noise).
(iii) $T$ is a $(c_T, \Msupports, \Msupports_T, 2)$-tail-approximation oracle.
(iv) $H$ is a $(c_H, \Msupports_T \supportplus \Msupports, \Msupports_H, 2)$-head-approximation-oracle.
(v) $A$ has the $(\delta, \Msupports \supportplus \Msupports_T \supportplus \Msupports_H)$-model-RIP.

As in IHT, we use the \emph{residual proxy} $b^i = A^T(y - A x^i)$ as the update in each iteration (see Algorithm \ref{alg:approxmodeliht}). 
The key idea of our proof is the following: when applied to the residual proxy $b^i$,  the head-approximation oracle $H$ returns a support $\Gamma$ that contains ``most'' of the relevant mass contained in $r^i$.
Before we formalize this statement in Lemma \ref{lem:resid}, we first establish the RIP of $A$ on all relevant vectors.
\begin{lemma}\label{lemma:headiht_rip}
Let $r^i = x - x^i$, $\Omega = \supp(r^i)$, and $\Gamma = \supp(H(b^i))$.
For all $x'\in\R^n$ with $\supp(x') \subseteq \Omega \cup \Gamma$ we have
\begin{equation*}
  (1 - \delta)\norm{x'}_2^2 \leq \norm{A x'}_2^2 \leq (1 + \delta)\norm{x'}_2^2 \; .
\end{equation*}
\end{lemma}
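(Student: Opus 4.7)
The plan is to reduce the claim to a direct application of the model-RIP hypothesis on $A$. Specifically, I will show that the support of any $x'$ with $\supp(x') \subseteq \Omega \cup \Gamma$ lies in the closure $(\Msupports \supportplus \Msupports_T \supportplus \Msupports_H)^+$, after which Definition~\ref{def:modelrip} immediately yields the desired inequalities. All the work is in bookkeeping which allowed model each support belongs to.

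First I would pin down where $\Omega = \supp(r^i) = \supp(x - x^i)$ lives. Because $x \in \Mmodel$ we have $\supp(x) \in \Msupportsclosure$. For $i \geq 1$, the iterate $x^i$ is produced by the tail-approximation oracle $T$, so by its output model sparsity property $\supp(x^i) \in \Msupportsclosure_T$; for $i = 0$, $x^0 = 0$ has empty support, which is trivially in $\Msupportsclosure_T$. Since $\supp(r^i) \subseteq \supp(x) \cup \supp(x^i)$, and since $S_1 \subseteq U_1 \in \Msupports$ and $S_2 \subseteq U_2 \in \Msupports_T$ imply $S_1 \cup S_2 \subseteq U_1 \cup U_2 \in \Msupports \supportplus \Msupports_T$, it follows that $\Omega \in (\Msupports \supportplus \Msupports_T)^+$.

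Next, by the output model sparsity property of the head-approximation oracle $H$, we have $\Gamma = \supp(H(b^i)) \in \Msupportsclosure_H$. Applying the same union-of-closures observation to $\Omega$ and $\Gamma$ gives
\[
\Omega \cup \Gamma \; \in \; \bigl((\Msupports \supportplus \Msupports_T) \supportplus \Msupports_H\bigr)^+ \; = \; (\Msupports \supportplus \Msupports_T \supportplus \Msupports_H)^+ ,
\]
where the equality is by associativity of $\supportplus$. Consequently, any $x' \in \R^n$ with $\supp(x') \subseteq \Omega \cup \Gamma$ satisfies $\supp(x') \in (\Msupports \supportplus \Msupports_T \supportplus \Msupports_H)^+$.

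Finally, assumption (v) on $A$ — that it has the $(\delta, \Msupports \supportplus \Msupports_T \supportplus \Msupports_H)$-model-RIP — applied to this $x'$ yields the two inequalities in the statement. There is no real obstacle here; the only subtlety is checking the base case $i=0$ and verifying that set union of supports drawn from the downward closures of two models lands in the downward closure of their Minkowski sum, both of which are immediate from the definitions.
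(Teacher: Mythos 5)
Your proposal is correct and follows essentially the same route as the paper's proof: track which model each of $\supp(x)$, $\supp(x^i)$, and $\Gamma$ belongs to, conclude $\Omega \cup \Gamma$ lies in (the closure of) $\Msupports \supportplus \Msupports_T \supportplus \Msupports_H$, and invoke assumption (v). Your version is slightly more careful than the paper's in explicitly working with the downward closures and in checking the $i=0$ base case, but the argument is the same.
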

\begin{proof}
By the definition of $T$, we have $\supp(x^i) \in \Msupports_T$.
Since $\supp(x) \in \Msupports$, we have $\supp(x - x^i) \in \Msupports_T \supportplus \Msupports$ and hence $\Omega \in \Msupports_T \supportplus \Msupports$.
Moreover, $\supp(H(b^i)) \in \Msupports_H$ by the definition of $H$.
Therefore $\Omega \cup \Gamma \in \Msupports \supportplus \Msupports_T \supportplus \Msupports_H$, which allows us to use the model-RIP of $A$ on $x'$ with $\supp(x') \subseteq \Omega \cup \Gamma$.
\end{proof}

We now establish our main lemma, which will also prove useful in Section \ref{sec:amcosamp}.
A similar result (with a different derivation approach and different constants) appears in Section 4 of the conference version of this manuscript~\cite{approxSODA}.

\begin{lemma}
\label{lem:resid}
Let $r^i = x - x^i$ and $\Gamma = \supp(H(b^i))$.
Then, 
\begin{equation}
\norm{r^i_{\Gamma^c}}_2  \, \leq \,  \sqrt{1 - \alpha_0^2}\norm{r^i}_2 +
\left[\frac{\beta_0}{\alpha_0} + \frac{\alpha_0 \beta_0}{\sqrt{1-\alpha_0^2}} \right] \norm{e}_2 \, .
\label{eq:rlambdac}
\end{equation}
where
$$
\alpha_0 = c_H (1 - \delta) - \delta \qquad\text{and} \qquad\beta_0 = (1+c_H)\sqrt{1+\delta} \, . 
$$
We assume that $c_H$ and $\delta$ are such that $\alpha_0 > 0$.
\end{lemma}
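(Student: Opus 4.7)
The plan is to first establish a linear lower bound of the form $\norm{r^i_\Gamma}_2 \geq \alpha_0 \norm{r^i}_2 - \beta_0 \norm{e}_2$, and then convert it, via the orthogonal identity $\norm{r^i_\Gamma}_2^2 + \norm{r^i_{\Gamma^c}}_2^2 = \norm{r^i}_2^2$, into the stated bound on $\norm{r^i_{\Gamma^c}}_2$. Throughout I would write $\Omega = \supp(r^i)$. Since $x \in \Mmodel$ and $x^i$ is produced by $T$, we have $\Omega \in \Msupports \supportplus \Msupports_T$ and $\Gamma \in \Msupports_H$, so Lemma~\ref{lemma:headiht_rip} grants the RIP on every support appearing below, making all of Fact~\ref{fact:modelrip} available.

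\textbf{Sandwich bounds on $\norm{b^i_\Gamma}_2$.} Writing $b^i = A^T A r^i + A^T e$, the head-approximation guarantee combined with Fact~\ref{fact:modelrip} would give
\[
  \norm{b^i_\Gamma}_2 \,\geq\, c_H\norm{b^i_\Omega}_2 \,\geq\, c_H\bigl[(1-\delta)\norm{r^i}_2 - \sqrt{1+\delta}\,\norm{e}_2\bigr],
\]
using that $r^i$ is supported on $\Omega$ to apply $\norm{A_\Omega^T A_\Omega r^i}_2 \geq (1-\delta)\norm{r^i}_2$. For a matching upper bound I would set $z = A_\Gamma^T A r^i$ (which is supported on $\Gamma$) and expand $\norm{z}_2^2 = \langle Az, A r^i\rangle$. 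The polarized-RIP inequality $|\langle Au,Av\rangle - \langle u,v\rangle| \leq \delta\norm{u}_2\norm{v}_2$, a standard consequence of model-RIP applied to $u \pm v$, together with $\langle z, r^i\rangle = \langle z, r^i_\Gamma\rangle \leq \norm{z}_2\norm{r^i_\Gamma}_2$, then yields $\norm{z}_2 \leq \norm{r^i_\Gamma}_2 + \delta\norm{r^i}_2$, so
\[
  \norm{b^i_\Gamma}_2 \,\leq\, \norm{r^i_\Gamma}_2 + \delta\norm{r^i}_2 + \sqrt{1+\delta}\,\norm{e}_2.
\]
Chaining the two estimates and rearranging produces exactly $\norm{r^i_\Gamma}_2 \geq \alpha_0\norm{r^i}_2 - \beta_0\norm{e}_2$.

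\textbf{Case split.} If $\alpha_0\norm{r^i}_2 \leq \beta_0\norm{e}_2$, then $\norm{r^i_{\Gamma^c}}_2 \leq \norm{r^i}_2 \leq (\beta_0/\alpha_0)\norm{e}_2$, which is already dominated by the right-hand side of~\eqref{eq:rlambdac}. Otherwise I would square the linear estimate, substitute into Pythagoras, and complete the square to obtain
\[
  \norm{r^i_{\Gamma^c}}_2^2 \,\leq\, (1-\alpha_0^2)\norm{r^i}_2^2 + 2\alpha_0\beta_0\norm{r^i}_2\norm{e}_2 - \beta_0^2\norm{e}_2^2 \,\leq\, \Bigl(\sqrt{1-\alpha_0^2}\,\norm{r^i}_2 + \tfrac{\alpha_0\beta_0}{\sqrt{1-\alpha_0^2}}\,\norm{e}_2\Bigr)^2.
\]
Taking square roots and taking the maximum of the coefficient of $\norm{e}_2$ across the two cases gives the stated constant $\beta_0/\alpha_0 + \alpha_0\beta_0/\sqrt{1-\alpha_0^2}$.

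\textbf{Main obstacle.} The only step not handed to me by Fact~\ref{fact:modelrip} verbatim is the polarized-RIP inner-product inequality, which I would derive inline from $4\langle Au,Av\rangle = \norm{A(u+v)}_2^2 - \norm{A(u-v)}_2^2$ after rescaling. Keeping track that the union $\supp(u)\cup\supp(v)$ always lies in $\Msupports\supportplus\Msupports_T\supportplus\Msupports_H$ is the essential bookkeeping, but Lemma~\ref{lemma:headiht_rip} handles this uniformly.
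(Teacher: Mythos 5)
Your proposal is correct and follows essentially the same route as the paper: sandwich bounds on $\norm{b^i_\Gamma}_2$ yielding the linear estimate $\norm{r^i_\Gamma}_2 \geq \alpha_0\norm{r^i}_2 - \beta_0\norm{e}_2$, then the same case split and Pythagorean conversion. The only local differences are that the paper gets the upper bound directly from the third inequality of Fact~\ref{fact:modelrip} (via $\norm{A_{\Gamma\cup\Omega}^T A r^i - r^i_{\Gamma\cup\Omega}}_2 \leq \delta\norm{r^i}_2$) rather than your polarized-RIP argument, and handles Case 2 by a tangent-line bound on $\sqrt{1-\omega_0^2}$ optimized at $\omega = \alpha_0$ rather than your completing-the-square; both substitutions are valid and yield identical constants.
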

\begin{proof}
We provide lower and upper bounds on $\norm*{H(b^i)}_2 = \norm{b^i_\Gamma}_2$, where $b^i = A^T(y - Ax_i) = A^T A r^i + A^T e$.
Let $\Omega = \supp(r^i)$.
From the head-approximation property, we can bound $ \norm{b^i_\Gamma}_2$ as:
\begin{align*}
\norm{b^i_\Gamma}_2 &= \norm{A_\Gamma^T A r^i + A_\Gamma^T e}_2 \\
    & \geq  c_H \norm{A_\Omega^T A r^i + A_\Omega^T e}_2 \\
    & \geq c_H \norm{A_\Omega^T A_\Omega r^i }_2 - c_H \norm{A_\Omega^T e}_2 \\
    & \geq  c_H (1 - \delta) \norm{r^i}_2  - c_H \sqrt{1 + \delta} \norm{e}_2 \, ,
\end{align*}
where the inequalities follow from Fact~\ref{fact:modelrip} and the triangle inequality.
This provides the lower bound on $\norm{b^i_\Gamma}_2$.

Now, consider $r_\Gamma$. By repeated use of the triangle inequality, we get
\begin{align*}
\norm{b^i_\Gamma}_2 &= \norm{A_\Gamma^T A r^i + A_\Gamma^T e}_2 \\
    &= \norm{A_\Gamma^T A r^i - r^i_\Gamma + r^i_\Gamma + A_\Gamma^T e}_2 \\
    &\leq \norm{A_\Gamma^T A r^i - r^i_\Gamma}_2 + \norm{r^i_\Gamma}_2 + \norm{A_\Gamma^T e}_2 \\
    &\leq \norm{A_{\Gamma \cup \Omega}^T A r^i - r^i_{\Gamma \cup \Omega}}_2 + \norm{r^i_\Gamma}_2 + \sqrt{1+\delta}\norm{e}_2 \\
    &\leq \delta \norm{r^i}_2 + \norm{r^i_\Gamma}_2 + \sqrt{1+\delta}\norm{e}_2 \, ,
\end{align*}
where the last inequality again follows from Fact~\ref{fact:modelrip}.
This provides the upper bound on  $\norm{b^i_\Gamma}_2$.

Combining the two bounds and grouping terms, we obtain the following inequality.
In order to simplify notation, we write $\alpha_0 = c_H (1 - \delta) - \delta$
and $\beta_0 = (1+c_H)\sqrt{1+\delta}$. 
\begin{equation}
\label{eq:lower_head}
\norm{r^i_\Gamma}_2 \geq \alpha_0 \norm{r^i}_2 - \beta_0 \norm{e}_2 \, .
\end{equation}
Next, we examine the right hand side of \eqref{eq:lower_head} more carefully.
Let us assume that the RIP constant $\delta$ is set to be small enough such that it satisfies $c_H > \delta/(1-\delta)$.
There are two mutually exclusive cases:

\noindent{\emph{Case 1:}} The value of $\norm*{r^i}_2$ satisfies $\alpha_0 \norm*{r^i}_2 \leq \beta_0 \norm{e}_2$.
Then, consider the vector $r^i_{\Gamma^c}$, i.e., the vector $r^i$ restricted to the set of coordinates in the complement of $\Gamma$.
Clearly, its norm is smaller than $\norm{r^i}_2$.
Therefore, we have
\begin{equation}
\norm{r^i_{\Gamma^c}}_2 \leq \frac{\beta_0}{\alpha_0} \norm{e}_2 \, . \label{eq:iht_case1}
\end{equation}
\noindent{\emph{Case 2:}} The value of $\norm*{r^i}_2$ satisfies $\alpha_0 \norm*{r^i}_2 \geq \beta_0 \norm{e}_2$.
Rewriting \eqref{eq:lower_head}, we get
\[
\norm{r^i_\Gamma}_2 \geq \norm{r^i}_2 \left(\alpha_0 - \frac{\beta_0 \norm{e}_2}{\norm{r_i}_2} \right) \, . 
\]
Moreover, we also have $\norm{r^i}_2^2 = \norm{r^i_\Gamma}_2^2 +
\norm{r^i_{\Gamma^c}}_2^2$. 
Therefore, we obtain
\begin{equation}
\norm{r^i_{\Gamma^c}}_2  \leq \norm{r^i}_2  \sqrt{1 - \left(\alpha_0 - \beta_0 \frac{\norm{e}_2}{\norm{r^i}_2} \right)^2}\, .
\label{eq:iht_case2_prelim}
\end{equation}

We can simplify the right hand side using the following  geometric
argument, adapted from~\cite{lee2010admira}.
Denote $\omega_0= \alpha_0 - \beta_0 \norm{e}_2/\norm{r^i}_2 $.
Then, $0 \leq \omega_0 < 1$ because $\alpha_0 \norm*{r^i}_2 \geq \beta_0 \norm{e}_2$, $\alpha_0 < 1$, and $\beta_0 \geq 1$.
The function $g(\omega_0) = \sqrt{1-\omega_0^2}$ traces an arc of the unit circle as a function of $\omega_0$ and therefore is upper-bounded by the $y$-coordinate of \emph{any} tangent line to the circle evaluated at $\omega_0$.
For a free parameter $0 < \omega < 1$ (the tangent point of the tangent line), a straightforward calculation yields that
\[
\sqrt{1-\omega_0^2} \leq \frac{1}{\sqrt{1-\omega^2}} - \frac{\omega}{\sqrt{1-\omega^2}} \omega_0 \; .
\]
Therefore, substituting into the bound
for $\norm{r^i_{\Gamma^c}}_2$, we get:
\begin{align*}
\norm{r^i_{\Gamma^c}}_2  &\leq \norm{r^i}_2 \left(
  \frac{1}{\sqrt{1-\omega^2}}  - \frac{\omega}{\sqrt{1-\omega^2}} \left( \alpha_0 -
    \beta_0 \frac{\norm{e}_2}{\norm{r^i}_2} \right) \right) \\
&=\frac{1-\omega\alpha_0}{\sqrt{1-\omega^2}} \norm{r^i}_2 + \frac{\omega
  \beta_0}{\sqrt{1-\omega^2}} \norm{e}_2 \; .
\end{align*}
The coefficient preceding $\norm{r^i}_2$ determines the overall convergence rate, and the minimum value of the coefficient is attained by setting
$\omega = \alpha_0$. Substituting, we obtain
\begin{equation}
\norm{r^i_{\Gamma^c}}_2 \leq \sqrt{1-\alpha_0^2} \norm{r^i}_2 +
\frac{\alpha_0 \beta_0}{\sqrt{1-\alpha_0^2}} \norm{e}_2 \, .
\label{eq:iht_case2}
\end{equation}
Combining the mutually exclusive cases \eqref{eq:iht_case1} and
\eqref{eq:iht_case2}, we obtain
\begin{equation*}
\norm{r^i_{\Gamma^c}}_2  \leq  \sqrt{1 - \alpha_0^2}\norm{r^i}_2 +
\left[\frac{\beta_0}{\alpha_0} + \frac{\alpha_0 \beta_0}{\sqrt{1-\alpha_0^2}} \right] \norm{e}_2 \, ,
\end{equation*}
which proves the lemma.
\end{proof}

\begin{theorem}[Geometric convergence of AM-IHT]
\label{thm:amiht}
Let $r^i = x-x^i$, where $x^i$ is the signal estimate computed by AM-IHT in iteration $i$.
Then, 
\begin{equation*}
\norm{r^{i+1}}_2  \leq \alpha \norm{r^i}_2 + \beta \norm{e}_2 \, ,
\end{equation*}
where 
\begin{align*}
\alpha &= (1+c_T)\left[\delta + \sqrt{1 -\alpha_0^2}\right]\, , & \beta &= (1+c_T)\left[\frac{\beta_0}{\alpha_0} + \frac{\alpha_0 \beta_0}{\sqrt{1-\alpha_0^2}} +  \sqrt{1+\delta}\right] \, , \\
\alpha_0 &= c_H (1 - \delta) - \delta\, , & \beta_0 &= (1+c_H)\sqrt{1+\delta} \, .
\end{align*}
We assume that $c_H$ and $\delta$ are such that $\alpha_0 > 0$.
\end{theorem}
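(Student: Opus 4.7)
The plan is to reduce the question to bounding a single vector, $a - x$, where $a = x^i + H(b^i)$ is the input handed to the tail oracle at iteration $i$. By the triangle inequality,
\[
  \norm{r^{i+1}}_2 \;=\; \norm{x - T(a)}_2 \;\leq\; \norm{x - a}_2 + \norm{a - T(a)}_2 .
\]
Since $x \in \Mmodel$, there exists $\Omega^\star \in \Msupports$ with $\supp(x) \subseteq \Omega^\star$, so $x_{\Omega^\star} = x$. Invoking the tail-approximation property with $\Omega = \Omega^\star$,
\[
  \norm{a - T(a)}_2 \;\leq\; c_T \norm{a - a_{\Omega^\star}}_2 \;=\; c_T \norm{a_{(\Omega^\star)^c}}_2 \;=\; c_T \norm{(a-x)_{(\Omega^\star)^c}}_2 \;\leq\; c_T \norm{a - x}_2 ,
\]
so that $\norm{r^{i+1}}_2 \leq (1 + c_T) \norm{a - x}_2$. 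Hence it suffices to prove
\[
  \norm{a - x}_2 \;\leq\; \bigl(\delta + \sqrt{1-\alpha_0^2}\bigr)\norm{r^i}_2 + \left[\sqrt{1+\delta} + \tfrac{\beta_0}{\alpha_0} + \tfrac{\alpha_0\beta_0}{\sqrt{1-\alpha_0^2}}\right]\norm{e}_2 .
\]

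To bound $\norm{a - x}_2$, I would split by support. Let $\Gamma = \supp(H(b^i))$ and $\Omega = \supp(r^i)$. Since $a = x^i + b^i_\Gamma$, we have $(a-x)_{\Gamma^c} = -r^i_{\Gamma^c}$, and this term is controlled directly by Lemma~\ref{lem:resid}. On $\Gamma$, using $b^i = A^T A r^i + A^T e$,
\[
  (a-x)_\Gamma \;=\; -(r^i)_\Gamma + (b^i)_\Gamma \;=\; \bigl((A^T A - I) r^i\bigr)_\Gamma + A^T_\Gamma e .
\]
By Lemma~\ref{lemma:headiht_rip}, the set $S := \Gamma \cup \Omega$ lies in $\Msupports \supportplus \Msupports_T \supportplus \Msupports_H$, so $A$ has the model-RIP on vectors supported on $S$. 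Since $\supp(r^i) \subseteq S$, Fact~\ref{fact:modelrip} gives
\[
  \norm{\bigl((A^T A - I) r^i\bigr)_\Gamma}_2 \;\leq\; \norm{\bigl((A^T_S A_S - I) r^i\bigr)}_2 \;\leq\; \delta \norm{r^i}_2 ,
\]
and also $\norm{A^T_\Gamma e}_2 \leq \sqrt{1+\delta}\,\norm{e}_2$. Combining gives $\norm{(a-x)_\Gamma}_2 \leq \delta\norm{r^i}_2 + \sqrt{1+\delta}\norm{e}_2$.

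Finally, the triangle inequality $\norm{a-x}_2 \leq \norm{(a-x)_\Gamma}_2 + \norm{(a-x)_{\Gamma^c}}_2$ combined with Lemma~\ref{lem:resid} yields the displayed bound on $\norm{a-x}_2$, which multiplied by $1 + c_T$ produces the claimed recursion. The step I expect to require the most care is the restriction argument on $\Gamma$: one must be explicit that $(A^T A r^i)|_\Gamma$ can be replaced by $(A^T_S A_S r^i)|_\Gamma$ (using $\supp(r^i) \subseteq S$ and $\Gamma \subseteq S$) before Fact~\ref{fact:modelrip} applies, and that $\Gamma$ itself lies in the combined model so the bound on $\norm{A^T_\Gamma e}_2$ is legitimate. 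Everything else is just tracking constants.
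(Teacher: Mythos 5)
Your proposal is correct and follows essentially the same route as the paper's proof: reduce to $(1+c_T)\norm{a-x}_2$ via the tail oracle, split $a - x$ across $\Gamma$ and $\Gamma^c$, control the $\Gamma^c$ piece with Lemma~\ref{lem:resid}, and control the $\Gamma$ piece with the model-RIP on $\Gamma \cup \Omega$ via Fact~\ref{fact:modelrip}. The restriction step you flag as delicate is exactly the step the paper performs by passing from $\norm{A_\Gamma^T A r^i - r^i_\Gamma}_2$ to $\norm{A_{\Gamma\cup\Omega}^T A r^i - r^i_{\Gamma\cup\Omega}}_2$, and your constants match the theorem.
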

\begin{proof}
Let $a = x^i + H(b^i)$.  
From the triangle inequality, we have:
\begin{align*}
\norm{x - x^{i+1}}_2  &= \norm{x - T(a)}_2 \\
    &\leq  \norm{x - a}_2 + \norm{a - T(a)}_2 \\
    &\leq (1 + c_T) \norm{x - a}_2 \\
    &=  (1 + c_T) \norm{x - x^i - H(b^i)}_2  \\
&= (1 + c_T) \norm{r^i - H(A^T A r^i + A^T e)}_2 \numberthis \label{eq:tail_iht} \, .
\end{align*}

We can further bound $\norm{r^i - H(A^T A r^i + A^T e)}_2$ in terms of $\norm{r^i}_2$.
Let $\Omega = \supp(r^i)$ and $\Gamma = \supp(H(A^T A r^i + A^T e))$.
We have the inequalities
\begin{align*}
\norm{r^i - H(A^T A r^i + A^T e)}_2  = &~\norm{r^i_\Gamma +
  r^i_{\Gamma^c} - A_\Gamma^T A r^i + A_\Gamma^T e}_2 \\
 \leq &~\norm{A_\Gamma^T A r^i - r^i_\Gamma}_2 + \norm{r^i_{\Gamma^c}}_2 + \norm{A_\Gamma^T e}_2 \\
 \leq &~\norm{A_{\Gamma \cup \Omega}^T A r^i - r^i_{\Gamma \cup \Omega}}_2 + \norm{r^i_{\Gamma^c}}_2 + \norm{A_\Gamma^T e}_2 \\
 \leq &~\delta \norm{r^i}_2 + \sqrt{1 - \alpha_0^2}\norm{r^i}_2 + \left[\frac{\beta_0}{\alpha_0} + \frac{\alpha_0 \beta_0}{\sqrt{1-\alpha_0^2}} + \sqrt{1+\delta}\right] \norm{e}_2 \, ,
\end{align*}
where the last inequality follows from the RIP and
\eqref{eq:rlambdac}.
Putting this together with \eqref{eq:tail_iht} and grouping terms, we get 
\begin{equation}
\norm{x - x^{i+1}}_2 \leq \alpha \norm{x - x^i}_2 + \beta \norm{e}_2 \, ,
\label{eq:newbound}
\end{equation}
thus proving the Theorem. 
\end{proof}

In the noiseless case, we can ignore the second term and only focus on
the leading recurrence factor: 
$$\alpha = (1 + c_T) \left( \delta + \sqrt{1 - (c_H (1 -
    \delta) - \delta)^2} \right) .$$
For convergence, we need $\alpha$ to be strictly smaller than 1.
Note that we can make $\delta$ as small as we desire since this assumption only affects the measurement bound by a constant factor.
Therefore, the following condition must hold for guaranteed convergence:
\begin{equation}
(1 + c_T) \sqrt{1 - c_H^2} < 1 \, , \qquad \textnormal{or equivalently, } \qquad c_H^2 > 1 - \frac{1}{(1 + c_T)^2} \; .
\label{eq:headtail_iht}
\end{equation}
Under this condition, AM-IHT exhibits geometric convergence comparable to the existing model-based compressive sensing results of~\cite{modelcs}.
AM-IHT achieves this \emph{despite using only approximate projection
oracles}.
In Section \ref{sec:boosting}, we relax condition \eqref{eq:headtail_iht} so that geometric convergence is possible for \emph{any} constants $c_T$ and $c_H$.

The geometric convergence of AM-IHT implies that the algorithm quickly
recovers a good signal estimate. Formally, we obtain:
\begin{corollary}
\label{cor:amiht}
Let $T$ and $H$ be approximate projection oracles with $c_T$ and $c_H$ such that $0 < \alpha < 1$.
Then after $t = \ceil{\frac{\log \frac{\norm{x}_2}{\norm{e}_2}}{\log \frac{1}{\alpha}}}$ iterations, AM-IHT returns a signal estimate $\widehat{x}$ satisfying
\begin{equation*}
  \norm{x - \widehat{x}}_2 \leq \left(1 + \frac{\beta}{1 - \alpha} \right) \norm{e}_2 \; .
\end{equation*}
\end{corollary}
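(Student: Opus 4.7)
The plan is to prove the corollary by directly unrolling the one-step recurrence guaranteed by Theorem~\ref{thm:amiht} and then choosing $t$ large enough that the geometric term involving $\alpha^{t+1}$ is dominated by $\norm{e}_2$.

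First, I would observe that the algorithm initializes $x^0 = 0$, so $r^0 = x - x^0 = x$ and $\norm{r^0}_2 = \norm{x}_2$. Applying Theorem~\ref{thm:amiht} inductively to the iterates $x^1, x^2, \ldots, x^{t+1}$ produced by the loop, I obtain
\begin{equation*}
\norm{r^{t+1}}_2 \;\leq\; \alpha^{t+1} \norm{r^0}_2 + \beta \norm{e}_2 \sum_{j=0}^{t} \alpha^{j} \;\leq\; \alpha^{t+1} \norm{x}_2 + \frac{\beta}{1-\alpha} \norm{e}_2,
\end{equation*}
where the last step uses $0 < \alpha < 1$ to bound the geometric series by $1/(1-\alpha)$.

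Next, I would handle the choice of $t$. The stated value $t = \ceil*{\log(\norm{x}_2/\norm{e}_2)/\log(1/\alpha)}$ guarantees $t \geq \log(\norm{x}_2/\norm{e}_2)/\log(1/\alpha)$, which rearranges to $\alpha^{t} \norm{x}_2 \leq \norm{e}_2$, and hence $\alpha^{t+1}\norm{x}_2 \leq \alpha \norm{e}_2 \leq \norm{e}_2$. Substituting into the unrolled bound,
\begin{equation*}
\norm{x - \widehat{x}}_2 \;=\; \norm{r^{t+1}}_2 \;\leq\; \norm{e}_2 + \frac{\beta}{1-\alpha} \norm{e}_2 \;=\; \left(1 + \frac{\beta}{1-\alpha}\right) \norm{e}_2,
\end{equation*}
which is exactly the claimed guarantee.

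There is essentially no hard step here; the corollary is a routine consequence of geometric contraction. The only minor subtlety worth stating explicitly is that the assumption $0 < \alpha < 1$ is needed twice (once so the geometric series converges and once so $\log(1/\alpha) > 0$ makes the iteration count well-defined), and that one should briefly note the edge case $\norm{e}_2 = 0$ or $\norm{x}_2 = 0$, in which the iteration count is interpreted in the limit and exact recovery follows from the same unrolling.
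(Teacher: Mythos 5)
Your proposal is correct and follows essentially the same route as the paper's proof: unroll the recurrence from Theorem~\ref{thm:amiht} starting from $\norm{r^0}_2 = \norm{x}_2$, bound the geometric series by $1/(1-\alpha)$, and choose $t$ so that $\alpha^t \norm{x}_2 \leq \norm{e}_2$. The only differences are cosmetic (an off-by-one in the exponent that only makes your bound tighter, and the explicit remark about the degenerate cases), so nothing further is needed.
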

\begin{proof}
As before, let $r^i = x - x^i$.
Using $\norm*{r^0}_2 = \norm{x}_2$, Theorem \ref{thm:amiht}, and a simple inductive argument shows that
\[
  \norm*{r^{i+1}}_2 \leq \alpha^i \norm{x}_2 + \beta \norm{e}_2 \sum_{j=0}^i \alpha^j \, .
\]
For $i = \ceil{\frac{\log \frac{\norm{x}_2}{\norm{e}_2}}{\log \frac{1}{\alpha}}}$, we get $\alpha^i \norm{x}_2 \leq \norm{e}_2$.
Moreover, we can bound the geometric series $\sum_{j = 0}^t \alpha^j$ by $\frac{1}{1-\alpha}$.
Combining these bounds gives the guarantee stated in the theorem.
\end{proof}

\section{Approximate Model-CoSaMP}
\label{sec:amcosamp}
In this Section, we propose a second algorithm for model-based compressive sensing with approximate projection oracles.
Our algorithm is a generalization of model-based CoSaMP, which was initially developed in~\cite{modelcs}.
We call our variant \emph{Approximate Model-CoSaMP} (or AM-CoSaMP); see
Algorithm~\ref{alg:approxmodelcosamp} for a complete description. 

Algorithm~\ref{alg:approxmodelcosamp} closely resembles the {\em Signal-Space
CoSaMP} (or SSCoSaMP) algorithm proposed and
analyzed in~\cite{davenport_redundant,GN13}. Like our approach, SSCoSaMP also makes
assumptions about the existence of head- and tail-approximation
oracles. However, there are some important technical differences in our
development. SSCoSaMP was introduced in the context of
recovering signals that are sparse in overcomplete and incoherent
dictionaries.
In contrast, we focus on recovering signals from structured sparsity models.

Moreover, the authors of~\cite{davenport_redundant,GN13} assume that a \emph{single} oracle simultaneously achieves the conditions specified in Definitions \ref{def:headapproxoracle} and \ref{def:tailapproxoracle}.
In contrast, our approach assumes the existence of two separate head- and tail-approximation oracles and consequently is somewhat more general.
Finally, our analysis is simpler and more concise than that provided in~\cite{davenport_redundant,GN13} and follows directly from the results in Section~\ref{sec:amiht}. 

\begin{algorithm}[!t]
\caption{Approximate Model-CoSaMP}
\label{alg:approxmodelcosamp}
\begin{algorithmic}[1]
\Function{AM-CoSaMP}{$y,A,t$}
\State $x^0 \gets 0$
\For{$i \gets 0, \ldots, t$}
\State $b^i \gets A^T (y - A x^i)$
\State $\Gamma \gets \mathrm{supp}(H(b^i))$
\State $S \gets \Gamma \cup \mathrm{supp}(x^i)$ \label{line:cosampsupportupdate}
\State $z \vert_S \gets A_S\pinv y, \quad z \vert_{S^C} \gets 0$
\State $x^{i+1} \gets T(z)$
\EndFor
\State \textbf{return} $x^{t+1}$
\EndFunction
\end{algorithmic}
\end{algorithm}

We prove that AM-CoSaMP (Alg.\ \ref{alg:approxmodelcosamp}) exhibits robust signal recovery.
We make the same assumptions as in Section \ref{sec:amiht}:
(i) $x \in \R^n$ and $x \in \Mmodel$.
(ii) $y = A x + e$ for an arbitrary $e \in \R^m$ (the measurement noise).
(iii) $T$ is a $(c_T, \Msupports, \Msupports_T, 2)$-tail-approximation oracle.
(iv) $H$ is a $(c_H, \Msupports_T \supportplus \Msupports, \Msupports_H, 2)$-head-approximation-oracle.
(v) $A$ has the $(\delta, \Msupports \supportplus \Msupports_T \supportplus \Msupports_H)$-model-RIP.
Our main result in this section is the following:

\begin{theorem}[Geometric convergence of AM-CoSaMP]
\label{thm:amcosamp}
Let $r^i = x-x^i$, where $x^i$ is the signal estimate computed by AM-CoSaMP in iteration $i$.
Then, 
\begin{equation*}
\norm{r^{i+1}}_2  \leq \alpha \norm{r^i}_2 + \beta \norm{e}_2 \, ,
\end{equation*}
where 
\begin{align*}
\alpha &= (1+c_T) \sqrt{ \frac{1+\delta}{1-\delta}} \sqrt{1
  -\alpha_0^2} \, , \\
\beta &= (1+c_T) \left[\sqrt{ \frac{1+\delta}{1-\delta}} \left(\frac{\beta_0}{\alpha_0} + \frac{\alpha_0 \beta_0}{\sqrt{1-\alpha_0^2}}\right) + \frac{2}{\sqrt{1-\delta}} \right] \, , \\
\alpha_0 &= c_H (1 - \delta) - \delta \, , \\
\beta_0 &= (1+c_H)\sqrt{1+\delta} \, .
\end{align*}

\end{theorem}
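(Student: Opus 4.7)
The plan is to reduce the analysis to two steps: first, bound $\|x - x^{i+1}\|_2$ in terms of $\|x - z\|_2$ using the tail oracle, and second, bound $\|x - z\|_2$ by combining the least-squares property of $z$ with Lemma~\ref{lem:resid} applied to the head-oracle step.

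For the first step, exactly as in the proof of Theorem~\ref{thm:amiht}, I would write $\|x - x^{i+1}\|_2 = \|x - T(z)\|_2 \leq \|x - z\|_2 + \|z - T(z)\|_2$, and then use the tail guarantee against the comparison support $\supp(x) \in \Msupports$, so that $\|z - T(z)\|_2 \leq c_T \|z - x\|_2$. This gives $\|r^{i+1}\|_2 \leq (1 + c_T) \|x - z\|_2$.

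The second step is the main technical calculation. Since $z|_S = A_S^\dagger y$ minimizes $\|y - Au\|_2$ over $u$ supported on $S$, we have the optimality bound $\|y - Az\|_2 \leq \|y - Ax_S\|_2 = \|A x_{S^c} + e\|_2 \leq \sqrt{1+\delta}\,\|x_{S^c}\|_2 + \|e\|_2$, where the RIP applies because $\supp(x_{S^c}) \subseteq \supp(x) \in \Msupports$. Writing $y - Az = A(x - z) + e$ and using the triangle inequality yields $\|A(x - z)\|_2 \leq \|y - Az\|_2 + \|e\|_2$. Crucially, $\supp(x - z) \subseteq \supp(x) \cup S \subseteq \supp(x) \cup \supp(x^i) \cup \Gamma$, which lies in $\Msupports \supportplus \Msupports_T \supportplus \Msupports_H$, so we may invert the RIP to obtain
\begin{equation*}
\|x - z\|_2 \leq \tfrac{1}{\sqrt{1-\delta}}\|A(x - z)\|_2 \leq \sqrt{\tfrac{1+\delta}{1-\delta}}\,\|x_{S^c}\|_2 + \tfrac{2}{\sqrt{1-\delta}}\,\|e\|_2.
\end{equation*}

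Finally I would connect $\|x_{S^c}\|_2$ to the head-oracle bound. Because $\supp(x^i) \subseteq S$, the identity $(x - x^i)_{S^c} = x_{S^c}$ holds; and since $\Gamma \subseteq S$ we have $S^c \subseteq \Gamma^c$, so $\|x_{S^c}\|_2 = \|r^i_{S^c}\|_2 \leq \|r^i_{\Gamma^c}\|_2$. Now I invoke Lemma~\ref{lem:resid} to bound $\|r^i_{\Gamma^c}\|_2$ by $\sqrt{1 - \alpha_0^2}\,\|r^i\|_2 + \bigl(\tfrac{\beta_0}{\alpha_0} + \tfrac{\alpha_0 \beta_0}{\sqrt{1-\alpha_0^2}}\bigr)\|e\|_2$. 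Substituting back through the chain and multiplying by $(1 + c_T)$ produces exactly the stated $\alpha$ and $\beta$.

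The main obstacle, and the one genuinely new piece relative to the AM-IHT proof, is getting the clean least-squares bound for $\|x - z\|_2$: naively splitting $\|x - z\|_2^2 = \|x_S - z_S\|_2^2 + \|x_{S^c}\|_2^2$ and bounding $\|x_S - z_S\|_2$ via $\|A_S^\dagger\|$ produces extra cross terms. The elegant workaround is to bound $\|A(x - z)\|_2$ directly using the optimality of the least-squares solution and then invert via model-RIP on the combined support $\supp(x) \cup S$, which is why the hypothesis that $A$ has model-RIP on $\Msupports \supportplus \Msupports_T \supportplus \Msupports_H$ is precisely what is needed. Everything else is bookkeeping that mirrors Theorem~\ref{thm:amiht}.
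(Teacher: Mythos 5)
Your proposal is correct and takes essentially the same route as the paper's proof: the tail-oracle reduction to $(1+c_T)\norm{x-z}_2$, the least-squares optimality of $z_S$ against the competitor $x_S$, the containment $S^c \subseteq \Gamma^c$, and the final invocation of Lemma~\ref{lem:resid} all play the same roles. The only cosmetic difference is the order in which you invert the model-RIP on $\supp(x)\cup S$ versus apply the triangle inequality to $\norm{y - Az}_2$, which yields identical constants.
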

\begin{proof}
We can bound the error $\norm*{r^{i+1}}_2$ as follows:
\begin{align*}
\norm{r^{i+1}}_2 &= \norm{x - x^{i+1}}_2 \\
    &\leq  \norm{x^{i+1} - z}_2 + \norm{x - z}_2 \\
    &\leq  c_T \norm{x - z}_2 + \norm{x - z}_2 \\
    &= (1 + c_T) \norm{x - z}_2 \\
    &\leq  (1 + c_T) \frac{\norm{A (x - z )}_2}{\sqrt{1 - \delta}} \\
    &=  (1 + c_T) \frac{\norm{ Ax - A  z }_2}{\sqrt{1 - \delta}} \, .
\end{align*}
Most of these inequalities follow the same steps as the proof provided in~\cite{cosamp}.
The second relation above follows from the triangle inequality, the third relation follows from the tail approximation property and the fifth relation follows from the $(\delta, \Msupports \supportplus \Msupports_T \supportplus \Msupports_H)$-model-RIP of $A$.

We also have $Ax = y - e$ and $Az = A_S z_S$.
Substituting, we get:
\begin{align*}
\norm{r^{i+1}}_2 & \leq   (1 + c_T) \left( \frac{\norm{ y - A_S  z_S }_2}{\sqrt{1 - \delta}} + \frac{\norm{e}_2}{\sqrt{1 -\delta}} \right) \\
& \leq   (1 + c_T) \left( \frac{\norm{ y - A_S  x_S }_2}{\sqrt{1 - \delta}} + \frac{\norm{e}_2}{\sqrt{1 -\delta}} \right) \, . \numberthis \label{eq:cosamperror}
\end{align*}
The first inequality follows from the triangle inequality and the second from the fact that $z_S$ is the least squares estimate $A_S\pinv y $ (in particular, it is at least as good as $x_S$).

Now, observe that $y = Ax +e = A_S x_S + A_{S^c} x_{S^c} + e$.  Therefore, we can further simplify inequality \eqref{eq:cosamperror} as
\begin{align*}
\norm{r^{i+1}}_2 & \leq (1 + c_T) \frac{\norm{ A_{S^c}  x_{S^c}}_2}{\sqrt{1 - \delta}} + (1+c_T) \frac{2\norm{e}_2}{\sqrt{1-\delta}} \\
    &\leq  (1 + c_T) \frac{\sqrt{1 + \delta}}{\sqrt{1 - \delta}} \norm{  x_{S^c} }_2 + (1+c_T) \frac{2\norm{e}_2}{\sqrt{1 -\delta}} \\
    &= (1 + c_T) \sqrt{\frac{{1 + \delta}}{{1 - \delta}}} \norm{ (x - x^i)_{S^c} }_2 + (1+c_T) \frac{2\norm{e}_2}{\sqrt{1 -\delta}} \\
    &\leq (1 + c_T) \sqrt{\frac{{1 + \delta}}{{1 - \delta}}} \norm{  r^i_{\Gamma^c} }_2 + (1+c_T) \frac{2\norm{e}_2}{\sqrt{1 -\delta}} \numberthis \label{eq:ineq1} \, .
\end{align*}
The first relation once again follows from the triangle inequality.
The second relation follows from the fact that $\supp(x_{S^c}) \in \Msupports^+$ (since $\supp(x) \in \Msupports^+$), and therefore, $A_{S^c} x_{S^c}$ can be upper-bounded using the model-RIP.
The third follows from the fact that $x_i$ supported on $S^c$ is zero because $S$ fully subsumes the support of $x^i$.
The final relation follows from the fact that $S^c \subseteq \Gamma^c$ (see line \ref{line:cosampsupportupdate} in the algorithm).

Note that the support $\Gamma$ is defined as in Lemma \ref{lemma:headiht_rip}.
Therefore, we can use \eqref{eq:rlambdac} and bound $\norm{r^i_{\Gamma^c}}_2$ in terms of $\norm{r^i}_2$, $c_H$, and $\delta$.
Substituting into \eqref{eq:ineq1} and rearranging terms, we obtain the stated theorem.
\end{proof}

As in the analysis of AM-IHT, suppose that $e = 0$ and $\delta$ is very small.
Then, we achieve geometric convergence, i.e., $\alpha < 1$, if the approximation factors $c_T$ and $c_H$ satisfy
\begin{equation}
(1 + c_T) \sqrt{1 - c_H^2} < 1 \, , \qquad \textnormal{or equivalently, } \qquad c_H^2 > 1 - \frac{1}{(1 + c_T)^2}\, .
\label{eq:headtail_cosamp}
\end{equation}
Therefore, the conditions for convergence of AM-IHT and AM-CoSaMP are identical in this regime.
As for AM-IHT, we relax this condition for AM-CoSaMP in Section \ref{sec:boosting} and show that geometric convergence is possible for \emph{any} constants $c_T$ and $c_H$.

\section{Approximate Model-IHT with RIP-1 matrices}
\label{sec:modelrip1}

AM-IHT and AM-CoSaMP (Algorithms~\ref{alg:approxmodeliht} and~\ref{alg:approxmodelcosamp}) rely on measurement matrices satisfying the model-RIP (Definition \ref{def:modelrip}).
It is known that $m \times n$ matrices whose elements are drawn i.i.d.\ from a sub-Gaussian distribution satisfy this property with high probability while requiring only a small number of rows $m$ \cite{modelcs,jlcs}.
However, such matrices are \emph{dense} and consequently incur significant costs of $\Theta(m \cdot n)$ for both storage and matrix-vector multiplications. 

One way to circumvent this issue is to consider \emph{sparse} measurement matrices~\cite{cs_sparse_matrices_survey}.
Sparse matrices can be stored very efficiently and enable fast matrix-vector multiplication (with both costs scaling proportionally to the number of nonzeros).
However, the usual RIP does not apply for such matrices.
Instead, such matrices are known to satisfy the RIP in the $\ell_1$-norm (or {\em RIP-1}).
Interestingly, it can be shown that this property is sufficient to enable robust
sparse recovery for arbitrary signals~\cite{BGIK09}.
Moreover, several existing algorithms for sparse recovery can be modified to work with sparse measurement matrices; see~\cite{BGIK09,RF13}.

In the model-based compressive sensing context, one can analogously define the RIP-1 over structured
sparsity models as follows:
\begin{definition}[Model RIP-1]
A matrix $A\in\R^{m\times n}$ has the $(\delta, \Msupports)$-model RIP-1 if the following holds for all $x$ with $\supp(x) \in \Msupports^+$:
\begin{equation}
(1 - \delta) \norm{x}_1 \, \leq \, \norm{A x}_1 \leq (1 + \delta) \norm{x}_1 \, .
\end{equation}
\end{definition}
The paper~\cite{IR13} establishes both lower and upper bounds on
the number of measurements required to satisfy the model RIP-1 for
certain structured sparsity models.
Similar to Fact \ref{fact:modelripbound}, the paper also provides a general sampling bound based on the cardinality of the model:
\begin{fact}[Theorem 9 in \cite{IR13}]
\label{fact:modelripbound1}
Let $\Mmodel$ be a structured sparsity model and let $k$ be the size of the largest support in the model, i.e., $k = \max_{\Omega \in \Msupports} \abs{\Omega}$.
Then there is a $m \times n$ matrix satisfying the $(\delta, \Msupports)$-model RIP-1 with
\[
  m = O\parens{\frac{k}{\delta^2} \cdot \frac{\log (n /l)}{\log (k / l)}} \, ,
\]
where
\[
  l = \frac{\log \abs{\Msupports}}{\log (n/k)} \, .
\]
\end{fact}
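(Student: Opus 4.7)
The plan is to prove this via the by-now standard correspondence between unbalanced bipartite expander graphs and RIP-1 matrices, adapted to the model setting. Concretely, I will construct a random bipartite graph $G = (U, V, E)$ with $\abs{U} = n$ left vertices, $\abs{V} = m$ right vertices, and left-degree $d$, where each left vertex independently chooses $d$ neighbors uniformly at random from $V$. The candidate matrix $A \in \R^{m \times n}$ is the scaled biadjacency matrix with $A_{ji} = 1/d$ whenever $(i,j) \in E$ and $0$ otherwise.

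The first ingredient is the (by-now classical) reduction from expansion to RIP-1: if $G$ is a $(k, \epsilon)$-model expander, meaning $\abs{N(S)} \geq (1-\epsilon) d \abs{S}$ for every $S \in \Msupportsclosure$, then the scaled biadjacency matrix $A$ satisfies the $(2\epsilon, \Msupports)$-model RIP-1. This is the model-based analogue of the Berinde et al.\ argument and carries over verbatim because the RIP-1 proof only uses expansion on the supports for which the norm inequality must hold. Setting $\epsilon = \delta / 2$ reduces the theorem to producing a $(k, \epsilon)$-model expander with the claimed $m$.

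The second ingredient is a union bound showing that the random graph above is a model expander with positive probability. For a fixed $S$ with $\abs{S} = s$, a standard balls-into-bins calculation (there are $sd$ edges incident on $S$ thrown into $m$ bins) gives
\begin{equation*}
\Pr\bigl[\,\abs{N(S)} < (1 - \epsilon) s d \,\bigr] \; \leq \; \parens[\Big]{\frac{e \, s d}{\epsilon m}}^{\epsilon s d} \, .
\end{equation*}
Because every size-$s$ element of $\Msupportsclosure$ is a subset of some $\Omega \in \Msupports$, we have $\abs{\{S \in \Msupportsclosure : \abs{S} = s\}} \leq \abs{\Msupports} \binom{k}{s} \leq \abs{\Msupports} (ek/s)^s$. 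Taking logarithms in the union bound over $s = 1, \ldots, k$, the condition for failure probability less than $1$ becomes, up to constants,
\begin{equation*}
\epsilon \, s d \, \log \frac{m}{s d} \; \gtrsim \; \log \abs{\Msupports} \, + \, s \log \frac{ek}{s} \qquad \text{for all } s \leq k \, .
\end{equation*}
Substituting $\log \abs{\Msupports} = l \log(n/k)$ and optimizing over $d$, the binding case is $s = k$, which yields $d = \Theta(\log(n/l) / (\epsilon \log(k/l)))$ and $m = \Theta(k d / \epsilon) = O(k / \delta^2 \cdot \log(n/l) / \log(k/l))$ as claimed.

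The main obstacle is the parameter optimization in the third step: the two constraints (expansion for every $s$, and minimizing $m$) must be balanced simultaneously, and the ratio $\log(n/l) / \log(k/l)$ is not obvious from any single application of the bound. The subtlety is that the naive choice $d = \Theta(\log(n/k)/\epsilon)$ (which recovers the standard, non-model expander result) does not exploit the smaller cardinality $\abs{\Msupports}$; one must let $d$ shrink with $l$ while keeping $sd/m$ small enough for the Chernoff-type bound to dominate the combinatorial factor $\abs{\Msupports} \binom{k}{s}$. Verifying that the worst case is indeed $s = k$ and that smaller $s$ impose no stronger constraint requires a short monotonicity check on the inequality above, after which the stated bound follows.
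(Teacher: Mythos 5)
This statement is quoted in the paper as a Fact imported from \cite{IR13}; the paper itself contains no proof, so your proposal can only be judged against the known argument. Your high-level strategy --- a random left-$d$-regular bipartite graph, the Berinde-et-al.\ reduction from model expansion to model RIP-1 (which indeed carries over because the collision-counting argument only needs expansion of sets in $\Msupportsclosure$, which is closed under subsets), and a union bound over all $S \in \Msupportsclosure$ --- is the right one and is consistent with how such bounds are obtained. The reduction step and the per-set failure probability $\parens{esd/(\epsilon m)}^{\epsilon s d}$ are both fine.

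The gap is in the counting and the optimization, i.e., precisely the step you flag as the ``main obstacle.'' First, the bound $\abs{\{S \in \Msupportsclosure : \abs{S}=s\}} \leq \abs{\Msupports}\binom{k}{s}$ is correct but far too lossy for small $s$: with it, the $s=1$ constraint reads $\epsilon d \log(\epsilon m/(ed)) \gtrsim \log\abs{\Msupports} = l\log(n/k)$, which (with the forced relation $m = \Theta(kd/\epsilon)$) gives $d = \Omega\parens{l\log(n/k)/(\epsilon\log k)}$. For, say, $n=k^3$ and $l=\sqrt{k}$ this is $\Theta(\sqrt{k}/\epsilon)$, whereas the claimed bound is $d = \Theta(1/\epsilon)$ up to constants; so your union bound as written does not yield the theorem. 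One must use $N_s \leq \min\bigl\{\binom{n}{s},\, \abs{\Msupports}\binom{k}{s}\bigr\}$, switching to the trivial bound $\binom{n}{s}$ for $s \lesssim l$. Second, once this is done, the binding case is \emph{not} $s=k$: at $s=k$ the requirement $\abs{N(S)} \geq (1-\epsilon)dk$ merely forces $m = \Theta(kd/\epsilon)$, after which the Chernoff exponent $\log(\epsilon m/(esd)) = \log\Theta(k/s)$ degenerates to a constant and the constraint is easily satisfied. The binding scale is the crossover $s \approx l$, where $\binom{n}{s}$ and $\abs{\Msupports}\binom{k}{s}$ are comparable (both about $(n/l)^l$); there the constraint reads $\epsilon d\, l \log\Theta(k/l) \gtrsim l\log(n/l)$, which is exactly what produces $d = \Theta\parens{\tfrac{1}{\epsilon}\cdot\tfrac{\log(n/l)}{\log(k/l)}}$ and hence the stated $m$. (A sanity check with block sparsity, $b$ blocks chosen out of $n/b$ with $l=k/b$, confirms that $s=l$ --- one coordinate per block --- is where the count is largest relative to $s$.) A monotonicity check over $s \in [1,l]$ and $s \in [l,k]$ separately then shows no other scale is worse. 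So the route is right, but the specific optimization you assert is incorrect and, if carried out as written, fails to prove the bound.
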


Subsequently, the paper~\cite{BBC14} proposes a modification of  {\em expander
  iterative hard thresholding} (EIHT)~\cite{RF13} that achieves stable
recovery for arbitrary structured sparsity
models. As before, this modified algorithm only works when provided
access to {\em exact} model-projection oracles. Below, we propose a
more general algorithm suitable for model-based recovery using only
approximate projection oracles.

Before proceeding further, it is worthwhile to understand a particular class of matrices that satisfy the RIP-1.
It is known that adjacency matrices of certain carefully chosen random bipartite graphs, known as {\em bipartite expanders}, satisfy the model RIP-1~\cite{BGIK09,IR13}.
Indeed, suppose that such a matrix $A$ represents the bipartite graph $G = ([n],[m],E)$, where $E$ is the set of edges. 
For any $S \subseteq [n]$, define
$\Gamma(S)$ to be the set of nodes in $[m]$ connected to $S$ by an edge
in $E$. Therefore, we can define the {\em median operator} $\med{u} :
\reals^m \rightarrow \reals^n$ for any $u \in \R^m$ component-wise as follows:
\[
[\med{u}]_i = \textrm{median} [u_j : j \in \Gamma(\{i\})] \; .
\]
This operator is crucial in our algorithm and proofs below.

\begin{algorithm}[!t]
\caption{AM-IHT with RIP-1}
\label{alg:approxmodeliht_rip1}
\begin{algorithmic}[1]
\Function{AM-IHT-RIP-1}{$y,A,t$}
\State $x^0 \gets 0$
\For{$i \gets 0, \ldots, t$}
\State $x^{i+1} \gets T(x^i + H(\med{y - A x^i}))$
\EndFor
\State \textbf{return} $x^{t+1}$
\EndFunction
\end{algorithmic}
\end{algorithm}

We now propose a variant of AM-IHT (Algorithm~\ref{alg:approxmodeliht}) that is suitable when the measurement matrix $A$ satisfies the RIP-1.
The description of this new version is provided as Algorithm \ref{alg:approxmodeliht_rip1}.
Compared to AM-IHT, the important modification in the RIP-1 algorithm is the use of the median operator $\med{\cdot}$ instead of the transpose of the measurement matrix $A$. 

We analytically characterize the convergence behavior of Algorithm \ref{alg:approxmodeliht_rip1}. 
First, we present the following Lemma, which is proved in \cite{BBC14} based on \cite{RF13}.
\begin{lemma}[Lemma 7.2 in \cite{BBC14}]
\label{lem:rip1bound}
Suppose that $A$ satisfies the $(\delta,\Msupports)$-model-RIP-1. 
Then, for any vectors $x \in \reals^n$, $e \in \reals^m$, and any support $S \in \Msupports^+$, 
\[
\norm{[x - \med{Ax_S + e}]_S}_1 \leq \rho_0  \norm{x_S}_1 + \tau_0 \norm{e}_1 \, .
\]
Here, $\rho_0 = 4\delta/(1-4\delta)$ and $\tau_0$ is a positive scalar that depends
on $\delta$.  
\end{lemma}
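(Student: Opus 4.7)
The plan is to analyze the median operator on a coordinate-by-coordinate basis over $S$ and then to sum the resulting per-coordinate bounds. Since the matrices $A$ that achieve the model-RIP-1 via Fact~\ref{fact:modelripbound1} are (up to scaling) adjacency matrices of constant-left-degree $d$ bipartite graphs that expand well on supports in $\Msupportsclosure$, I would fix an index $i \in S$ and, for each $j \in \Gamma(i)$, decompose the measurement as $(A x_S + e)_j = A_{j,i}\, x_i + \mu^i_j$, where the \emph{interference} $\mu^i_j := \sum_{i' \in S \setminus \{i\}} A_{j,i'}\, x_{i'} + e_j$ collects everything unrelated to $x_i$. The $i$-th entry of the error, $[\med{A x_S + e}]_i - x_i$, is then (up to normalization) precisely the median of $\{\mu^i_j : j \in \Gamma(i)\}$, so the task reduces to controlling $\sum_{i \in S}$ of these medians.

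The next step is to translate the model-RIP-1 hypothesis into a quantitative bound on the aggregate interference. The standard expansion-implies-$\ell_1$-isometry lemma of~\cite{BGIK09}, lifted to the model setting in~\cite{IR13}, shows that for $S \in \Msupportsclosure$ the number of ``collision'' edges --- edges $(i', j) \in E$ with $i' \in S$ whose endpoint $j$ is also adjacent to some other $i'' \in S \setminus \{i'\}$ --- is at most $O(\delta) \cdot d \abs{S}$, and that the total $\ell_1$-mass carried by such collision edges is at most $O(\delta) \norm{x_S}_1$. Combined with the contribution of $e$, this yields an aggregate bound of the form $\sum_{i \in S} \sum_{j \in \Gamma(i)} \abs{\mu^i_j} \le O(\delta)\, d\, \norm{x_S}_1 + d \norm{e}_1$ up to normalization.

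To pass from this aggregate bound to a sum-of-medians bound, I would exploit that the median of $d$ numbers is at most $(2/d)$ times the sum of their absolute values, which after summing over $i \in S$ yields an inequality of the form $\norm{[x - \med{A x_S + e}]_S}_1 \le O(\delta) \norm{x_S}_1 + O(1) \norm{e}_1$. The sharpening to exactly $\rho_0 = 4\delta/(1-4\delta)$ then comes from a more careful, two-sided application of model-RIP-1: one uses both $\norm{A x_S}_1 \le (1+\delta)\norm{x_S}_1$ on the input side and a matching lower bound on the mass preserved after removing collisions, which produces a recursive-type inequality of the form $L \le 4 \delta (\norm{x_S}_1 + L) + \tau_0 \norm{e}_1$ for $L := \norm{[x - \med{A x_S + e}]_S}_1$, and this solves to the stated form.

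The main obstacle will be the careful bookkeeping needed to obtain the sharp constants rather than merely an $O(\delta)$ bound. This requires (i) invoking the \emph{model} RIP-1 directly on $S \in \Msupportsclosure$, avoiding a crude union bound over $k$-sparse supports; (ii) a tight threshold argument in the median step, since one must show that a strict majority --- not just a bare majority --- of the measurements $j \in \Gamma(i)$ have small interference; and (iii) handling the nonlinearity of the median, which in practice motivates reducing to an intermediate shrinkage-style operator as in~\cite{RF13,BBC14} whose $\ell_1$-contraction properties can be combined with the expansion analysis to recover the precise constants $\rho_0 = 4\delta/(1-4\delta)$ and $\tau_0 = \Theta(1)$.
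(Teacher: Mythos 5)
The first thing to note is that the paper does not prove this statement at all: it is imported verbatim as Lemma~7.2 of~\cite{BBC14} (which in turn builds on~\cite{RF13}), so there is no in-paper argument to match. Your sketch is a reconstruction of that external proof, and its skeleton --- write $(Ax_S+e)_j = A_{j,i}x_i + \mu^i_j$ for $j \in \Gamma(\{i\})$, observe that the $i$-th error is the median of the interferences $\mu^i_j$, and control the interference using the expansion/collision structure that underlies the model-RIP-1 for adjacency matrices --- is indeed the skeleton of the argument in the cited works.

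However, the quantitative mechanism you propose for the central step does not work. Bounding each median by $(2/d)$ times the sum of the $\abs{\mu^i_j}$ and then summing over $i \in S$ fails in two places. For the noise, $\sum_{i \in S}\sum_{j \in \Gamma(\{i\})}\abs{e_j} = \sum_j \abs{e_j}\cdot\abs{\Gamma^{-1}(j)\cap S}$, and a single right node $j$ may be adjacent to many elements of $S$ (expansion bounds the \emph{total} number of collision edges, not the degree of any one $j$ into $S$), so after multiplying by $2/d$ you only get a bound of order $(\abs{S}/d)\norm{e}_1$ rather than $\tau_0\norm{e}_1$ with $\tau_0$ depending only on $\delta$. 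For the signal term, the same issue appears on the left: expansion guarantees that the collision edges number at most $O(\delta)d\abs{S}$ \emph{in aggregate}, but an individual coordinate $i'$ can have all $d$ of its edges in collisions, so the weighted sum $\sum_{i'}\abs{x_{i'}}\cdot(\textnormal{collisions at } i')$ is not bounded by $O(\delta)d\norm{x_S}_1$. The correct proof instead counts the coordinates $i$ for which at least half of the neighbors $j \in \Gamma(\{i\})$ are ``corrupted'' (by a collision or by large $\abs{e_j}$), shows via a Markov-type argument on the collision count that there are at most $O(\delta)\abs{S}$ such coordinates, and handles good and bad coordinates separately; this majority-threshold argument is the proof, not a constant-sharpening refinement of the averaging argument, and it is also where the self-referential inequality yielding the $1/(1-4\delta)$ factor actually arises. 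You correctly flag this threshold step as an ``obstacle,'' but deferring it leaves the main line of your argument unable to produce either the $\rho_0\norm{x_S}_1$ term or a noise term independent of $\abs{S}$.
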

Armed with this Lemma, we now prove the main result of this section.
We make similar assumptions as in Section \ref{sec:amiht}, this time using the model-RIP-1 and approximate projection oracles for the $\ell_1$-norm:
(i) $x \in \R^n$ and $x \in \Mmodel$.
(ii) $y = A x + e$ for an arbitrary $e \in \R^m$ (the measurement noise).
(iii) $T$ is a $(c_T, \Msupports, \Msupports_T, 1)$-tail-approximation oracle.
(iv) $H$ is a $(c_H, \Msupports_T \supportplus \Msupports, \Msupports_H, 1)$-head-approximation-oracle.
(v) $A$ has the $(\delta, \Msupports \supportplus \Msupports_T \supportplus \Msupports_H)$-model-RIP-1.
Then, we obtain:

\begin{theorem}[Geometric convergence of AM-IHT with RIP-1]
\label{thm:approxmodeliht_rip1}
Let $r^i = x-x^i$, where $x^i$ is the signal estimate computed by AM-IHT-RIP-1 in iteration $i$.
Let $\rho_0, \tau_0$ be as defined in Lemma~\ref{lem:rip1bound}.
Then, AM-IHT-RIP-1  exhibits the following convergence property:
\[
\norm{r^{i+1}}_1 \leq \rho \norm{r^i}_1 + \tau \norm{e}_1 \, ,
\]
where
\begin{align*}
\rho &= (1+c_T) (2\rho_0 + 1 - c_H (1 - \rho_0)) \, , \\
\tau &= (1+c_T)(2+c_H)\tau_0 \, .
\end{align*}
\end{theorem}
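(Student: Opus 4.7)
The plan is to mirror the $\ell_2$-analysis of AM-IHT (Theorem~\ref{thm:amiht}), with Lemma~\ref{lem:rip1bound} replacing the role played by the RIP consequences of Fact~\ref{fact:modelrip}. Set $v^i = \med{y - A x^i} = \med{A r^i + e}$, let $a = x^i + H(v^i)$ so that $x^{i+1} = T(a)$, and let $\Gamma = \supp(H(v^i)) \in \Msupportsclosure_H$. The proof proceeds in three steps: peel off the tail oracle, decompose the resulting error along $\Gamma$, and use the head-approximation property to control the off-$\Gamma$ mass of $r^i$.

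First, exactly as in the derivation leading to~\eqref{eq:tail_iht}, the tail-approximation property applied with the competitor support $\supp(x) \in \Msupports$ yields
\begin{equation*}
\norm{r^{i+1}}_1 \;=\; \norm{x - T(a)}_1 \;\leq\; (1+c_T)\,\norm{x - a}_1 \;=\; (1+c_T)\,\norm{r^i - H(v^i)}_1 .
\end{equation*}
Since $H(v^i) = v^i_\Gamma$ when viewed as a vector, I split the right-hand side along $\Gamma$ and $\Gamma^c$:
\begin{equation*}
\norm{r^i - H(v^i)}_1 \;=\; \norm{r^i_{\Gamma^c}}_1 \,+\, \norm{(r^i - v^i)_\Gamma}_1 .
\end{equation*}
The on-$\Gamma$ term is bounded directly by Lemma~\ref{lem:rip1bound}: noting that $\supp(r^i) \in \Msupports \supportplus \Msupports_T$ and $\Gamma \in \Msupports_H$, any support containing $\supp(r^i) \cup \Gamma$ lies in $(\Msupports \supportplus \Msupports_T \supportplus \Msupports_H)^+$, so the model-RIP-1 of $A$ lets us apply the lemma with input vector $r^i$ to obtain $\norm{(r^i - v^i)_\Gamma}_1 \leq \rho_0 \norm{r^i}_1 + \tau_0 \norm{e}_1$.

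The main work is bounding $\norm{r^i_{\Gamma^c}}_1$, which is where the head oracle enters. Applying Lemma~\ref{lem:rip1bound} on $\supp(r^i)$ and using the reverse triangle inequality gives $\norm{v^i_{\supp(r^i)}}_1 \geq (1-\rho_0)\norm{r^i}_1 - \tau_0 \norm{e}_1$. Since $H$ is a $(c_H, \Msupports_T \supportplus \Msupports, \Msupports_H, 1)$-head-approximation oracle and $\supp(r^i) \in \Msupports_T \supportplus \Msupports$, this translates to $\norm{v^i_\Gamma}_1 \geq c_H[(1-\rho_0)\norm{r^i}_1 - \tau_0 \norm{e}_1]$. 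In the opposite direction, the triangle inequality combined with Lemma~\ref{lem:rip1bound} applied on $\Gamma$ gives $\norm{v^i_\Gamma}_1 \leq \norm{r^i_\Gamma}_1 + \rho_0 \norm{r^i}_1 + \tau_0 \norm{e}_1$. Combining the two inequalities and then writing $\norm{r^i_{\Gamma^c}}_1 = \norm{r^i}_1 - \norm{r^i_\Gamma}_1$ yields
\begin{equation*}
\norm{r^i_{\Gamma^c}}_1 \;\leq\; \bigl[1 + \rho_0 - c_H(1-\rho_0)\bigr]\,\norm{r^i}_1 \,+\, (1+c_H)\,\tau_0\,\norm{e}_1 .
\end{equation*}

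Summing this with the $\Gamma$-bound and multiplying by $(1+c_T)$ produces exactly the claimed rates $\rho = (1+c_T)(2\rho_0 + 1 - c_H(1-\rho_0))$ and $\tau = (1+c_T)(2+c_H)\tau_0$. The main subtlety to verify is the double use of the median lemma in the head step: because $H$ ranks supports according to the proxy $v^i$ rather than $r^i$, the error $\rho_0 \norm{r^i}_1 + \tau_0 \norm{e}_1$ is paid once when moving from $r^i$ to $v^i$ on $\supp(r^i)$, and again when translating the head-approximation guarantee on $v^i_\Gamma$ back to $r^i_\Gamma$. This is precisely what generates the $2\rho_0$ coefficient and the $(2+c_H)\tau_0$ noise factor in the final recurrence.
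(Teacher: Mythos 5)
Your proposal is correct and follows essentially the same route as the paper's proof: the same tail-oracle peeling, the same split of $\norm{r^i - H(v^i)}_1$ into the on-$\Gamma$ and off-$\Gamma$ pieces, and the same three invocations of Lemma~\ref{lem:rip1bound} (on $\Omega$ for the lower bound on $\norm{v^i_\Omega}_1$, on $\Gamma \cup \Omega$ for the upper bound on $\norm{v^i_\Gamma}_1$, and on $\Gamma$ for the residual term), yielding identical constants. Your closing remark correctly identifies where the $2\rho_0$ and $(2+c_H)\tau_0$ factors come from, which matches the paper's derivation of~\eqref{eq:comp} and the final recurrence.
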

\begin{proof}
Let $a_i = x_i + H(\med{y - A x_i})$. The triangle inequality gives:
\begin{align*}
\norm{r^{i+1}}_1 &= \norm{x - x^{i+1}}_1 \\
              &\leq \norm{x - a^i}_1 + \norm{x^{i+1} - a^i}_1 \\
              &\leq (1+c_T) \norm{x - a^i}_1 \\
              & \leq (1+c_T) \norm{x - x^i - H(\med{y - A x^i})}_1 \\
              &  = (1+c_T) \norm{r^i - H(\med{Ar^i + e})}_1 \, .
\end{align*}
Let $v = \med{ A r^i + e}$, $\Omega = \textrm{supp}(r^i)$, and $\Gamma$ be the support returned by the head oracle $H$.
We have:
\begin{equation}
\norm{H(v)}_1 = \norm{v_\Gamma}_1 \geq c_H \norm{v_\Omega}_1 \, ,
\label{eq:Hv1}
\end{equation}
due to the head-approximation property of $H$. 

On the other hand, we also have 
\begin{align*}
\norm{v_\Omega - r^i}_1 &= \norm{(\med{Ar^i + e} - r^i)_\Omega}_1 \\
&\leq  \norm{(\med{Ar^i + e} - r^i)_{\Omega \cup \Gamma}}_1  \\
&\leq  \rho_0 \norm{r^i}_1 + \tau_0 \norm{e}_1 \, .
\end{align*}
where the last inequality follows from Lemma~\ref{lem:rip1bound} (note that we use the lemma for the model $\Msupports \supportplus \Msupports_T \supportplus \Msupports_H$).
Further, by applying the triangle inequality again and combining with~\eqref{eq:Hv1}, we get
\begin{equation}
\label{eq:head_rip1}
\norm{H(v)}_1 \geq c_H (1 - \rho_0) \norm{r^i}_1 - c_H \tau_0 \norm{e}_1 \, .
\end{equation}
We also have the following series of inequalities:
\begin{align*}
\norm{H(v)}_1 &= \norm{H(v) - r^i_{\Gamma} + r^i_\Gamma}_1 \\
              &\leq \norm{v_\Gamma - r^i_\Gamma}_1 + \norm{r^i_\Gamma}_1 \\
              &\leq \norm{v_{\Gamma \cup \Omega}-r^i_{\Gamma \cup
                  \Omega}}_1 + \norm{r^i_\Gamma}_1 \\
& = \norm{(\med{Ar^i + e} - r^i)_{\Omega \cup \Gamma}}_1 +
\norm{r^i_\Gamma}_1 \\
&\leq \rho_0 \norm{r^i}_1 + \tau_0 \norm{e}_1 + \norm{r^i_\Gamma}_1 \, .
 \end{align*}
Here, we have once again invoked
Lemma~\ref{lem:rip1bound}. Moreover, $\norm{r^i_\Gamma}_1 = \norm{r^i}_1 -
\norm{r^i_{\Gamma^c}}_1$. Combining with \eqref{eq:head_rip1} and rearranging terms, we get:
\begin{equation}
\label{eq:comp}
\norm{r^i_{\Gamma^c}}_1 \leq (\rho_0 + 1 - c_H (1 - \rho_0))\norm{r^i}_1 + (1+c_H) \tau_0 \norm{e}_1 \, .
\end{equation}
Recall that 
\begin{align*}
\norm{r^{i+1}}_1 &\leq (1+c_T)\norm{r^i - H(v)}_1 \\
&= (1+c_T) \left( \norm{r^i_\Gamma - v_\Gamma}_1 + \norm{r^i_{\Gamma^c}}_1 \right) ,
\end{align*}
since $v_\Gamma = H(v) = H(\med{Ar^i + e})$. Invoking Lemma~\ref{lem:rip1bound}
one last time and combining with \eqref{eq:comp}, we obtain
\begin{align*}
\norm{r^{i+1}}_1 &\leq (1+c_T) \left[ \rho_0 \norm{r^i}_1 + \tau_0 \norm{e}_1 + (\rho_0
+ 1 - c_H(1 - \rho_0))\norm{r^i}_1 + (1+c_H) \tau_0 \norm{e}_1 \right] \\
&\leq (1+c_T) (2\rho_0 + 1 - c_H (1 - \rho_0)) \norm{r^i}_1 +
(1+c_T)(2+c_H)\tau_0 \norm{e}_1 \, , 
\end{align*}
 as claimed.
\end{proof}

Once again, if $e = 0$ and $\rho_0$ is made sufficiently small, AM-IHT with RIP-1 achieves geometric convergence to the
true signal $x$ provided that $c_H > 1 - 1/(1+c_T)$. Thus, we have developed an analogue of AM-IHT that works purely with the RIP-1 assumption on the
measurement matrix and hence is suitable for recovery using
sparse matrices. It is likely that a similar analogue can be developed
for AM-CoSaMP, but we will not pursue this direction here.

\section{Improved Recovery via Boosting}
\label{sec:boosting}
As stated in Sections \ref{sec:amiht} and \ref{sec:amcosamp}, AM-IHT and AM-CoSaMP require stringent assumptions on the head- and tail-approximation factors $c_H$ and $c_T$.
The condition \eqref{eq:headtail_iht} indicates that for AM-IHT to converge, the head- and tail-approximation factors must be tightly coupled. 
Observe that by definition, $c_T$ is no smaller than $1$.
Therefore, $c_H$ must be at least $\sqrt{3}/2$.
If $c_T$ is large (i.e., if the tail-approximation oracle gives only a crude approximation), then the head-approximation oracle needs to be even more precise.
For example, if $c_T = 10$, then $c_H > 0.995$, i.e., the head approximation oracle needs to be very accurate.
Such a stringent condition can severely constrain the choice of approximation algorithms.

In this section, we overcome this barrier by demonstrating how to ``boost'' the approximation factor of any given head-approximation algorithm.
Given a head-approximation algorithm with arbitrary approximation factor $c_H$, we can boost its approximation factor to any arbitrary constant $c_H' < 1$.
Our approach requires only a constant number of invocations of the original head-approximation algorithm and inflates the sample complexity of the resulting output model only by a constant factor.
Combining this boosted head-approximation algorithm with AM-IHT or AM-CoSaMP, we can provide an overall recovery scheme for approximation algorithms with \emph{arbitrary} approximation constants $c_T$ and $c_H$.
This is a much weaker condition than \eqref{eq:headtail_iht} and therefore significantly extends the scope of our framework for model-based compressive sensing with approximate projection oracles.

We achieve this improvement by iteratively applying the head-approximation algorithm to the residual of the currently selected support.
Each iteration guarantees that we add another $c_H$-fraction of the best remaining support to our result.
Algorithm \ref{alg:boosting} contains the corresponding pseudo code and Theorem \ref{thm:boosting} the main guarantees.

\begin{algorithm}[!t]
\caption{Boosting for head-approximation algorithms}
\label{alg:boosting}
\begin{algorithmic}[1]
\Function{BoostHead}{$x,H,t$}
\State $\Omega_0 \gets \{ \}$
\For{$i \gets 1, \ldots, t$}
  \State $\Lambda_i \gets H(x_{[n] \setminus \Omega_{i - 1}})$ \label{line:usehead}
  \State $\Omega_{i} \gets \Omega_{i-1} \cup \Lambda_i$
\EndFor
\State \textbf{return} $\Omega_{t}$
\EndFunction
\end{algorithmic}
\end{algorithm}

\begin{theorem}
\label{thm:boosting}
Let $H$ be a $(c_H, \Msupports, \Msupports_H, p)$-head-approximation algorithm with $0 < c_H \leq 1$ and $p \geq 1$.
Then \textsc{BoostHead}$(x,H,t)$ is a $((1 - (1 - c^p_H)^t)^{1/p}, \Msupports, \Msupports_H^{\oplus t}, p)$-head-approximation algorithm.
Moreover, \textsc{BoostHead} runs in time $O(t \cdot T_H)$, where $T_H$ is the time complexity of $H$.
\end{theorem}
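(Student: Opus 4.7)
The plan is to track the potential $V_i = \norm{x_{\Omega_i}}_p^p$ and show it approaches the optimum $V = \norm{x_{\Omega^*}}_p^p$ geometrically in the ``gap'' $V - V_i$, where $\Omega^* \in \Msupports$ maximizes $\norm{x_\Omega}_p$. Working with the $p$-th power is natural because for disjoint sets $\ell_p$-norms compose additively in this sense, which is what lets the head-approximation guarantee chain cleanly across iterations.

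First I would pin down what happens in a single iteration. Set $r = x_{[n] \setminus \Omega_{i-1}}$; the coordinates of $r$ on $\Omega_{i-1}$ are zero, so for any support $\Lambda$ we have $\norm{r_\Lambda}_p = \norm{x_{\Lambda \setminus \Omega_{i-1}}}_p$. In particular, applying the head-approximation property of $H$ at line~\ref{line:usehead} against the competitor $\Omega^* \in \Msupports$ gives
\begin{equation*}
\norm{x_{\Lambda_i \setminus \Omega_{i-1}}}_p \;=\; \norm{r_{\Lambda_i}}_p \;\geq\; c_H \norm{r_{\Omega^*}}_p \;=\; c_H \norm{x_{\Omega^* \setminus \Omega_{i-1}}}_p.
\end{equation*}
Since $V_i = V_{i-1} + \norm{x_{\Lambda_i \setminus \Omega_{i-1}}}_p^p$ and $\norm{x_{\Omega^* \setminus \Omega_{i-1}}}_p^p = \norm{x_{\Omega^*}}_p^p - \norm{x_{\Omega^* \cap \Omega_{i-1}}}_p^p \geq V - V_{i-1}$, taking $p$-th powers and rearranging yields the key recurrence
\begin{equation*}
V - V_i \;\leq\; (1 - c_H^p)\,(V - V_{i-1}).
\end{equation*}

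Iterating from $V_0 = 0$ gives $V - V_t \leq (1 - c_H^p)^t V$, hence $\norm{x_{\Omega_t}}_p \geq (1 - (1 - c_H^p)^t)^{1/p} \norm{x_{\Omega^*}}_p$, which is exactly the claimed approximation ratio. The output-sparsity condition follows because each $\Lambda_i$ lies in $\Msupportsclosure_H$, so $\Omega_t = \bigcup_{i=1}^t \Lambda_i$ lies in $(\Msupports_H^{\oplus t})^+$ by definition of model addition. The running time bound is immediate: the loop performs $t$ invocations of $H$ and only bookkeeping operations otherwise.

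The only slightly subtle step is the inequality $\norm{x_{\Omega^* \cap \Omega_{i-1}}}_p^p \leq V_{i-1}$, which uses that $\Omega^* \cap \Omega_{i-1} \subseteq \Omega_{i-1}$ together with monotonicity of $\ell_p$-norms under support restriction; and the fact that we may apply the head guarantee of $H$ against $\Omega^* \in \Msupports$ even though the input is the restricted signal $x_{[n] \setminus \Omega_{i-1}}$ rather than $x$ itself. Neither should cause difficulty, so the proof reduces essentially to the one-line recurrence above.
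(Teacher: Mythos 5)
Your proposal is correct and is essentially the same argument as the paper's: the paper states the invariant $\norm{x_\Gamma}_p^p - \norm{x_{\Omega_{i-1}}}_p^p \leq (1-c_H^p)^{i-1}\norm{x_\Gamma}_p^p$ and proves it by induction, which is exactly your gap recurrence $V - V_i \leq (1-c_H^p)(V - V_{i-1})$, derived from the same two facts (the head guarantee applied to the restricted signal against the optimal competitor, and $\norm{x_{\Omega^* \cap \Omega_{i-1}}}_p^p \leq V_{i-1}$). No gaps; only the packaging differs.
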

\begin{proof}
Let $\Gamma \in \Msupports$ be an optimal support, i.e., $\norm{x_\Gamma}_p = \max_{\Omega \in \Msupports} \norm{x_\Omega}_p$.
We now prove that the following invariant holds at the beginning of iteration $i$:
\begin{equation}
\label{eq:invariant}
  \norm{x_\Gamma}_p^p - \norm{x_{\Omega_{i-1}}}_p^p \leq  (1 - c_H^p)^{i-1} \norm{x_\Gamma}_p^p \; .
\end{equation}
Note that the invariant (Equation \ref{eq:invariant}) is equivalent to $\norm{x_{\Omega_{i-1}}}_p^p \geq \parens{1 - (1-c_H^p)^{i-1}} \norm{x_\Gamma}_p^p$.
For $i = t+1$, this gives the head-approximation guarantee stated in the theorem.

For $i=1$, the invariant directly follows from the initialization.

Now assume that the invariant holds for an arbitrary $i \geq 1$.
From line \ref{line:usehead} we have
\begin{align*}
  \norm{(x_{[n] \setminus \Omega_{i-1}})_{\Lambda_i}}_p^p &\geq c_H^p \max_{\Omega \in \Msupports} \norm{(x_{[n] \setminus \Omega_{i-1}})_\Omega}_p^p \\
  \norm{x_{\Lambda_i \setminus \Omega_{i-1}}}_p^p &\geq c_H^p \max_{\Omega \in \Msupports} \norm{(x - x_{\Omega_{i-1}})_\Omega}_p^p \\
            &\geq c_H^p \norm{(x - x_{\Omega_{i-1}})_\Gamma}_p^p \\
            &= c_H^p \norm{x_\Gamma - x_{\Omega_{i-1} \cap \Gamma}}_p^p \\
            &= c_H^p \parens{\norm{x_\Gamma}_p^p - \norm{x_{\Omega_{i-1} \cap \Gamma}}_p^p} \\
            &\geq c_H^p \parens{\norm{x_\Gamma}_p^p - \norm{x_{\Omega_{i-1}}}_p^p} \; . \numberthis \label{eq:boostingmax}
\end{align*}

We now prove the invariant for $i+1$:
\begin{align*}
  \norm{x_\Gamma}_p^p - \norm{x_{\Omega_i}}_p^p &= \norm{x_\Gamma}_p^p - \norm{x_{\Omega_{i-1}}}_p^p - \norm{x_{\Lambda_i \setminus \Omega_{i-1}}}_p^p \\
        &\leq \norm{x_\Gamma}_p^p - \norm{x_{\Omega_{i-1}}}_p^p - c_H^p \parens{\norm{x_\Gamma}_p^p - \norm{x_{\Omega_{i - 1}}}_p^p} \\
        &= (1 - c_H^p) \parens{\norm{x_\Gamma}_p^p - \norm{x_{\Omega_{i - 1}}}_p^p} \\
        &\leq (1 - c_H^p)^{i+1} \norm{x_\Gamma}_p^p \; .
\end{align*}
The second line follows from \eqref{eq:boostingmax} and the third line from the invariant.

Since $\Lambda_i \in \Msupports_H$, we have $\Omega_t \in \Msupports_H^{\oplus t}$.
The time complexity of \textsc{BoostHead} follows directly from the definition of the algorithm.
\end{proof}

We now use Theorem \ref{thm:boosting} to relax the conditions on $c_T$ and $c_H$ in Corollary \ref{cor:amiht}.
As before, we assume that we have compressive measurements of the form $y = Ax + e$, where $x \in \Mmodel$ and $e$ is arbitrary measurement noise.

\begin{corollary}
\label{cor:boosting}
Let $T$ and $H$ be approximate projection oracles with $c_T \geq 1$ and $0 < c_H < 1$.
Moreover, let $\delta$ be the model-RIP constant of the measurement matrix $A$ and let
\begin{align*}
  \gamma &= \frac{\sqrt{1 - \left( \frac{1}{1 + c_T} - \delta \right)^2} + \delta}{1 - \delta} \, ,\\
  t &= \ceil{\frac{\log (1-\gamma^2)}{\log (1-c^2_H)}}  + 1\, .
\end{align*}
We assume that $\delta$ is small enough so that $\gamma < 1$ and that $A$ satisfies the model-RIP for $\Msupports \supportplus \Msupports_T \supportplus \Msupports_H^{\supportplus t}$.
Then AM-IHT with $T$ and \textsc{BoostHead}($x, H, t$) as projection oracles returns a signal estimate $\widehat{x}$ satisfying
\[
  \norm{x - \widehat{x}}_2 \leq C \norm{e}_2 
\]
after $O(\log{\frac{\norm{x}_2}{\norm{e}_2}})$ iterations.
The constants in the error and runtime bounds depend only on $c_T$, $c_H$, and $\delta$.
\end{corollary}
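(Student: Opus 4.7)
The plan is to combine the boosting construction from Theorem \ref{thm:boosting} with the convergence analysis of AM-IHT in Theorem \ref{thm:amiht} and Corollary \ref{cor:amiht}. The composite oracle \textsc{BoostHead}$(\cdot, H, t)$ is, by Theorem \ref{thm:boosting}, a $(c_H', \Msupports, \Msupports_H^{\supportplus t}, 2)$-head-approximation oracle with boosted factor $c_H' = (1 - (1 - c_H^2)^t)^{1/2}$. Its outputs live in the enlarged support family $\Msupports_H^{\supportplus t}$, which is exactly why the corollary requires $A$ to satisfy the model-RIP on $\Msupports \supportplus \Msupports_T \supportplus \Msupports_H^{\supportplus t}$, matching the hypothesis of Theorem \ref{thm:amiht} when the boosted oracle is plugged in.

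Next, I would verify that the specific value of $t$ in the statement forces $c_H' > \gamma$, which is precisely the threshold needed for the AM-IHT recurrence constant $\alpha$ from Theorem \ref{thm:amiht} to satisfy $\alpha < 1$. Unwinding the definition of $\alpha$ with $c_H$ replaced by $c_H'$, the condition $(1+c_T)[\delta + \sqrt{1 - \alpha_0^2}] < 1$ (with $\alpha_0 = c_H'(1-\delta) - \delta$) is equivalent, after isolating $\sqrt{1-\alpha_0^2}$ and squaring, to
\[
c_H'(1-\delta) - \delta \;>\; \sqrt{1 - \bigl(\tfrac{1}{1+c_T} - \delta\bigr)^2},
\]
i.e.\ $c_H' > \gamma$ for the $\gamma$ defined in the statement; the smallness assumption on $\delta$ ensures $\gamma < 1$ so this is achievable. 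Squaring the boosted bound, $c_H' > \gamma$ becomes $(1 - c_H^2)^t < 1 - \gamma^2$, and taking logarithms (both sides are negative, flipping the inequality) yields $t > \log(1 - \gamma^2)/\log(1 - c_H^2)$. Therefore the choice $t = \lceil \log(1-\gamma^2)/\log(1-c_H^2) \rceil + 1$ suffices, with the extra $+1$ providing strict inequality.

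Once $c_H' > \gamma$ is established, Theorem \ref{thm:amiht} gives geometric convergence of AM-IHT (with \textsc{BoostHead}$(\cdot, H, t)$ and $T$ as the two projection oracles) at rate $\alpha < 1$, with residual coefficient $\beta$ also determined by $c_T, c_H', \delta$. Invoking Corollary \ref{cor:amiht} with these constants directly yields a signal estimate $\widehat{x}$ with $\norm{x - \widehat{x}}_2 \leq (1 + \beta/(1-\alpha))\,\norm{e}_2$ after $\lceil \log(\norm{x}_2/\norm{e}_2)/\log(1/\alpha) \rceil = O(\log(\norm{x}_2/\norm{e}_2))$ iterations; the constant $C = 1 + \beta/(1-\alpha)$ depends only on $c_T$, $c_H$, and $\delta$ since $t$ itself depends only on those parameters.

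The main (indeed only) obstacle is the algebraic verification that the stated formula for $t$ really guarantees $c_H' > \gamma$ and hence $\alpha < 1$; the rest is an invocation of previously proved results. A minor subtlety worth flagging in the writeup is that because every iteration of the boosted oracle is applied to a residual whose support is contained in $[n]$ (and not already in the model), Theorem \ref{thm:boosting} applies verbatim; there is no need to recheck head-approximation on restricted inputs. The total per-outer-iteration cost is $O(t)$ calls to $H$ plus one call to $T$, and since $t$ is an absolute constant depending only on $c_T$, $c_H$, and $\delta$, the runtime bound in the corollary follows as well.
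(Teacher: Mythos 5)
Your proposal is correct and follows essentially the same route as the paper's proof: establish that $c_H' > \gamma$ is the threshold for $\alpha < 1$ in Theorem~\ref{thm:amiht}, check that the stated $t$ achieves this via the boosted factor $c_H' = \sqrt{1-(1-c_H^2)^t}$ from Theorem~\ref{thm:boosting}, and then invoke Corollary~\ref{cor:amiht}. You simply spell out the algebra that the paper compresses into ``a simple calculation,'' and your derivation of the equivalence $\alpha < 1 \iff c_H' > \gamma$ and of the bound $t > \log(1-\gamma^2)/\log(1-c_H^2)$ is accurate.
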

\begin{proof}
In order to use Corollary \ref{cor:amiht}, we need to show that $\alpha < 1$.
Recall that
\[
  \alpha = (1 + c_T) (\delta + \sqrt{1 - (c_H (1-\delta) - \delta)^2}) \, .
\]
A simple calculation shows that a head-approximation oracle with $c'_H > \gamma$ achieves $\alpha < 1$.

Theorem \ref{thm:boosting} shows that boosting the head-approximation oracle $H$ with $t'$ iterations gives a head-approximation factor of
\[
  c'_H = \sqrt{1 - (1-c^2_H)^{t'}} \, .
\]
Setting $t' = t$ as defined in the theorem yields $c'_H > \gamma$.
We can now invoke Corollary \ref{cor:amiht} for the recovery guarantee of AM-IHT.
\end{proof}

Analogous corollaries can be proven for AM-CoSaMP (Section \ref{sec:amcosamp}) and AM-IHT with RIP-1 (Section \ref{sec:modelrip1}).
We omit detailed statements of these results here.

\section{Case Study: The CEMD model}
\label{s:emd}
As an instantiation of our main results, we discuss a special structured sparsity model known as the \emph{Constrained EMD} model~\cite{HIS}. 
A key ingredient in the model is the Earth Mover's Distance (EMD), also known as the Wasserstein metric or Mallows distance~\cite{emdismallows}:
\begin{definition}[EMD]\label{def:emd}
The EMD of two finite sets $A, B \subset \N$ with $|A| = |B|$ is defined as
\begin{equation}
\EMD(A,B) = \min_{\pi: A \rightarrow B} \sum_{a \in A} \abs{a - \pi(a)} \; ,
\end{equation}
where $\pi$ ranges over all one-to-one mappings from $A$ to $B$.
\end{definition}
Observe that $\EMD(A,B)$ is equal to the cost of a min-cost matching between $A$ and $B$.
Now, consider the case where the sets $A$ and $B$ are the {\em supports} of two exactly $k$-sparse signals, so that $|A| = |B| = k$.
In this case, the EMD not only measures how many indices change, but also how far the supported indices move.
This notion can be generalized from pairs of signals to an \emph{ensemble} of sparse signals.
Figure \ref{fig:supportemd} illustrates the following definition.

\begin{figure}[t!]
\centering
\begin{tikzpicture}[scale=0.4]
\tikzstyle{line}=[very thick,-,shorten <=2pt,shorten >=2pt]
\tikzstyle{support}=[fill=black,draw=black];

\node [anchor=south] (xilabel) at (0.5,8) {$X_{*,1}$};
\draw[thick] (0,0) grid +(1,8);
\draw[support] (0.5,5.5) circle (.3);
\draw[support] (0.5,2.5) circle (.3);
\draw[support] (0.5,0.5) circle (.3);

\node [anchor=south] (xip1label) at (5.5,8) {$X_{*,2}$};
\draw[thick] (5,0) grid +(1,8);
\draw[support] (5.5,7.5) circle (.3);
\draw[support] (5.5,1.5) circle (.3);
\draw[support] (5.5,0.5) circle (.3);

\draw[line] (1,5.5) -- node [above] {2} (5,7.5);
\draw[line] (1,2.5) -- node [above] {1} (5,1.5);
\draw[line] (1,0.5) -- node [above] {0} (5,0.5);
\node [anchor=north] (semdlabel1) at (3,-0.5) {$\EMD=3$};

\node [anchor=south] (xip2label) at (10.5,8) {$X_{*,3}$};
\draw[thick] (10,0) grid +(1,8);
\draw[support] (10.5,7.5) circle (.3);
\draw[support] (10.5,2.5) circle (.3);
\draw[support] (10.5,1.5) circle (.3);

\draw[line] (6,7.5) -- node [above] {0} (10,7.5);
\draw[line] (6,1.5) -- node [above] {1} (10,2.5);
\draw[line] (6,0.5) -- node [below] {1} (10,1.5);
\node [anchor=north] (semdlabel2) at (8,-0.5) {$\EMD=2$};
\end{tikzpicture}
\caption[The support-EMD]{The support-EMD for a matrix with three columns and eight rows.
The circles stand for supported elements in the columns.
The lines indicate the matching between the supported elements and the corresponding EMD cost.
The total support-EMD is $\EMD(\supp(X)) = 2 + 3 = 5$.}
\label{fig:supportemd}
\end{figure}

\begin{definition}[Support-EMD]
Let $\Omega \subseteq [h] \times [w]$ be the support of a matrix $X$ with exactly $s$-sparse columns, i.e., $|\colsupp(\Omega,c)| = s$ for $c \in [w]$.
Then the EMD of $\Omega$ is defined as
$$
\EMD(\Omega) = \sum_{c=1}^{w-1} \EMD(\colsupp(\Omega,c), \colsupp(\Omega,c+1)) \; .
$$

If the columns of $X$ are not exactly $s$-sparse, we define the EMD of $\Omega$ as the minimum EMD of any support that contains $\Omega$ and has exactly $s$-sparse columns.
Let $s = \max_{c\in [w]} |\colsupp(\Omega,c)|$.
Then $\EMD(\Omega) = \min_\Gamma \EMD(\Gamma)$, where $\Gamma \subseteq [h] \times [w]$, $\Omega \subseteq \Gamma$, and $\Gamma$ is a support with exactly $s$-sparse columns, i.e.,\ $| \colsupp(\Gamma,c) | = s$ for $c \in [w]$.
\end{definition}
The above definitions motivate a natural structured sparsity model that essentially characterizes ensembles of sparse signals with correlated supports.
Suppose we interpret the signal $x \in \R^n$ as a matrix $X \in \R^{h \times w}$ with $n = h \, w$.
For given dimensions of the signal $X$, our model has two parameters:
(i) $k$, the total sparsity of the signal.
For simplicity, we assume here and in the rest of this paper that $k$ is divisible by $w$.
Then the sparsity of each column $X_{*,i}$ is $s = k / w$.
(ii) $B$, the support-EMD of $X$. We call this parameter the \emph{EMD budget}.
Formally, we have:

\begin{definition}[Constrained EMD model]
\label{def:emdmodel}
The Constrained EMD (CEMD) model is the structured sparsity model $\Mmodel_{k,B}$ defined by the set of supports $\Msupports_{k,B} = \{ \Omega \subseteq [h] \times [w] \, | \, \EMD(\Omega) \leq B  \textnormal{ and } |\colsupp(\Omega, c)| = \frac{k}{w} \textnormal{ for } c \in [w] \}$.
\end{definition}

The parameter $B$ controls how much the support can vary from one column to the next.
Setting $B=0$ forces the support to remain constant across all columns, which corresponds to block sparsity (the blocks are the rows of $X$).
A value of $B \geq kh$ effectively removes the EMD constraint because each supported element is allowed to move across the full height of the signal.
In this case, the model demands only $s$-sparsity in each column.
It is important to note that we only constrain the EMD of the column \emph{supports} in the signal, not the actual amplitudes.
Figure \ref{fig:emdmodel_example} illustrates the CEMD model with an example.

\begin{figure}[t!]
\centering
\begin{tikzpicture}
\tikzstyle{amp}=[minimum size=.6cm,node distance=0.8cm,inner sep=2pt]
\tikzstyle{line}=[-,thick]
\tikzstyle{bracket}=[thick]

\node (n00) at (0,0) [amp] {1};
\node (n10) [amp, below=of n00] {0};
\node (n20) [amp, below=of n10] {4};
\node (n01) [amp, right=of n00] {3};
\node (n11) [amp, below=of n01] {1};
\node (n21) [amp, below=of n11] {2};
\node (n02) [amp, right=of n01] {1};
\node (n12) [amp, below=of n02] {2};
\node (n22) [amp, below=of n12] {0};
\draw [bracket] (n00.north) -- (n00.north west) -- (n20.south west) -- (n20.south);
\draw [bracket] (n02.north) -- (n02.north east) -- (n22.south east) -- (n22.south);
\node (label) [left=.1cm of n10] {$X =$};

\node (m00) [amp, right=4cm of n02] {0};
\node (m10) [amp, below=of m00] {0};
\node (m20) [amp, below=of m10] {4};
\node (m01) [amp, right=of m00] {0};
\node (m11) [amp, below=of m01] {0};
\node (m21) [amp, below=of m11] {2};
\node (m02) [amp, right=of m01] {0};
\node (m12) [amp, below=of m02] {2};
\node (m22) [amp, below=of m12] {0};
\draw [bracket] (m00.north) -- (m00.north west) -- (m20.south west) -- (m20.south);
\draw [bracket] (m02.north) -- (m02.north east) -- (m22.south east) -- (m22.south);
\node (label) [left=.1cm of m10] {$X^* =$};
\draw [line] (m20) -- (m21) node [midway,below,font=\scriptsize] {0};
\draw [line] (m21) -- (m12) node [midway,below right,font=\scriptsize] {1};

\end{tikzpicture}
\caption[The EMD model]{A signal $X$ and its best approximation $X^*$ in the EMD model $\Mmodel_{3,1}$.
A sparsity constraint of 3 with 3 columns implies that each column has to be 1-sparse.
Moreover, the total support-EMD between neighboring columns in $X^*$ is 1.
The lines in $X^*$ indicate the support-EMD.}
\label{fig:emdmodel_example}
\end{figure}

\subsection{Sampling bound}\label{sec:samplingbound}
Our objective is to develop a sparse recovery scheme for the Constrained EMD model.
As the first ingredient, we establish the model-RIP  for $\Mmodel_{k,B}$, i.e., we characterize the number of permissible supports (or equivalently, the number of subspaces) $l_{k,B}$ in the model and invoke Fact~\ref{fact:modelripbound}. 
For simplicity, we will assume that $w = \Omega(\log h)$, i.e., the following bounds apply for all signals $X$ except very thin and tall matrices $X$.
The following result is novel:

\begin{theorem}
\label{thm:samplingbound}
The number of allowed supports in the CEMD model satisfies
$\log \abs{\Msupports_{k,B}} = O\parens{k \log \frac{B}{k}}.$
\end{theorem}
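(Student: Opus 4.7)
The plan is to construct an injective encoding of each $\Omega \in \Msupports_{k,B}$ as a pair (first-column support, integer movement vector), and then upper-bound the two coordinates separately by standard combinatorial estimates.

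Write $s = k/w$ and $N = s(w-1) < k$. Given $\Omega \in \Msupports_{k,B}$, I will record (i) the first-column support $S_1 := \colsupp(\Omega, 1)$, an $s$-subset of $[h]$, and (ii) a movement vector constructed as follows. List $S_1$ in increasing order as $r_{1,1} < \ldots < r_{s,1}$. For each $c \in [w-1]$, fix a canonical (say, lexicographically smallest) optimal matching $\pi_c : \colsupp(\Omega, c) \to \colsupp(\Omega, c+1)$, and propagate labels via $r_{i,c+1} := \pi_c(r_{i,c})$. Set $d_{i,c} := r_{i,c+1} - r_{i,c}$. By optimality of each $\pi_c$ and the definition of support-EMD, $\sum_{i,c} |d_{i,c}| = \sum_c \EMD(\colsupp(\Omega, c), \colsupp(\Omega, c+1)) = \EMD(\Omega) \leq B$. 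Conversely, from the pair $(S_1, (d_{i,c}))$ one recovers $\Omega = \{(r_{i,c}, c) : i \in [s], c \in [w]\}$ by prefix sums, so the encoding is injective.

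It then suffices to count encodings. For the first coordinate, $\log \binom{h}{s} \leq s \log(eh/s)$, and the standing assumption $w = \Omega(\log h)$ stated just before the theorem gives $\log h = O(w)$, so $s \log(eh/s) = O(sw) = O(k)$. For the movement vector, the number of integer tuples $(d_1, \ldots, d_N)$ with $\sum_j |d_j| \leq B$ is at most $2^N \binom{N+B}{N}$: a stars-and-bars count with a slack variable yields $\binom{N+B}{N}$ nonnegative magnitude vectors $(|d_j|)$, and a factor of $2^N$ accounts for sign choices. Using $\log \binom{N+B}{N} \leq N \log(e(1 + B/N))$ together with $N = \Theta(k)$, this contributes $O(k \log(1 + B/k))$. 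Summing the two contributions yields $\log |\Msupports_{k,B}| = O(k) + O(k \log(1 + B/k)) = O(k \log(B/k))$ in the regime $B = \Omega(k)$ that is relevant for the subsequent compressive sensing bounds, since there the inner logarithm is $\Omega(1)$ and absorbs the additive $O(k)$ term.

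The only delicate point will be verifying that the canonical choices (increasing ordering of $S_1$, lex-smallest $\pi_c$) really produce a well-defined, reconstructible encoding whose total absolute movement equals $\EMD(\Omega)$; this reduces to the additivity of support-EMD across consecutive column pairs together with the optimality of each $\pi_c$. Everything else is routine binomial estimation, so I do not anticipate a genuine obstacle.
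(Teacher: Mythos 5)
Your proposal is correct and follows essentially the same route as the paper: the paper also counts supports by (i) the first-column support $\binom{h}{s}$, (ii) a stars-and-bars distribution of the EMD budget over the supported elements $\binom{B+k}{k}$, and (iii) a sign (direction) per movement $2^k$, then applies $\binom{a}{b}\le(ea/b)^b$ and the standing assumption $w=\Omega(\log h)$. Your version merely makes the injection explicit via canonical matchings and prefix sums, and is more candid than the paper about the $B=\Omega(k)$ regime needed to absorb the additive $O(k)$ term into $O(k\log\frac{B}{k})$.
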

\begin{proof}
For given $h$, $w$, $B$, and $k$, the support is fixed by the following three decisions:
(i) The choice of the supported elements in the first column of $X$.
(ii) The distribution of the EMD budget $B$ over the $k$ supported elements.
This corresponds to distributing $B$ balls into $k + 1$ bins (using one bin for the part of the EMD budget not allocated to supported elements).
(iii) For each supported element, the direction (up or down) to the matching element in the next column to the right.
Multiplying the choices above gives $\binom{h}{s} \binom{B + k}{k} 2^k$, an upper bound on the number of supports.
Using the inequality $\binom{a}{b} \leq \parens{\frac{a \, e}{b}}^b$, we get
\begin{align*}
\log \abs{\Msupports_{k,B}} & \leq \log \parens{\binom{h}{s} \binom{B + k}{k} 2^k} \\
             & \leq s \log \frac{h}{s} + k \log \frac{B + k}{k} + O(s + k) \\
             &= O\parens{k \log \frac{B}{k}} \, . \, \qedhere
\end{align*}
\end{proof}

If we allow each supported element to move a constant amount from one
column to the next, we get $B = O(k)$ and hence, from
Fact~\ref{fact:modelripbound},  $m = O(k + \log \abs{\Msupports_{k,B}}) = O(k)$ rows for sub-Gaussian measurement matrices.
This bound is information-theoretically optimal.
Furthermore, for $B = kh$ (i.e., allowing every supported element to move anywhere in the next column) we get $m = O(k \log n)$, which almost matches the standard compressive sensing bound of $m = O(k \log \frac{n}{k})$ for sub-Gaussian measurement matrices.
Therefore, the CEMD model gives a smooth trade-off between the support variability and the number of measurements necessary for recovery.

We can also establish a sampling bound in the RIP-1 setting with Fact \ref{fact:modelripbound1}.
For the case of $B = \Theta(k)$, we get $m = O(k \, \frac{\log n}{\log \log \frac{n}{k}})$.
In order to match the block-sparsity lower bound of $m = O(k \log_w n)$, we need to assume that $B = O(k / w)$, i.e., each path (and not each element) in the support has a constant EMD-budget on average.
We omit the details of this calculation here.

The following theorem is useful when establishing sampling bounds for recovery schemes using the CEMD model.
\begin{theorem}
\label{thm:cemdaddition}
The CEMD model is closed under addition:
$\Msupports_{k_1, B_1} \oplus \Msupports_{k_2, B_2} \subseteq \Msupportsclosure_{k_1 + k_2, B_1 + B_2}$.
\end{theorem}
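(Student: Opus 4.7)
The plan is to construct $\Gamma \in \Msupports_{k_1+k_2, B_1+B_2}$ containing $\Omega_1 \cup \Omega_2$ via a path-decomposition argument.

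\smallskip
\emph{Step 1 (Path decomposition of EMD).} I would first observe that for any support $\Omega$ with exactly $s$-sparse columns, $\EMD(\Omega)$ equals the minimum, over all ways of writing $\Omega$ as the vertex-disjoint union of $s$ ``paths'' $p_1,\dots,p_s : [w]\to[h]$ (one distinct row per column per path), of $\sum_{\ell}\sum_{i} |p_\ell(i)-p_\ell(i+1)|$. This equivalence follows by chaining the optimal column-to-column min-cost matchings that define $\EMD(\Omega)$; conversely, any path decomposition induces column-to-column matchings of the same total cost.

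\smallskip
\emph{Step 2 (Combine).} Apply path decomposition to $\Omega_1$ and $\Omega_2$ to obtain $s_1 = k_1/w$ and $s_2 = k_2/w$ paths of total length at most $B_1$ and $B_2$ respectively. Taking the disjoint union gives a combined collection of $s_1+s_2$ paths of total length at most $B_1+B_2$ whose image as a set is exactly $\Omega_1 \cup \Omega_2$. If this combined collection already has distinct rows in every column, then $\Gamma := \Omega_1 \cup \Omega_2$ lies in $\Msupports_{k_1+k_2,B_1+B_2}$ and we are done.

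\smallskip
\emph{Step 3 (Resolve collisions).} Otherwise, whenever $(v,i) \in \colsupp(\Omega_1, i) \cap \colsupp(\Omega_2, i)$, exactly two of the combined paths pass through row $v$ at column $i$. Assuming $h \geq s_1 + s_2$ (otherwise $\Msupports_{k_1+k_2,B_1+B_2}$ is empty and the containment is vacuous), there exists a fresh row $v' \in [h] \setminus (\colsupp(\Omega_1,i) \cup \colsupp(\Omega_2,i))$ through which I would re-route one of the two colliding paths. After processing every collision, the re-routed paths are column-distinct, so defining $\Gamma$ as their graph gives a support with exactly $s_1+s_2$ entries per column containing $\Omega_1 \cup \Omega_2$.

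\smallskip
\emph{Main obstacle: the cost bound.} The delicate part of the argument is showing $\EMD(\Gamma) \leq B_1 + B_2$, since ad hoc re-routing can in principle increase path length. Since $\EMD(\Gamma)$ is the minimum over path decompositions of $\Gamma$, it suffices to exhibit \emph{any} decomposition of $\Gamma$ with total cost at most $B_1+B_2$. My plan to control this is first to normalize, via the standard quadrangle-inequality swap for the $\ell_1$ metric, each combined column-to-column matching so that it is monotone on the sorted-multiset order at both endpoints; such swaps only decrease the total cost and preserve the vertex-set covered. Under these monotone matchings, the two paths colliding at $(v,i)$ occupy adjacent positions in the sorted order of the column-$i$ multiset, and their predecessors in column $i-1$ and successors in column $i+1$ are similarly order-adjacent. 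This adjacency is the leverage that lets the duplicate at $v$ be re-routed to the closest unused row while adjusting the neighboring edges so that the net local change in total length is non-positive. Carrying out this telescoping cost accounting across all collisions and all column pairs is what I expect to be the technically involved step, but once done it yields $\EMD(\Gamma)\leq B_1+B_2$ and hence $\Gamma \in \Msupports_{k_1+k_2,B_1+B_2}$, which is exactly the claimed containment.
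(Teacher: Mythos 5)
Your Steps 1 and 2 are fine and match the paper's viewpoint (the paper phrases everything in terms of the column-to-column matchings that define $\EMD$, which is equivalent to your path decomposition). The problem is that the entire content of the theorem sits in the step you defer: proving $\EMD(\Gamma) \le B_1 + B_2$ after collision resolution. Your proposal asserts that re-routing a duplicate ``to the closest unused row'' can be charged so that ``the net local change in total length is non-positive,'' but this is not true as a purely local statement, because the choice of fresh row made in column $c$ interacts with the choices made in columns $c-1$ and $c+1$. Concretely: let $\Omega_1$ be the constant path in row $5$ (so $B_1=0$) and let $\Omega_2$ consist of the constant paths in rows $4$, $5$, and $6$ (so $B_2=0$). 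The union has the duplicate row $5$ in every column, and the nearest unused rows are $3$ and $7$, equidistant; resolving the tie as $3$ in one column and $7$ in the next produces consecutive column supports $\{3,4,5,6\}$ and $\{4,5,6,7\}$ with matching cost $4 > B_1+B_2 = 0$. So the accounting must be made globally consistent across all columns, and that coordination argument --- which is where the actual work lies --- is missing from your proposal. (A side remark: your claim that $h < s_1+s_2$ makes the containment ``vacuous'' is also not right --- in that regime the right-hand side is empty while the left-hand side need not be --- but this is a boundary issue the paper also elides.)

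The paper's proof avoids this difficulty entirely with a different device. It merges the two matchings one column pair at a time, and when a conflict arises (an element $a$ of one support needs a partner in the next column but its intended partner is taken), it matches $a$ to the entry \emph{in the same row as $a$} in the adjacent column, adding that entry to the support if necessary; if that entry was already matched to some $a'$, it is stolen and the procedure cascades. Every edge added this way has cost exactly zero, and every edge removed has nonnegative cost, so the merged matching costs at most $B_1+B_2$ with no charging argument at all. The lesson is to resolve collisions by \emph{adding} zero-cost same-row matches rather than by \emph{moving} a colliding path to a fresh row; the latter forces you to bound a cost increase that is genuinely delicate to control.
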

\begin{proof}
Let $\Omega_1 \in \Msupports_{k_1, B_1}$ and $\Omega_2 \in \Msupports_{k_2, B_2}$.
Moreover, let $\Gamma = \Omega_1 \cup \Omega_2$.
We have to show that $\Gamma \in \Msupports_{k_1+k_2, B_1 + B_2}$.

The column-sparsity of $\Omega_1$ and $\Omega_2$ is $k_1/w$ and $k_2/w$, respectively.
Hence the column-sparsity of $\Gamma$ is at most $\frac{k_1 + k_2}{w}$.
Moreover, we can construct a matching for $\Gamma$ with cost at most $B_1 + B_2$ from the matchings for $\Omega_1$ and $\Omega_2$.
To see this, consider without loss of generality the matchings $\pi_1$ and $\pi_2$ corresponding to the first two columns in $\Omega_1$ and $\Omega_2$, respectively.
We start constructing the new matching $\pi'$ by starting with $\pi_1$.
Then, we iterate over the pairs $(a, b)$ in $\pi_2$ one by one and augment $\pi'$ to include both $a$ and $b$.
There are four cases:
\begin{enumerate}
\item Both $a$ and $b$ are still unassigned in $\pi'$.
Then we can simply add $(a, b)$ to $\pi'$.
\item Both $a$ and $b$ are already assigned in $\pi'$.
In this case, we do not need to modify $\pi'$ to include $a$ and $b$.
\item $a$ is not included in $\pi'$, but $b$ is already assigned in $\pi'$.
This is the interesting case becaues we must now find a new neighbor assignment for $a$.
Let $b'$ be the entry in the second column that is in the same row as $a$.
If $b'$ is not assigned yet, we can simply add $(a, b')$ to $\pi'$.
Otherwise, let $a'$ be the value such that $\pi'(a') = b'$.
Then we remove the pair $(a', b')$ from $\pi'$, add $(a, b')$ to $\pi'$, and repeat this procedure to find a new neighbor for $a'$.
It is easy to see that this procedure terminates after a finite number of steps, and that no node currently assigned under $\pi'$ loses a neighbor.
Moreover, note that this operation does not increase the cost of the matching $\pi'$.
\item $b$ is not included in $\pi'$, but $a$ is already assigned in $\pi'$.
This case is symmetric to case 3 above.
\end{enumerate}
Each of the four cases increases the cost of $\pi'$ by at most the cost of $(a, b)$ in $\pi_2$. Iterating over all pairs in $\pi_2$, we observe that the final matching $\pi'$ has cost no more than the cumulative costs of $\pi_1$ and $\pi_2$, i.e., at most $B_1 + B_2$.
Therefore, $\Gamma \in \Msupports_{k_1 + k_2, B_1 + B_2}$.
\end{proof}

\subsection{Head Approximation Algorithm}
\label{sec:head}
First, we develop a head approximation algorithm for the CEMD model.
Ideally, we would have an \emph{exact} projection algorithm $H$ mapping arbitrary signals to signals in $\Mmodel_{k,B}$ with the guarantee $\norm{H(x)}_p = \max_{\Omega \in \Msupports_{k,B}} \norm{x_\Omega}_p$.
However, this appears to be a hard problem. Instead, we propose an
efficient greedy algorithm satisfying the somewhat looser requirements of a head
approximation oracle
(Definition~\ref{def:headapproxoracle}). 
Specifically, we develop an algorithm that performs the following task: 
given an arbitrary signal $x$, find a support $\Omega \in \Msupports_{O(k),O(B \log k)}$ such that
$\norm{x_\Omega}_p^p \geq c \max_{\Gamma\in \Msupports_{k,B}} \norm{x_\Gamma}_p^p$, where $c > 0$ is a fixed constant.

As before, we interpret our signal $x$ as a matrix $X \in \R^{h\times w}$.
Let $\OPT$ denote the largest sum of coefficients achievable with a support in $\Msupports_{k,B}$, i.e., $\OPT = \max_{\Omega \in \Msupports_{k,B}} \norm{x_\Omega}_p^p$.
For a signal $x \in \Mmodel_{k,B}$, we interpret the support of $x$ as a set of $s = k/w$ paths from the leftmost to the rightmost column in $X$.
Our method proceeds by greedily finding a set of paths that cover a large sum of signal coefficients.
We can then show that the coefficients covered by these paths are a constant fraction of the optimal coefficient sum $\OPT$.

\begin{definition}[Path in a matrix]\label{def:matrixpath}
Given a matrix $X \in \R^{h\times w}$, a path $r \subseteq [h] \times [w]$ is a set of $w$ locations in $X$ with one location per column, i.e., $\abs{r} = w$ and $\bigcup_{(i,j) \in r} j = [w]$.
The weight of $r$ is the sum of amplitudes on $r$, i.e., $w_{X,p}(r) = \sum_{(i,j)\in r} \abs*{X_{i,j}}^p \, $.
The EMD of $r$ is the sum of the EMDs between locations in neighboring columns.
Let $j_1, \ldots, j_w$ be the locations of $r$ in columns 1 to $w$.
Then, 
$\EMD(r) = \sum_{i=1}^{w-1} \abs{j_i - j_{i+1}} \, .$
\end{definition}

\begin{algorithm}[!t]
\caption{Head approximation algorithm}
\label{alg:headapproxbasic}
\begin{algorithmic}[1]
\Function{HeadApprox}{$X, k, B$}
\State $X^{(1)} \gets X$
\For{$i  \gets 1, \ldots, s$}
\State Find the path $r_i$ from column 1 to column $w$ in $X^{(i)}$ that maximizes $w^{(i)}(r_i)$ and
\Statex uses at most EMD-budget $\floor{\frac{B}{i}}$.
\State $X^{(i + 1)} \gets X^{(i)}$
\For{$(u,v)  \in r_i $}
\State $X^{(i+1)}_{u,v} \gets 0$
\EndFor
\EndFor
\State \textbf{return} $\bigcup_{i = 1}^{s} r_i$
\EndFunction
\end{algorithmic}
\end{algorithm}

Trivially, we have that a path $r$ in $X$ is a support with $w_{X,p}(r) = \norm{X_r}_p^p$ and $\EMD(r) = \EMD(\supp(X_r))$.
Therefore, we can iteratively build a support $\Omega$ by finding $s$ paths in $X$.
Algorithm~\ref{alg:headapproxbasic} contains the description of \textsc{HeadApprox}.
We show that \textsc{HeadApprox} finds a constant fraction of the amplitude sum of the best support while only moderately increasing the size of the model.
For simplicity, denote $w(r) := w_{X,p}(r)$, and $w^{(i)}(r) := w_{X^{(i)},p}(r)$.
We obtain the following result:

\begin{theorem}
\label{thm:headapproxbasic}
Let $p \geq 1$ and $B' = \ceil{H_s} B$, where $H_s = \sum_{i=1}^s 1/i$ is the $s$-th harmonic number.
Then \textsc{HeadApprox} is a $((\frac{1}{4})^{1/p}, \Msupports_{k,B}, \Msupports_{k, B'}, p)$-head-approximation oracle.
\end{theorem}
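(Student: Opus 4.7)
The proof naturally splits into two parts, corresponding to the two requirements of a head-approximation oracle.

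For the output model sparsity, I would first observe that the algorithm returns a union of $s = k/w$ paths, each consisting of one location per column, so the column-sparsity of the output is at most $s$. For the EMD budget, the $i$-th path has EMD at most $\floor{B/i}$, so the total EMD of the output is at most $\sum_{i=1}^s \floor{B/i} \leq B \sum_{i=1}^s \tfrac{1}{i} = B H_s \leq \ceil{H_s} B = B'$. Combining these, the output lies in $\Msupportsclosure_{k,B'}$. This is the easy step.

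For the head-approximation guarantee, let $\Omega^* \in \Msupports_{k,B}$ be an optimal support, decomposed into $s$ paths $r_1^*,\ldots,r_s^*$ sorted so that $w(r_1^*) \geq w(r_2^*) \geq \ldots \geq w(r_s^*)$ in the $p$-power sense, so that $\OPT = \sum_{j=1}^s w(r_j^*)$. The key combinatorial observation is that in iteration $i$, among the top $i$ optimal paths $r_1^*,\ldots,r_i^*$, at least one must have EMD at most $B/i$, for otherwise their total EMD would exceed $B$. Since EMDs are integer, this path has EMD at most $\floor{B/i}$ and is therefore a candidate for the greedy search in iteration $i$. Call it $q_i$; it satisfies $w(q_i) \geq w(r_i^*)$. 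Hence, writing $U_{i-1} = r_1 \cup \cdots \cup r_{i-1}$ and $A_i = w^{(i)}(r_i)$, the greedy choice guarantees
\[
A_i \;\geq\; w^{(i)}(q_i) \;=\; w(q_i) - w(q_i \cap U_{i-1}) \;\geq\; w(r_i^*) - w(q_i \cap U_{i-1}).
\]

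The main obstacle will be controlling the collision term $w(q_i \cap U_{i-1})$: two paths can in principle share all $w$ columns, so this term is not automatically small. My plan is to handle it via a case split or a charging argument. At iteration $i$, either (a) $w(q_i \cap U_{i-1}) \leq \tfrac12 w(r_i^*)$, in which case $A_i \geq \tfrac12 w(r_i^*)$, or (b) more than half of the mass of $q_i$ lies on previously selected paths, in which case that mass has already been credited to some earlier $A_j$. Carefully distributing the charges (each "collision" mass is charged once to the earlier path that captured it, and each $r_i^*$ is compared against at most two sources of credit) yields $\sum_i A_i \geq \tfrac14 \sum_i w(r_i^*) = \tfrac14 \OPT$, which is exactly the claimed $(1/4)^{1/p}$ factor once we take $p$-th roots on both sides of $\norm{x_\Omega}_p^p \geq \tfrac14 \norm{x_{\Omega^*}}_p^p$. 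The delicate bookkeeping in this charging step, and verifying that the integer rounding in $\floor{B/i}$ does not disrupt the averaging argument, are the places where I expect to spend most of the effort.
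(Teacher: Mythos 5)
Your treatment of the output model sparsity is correct and matches the paper: the output is a union of $s$ disjoint paths, hence column-sparsity $s$, and the total EMD is at most $\sum_{i=1}^s \floor{B/i} \leq \ceil{H_s}B$.

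The head-approximation half has a genuine gap, and it sits exactly where you feared: the charging. The root cause is that you order the optimal paths by \emph{weight} and then, at iteration $i$, compare against some $q_i$ in the top-$i$-by-weight prefix with $\EMD(q_i) \leq B/i$. Such a $q_i$ exists by averaging, but the $q_i$ need not be distinct across iterations, and in general they \emph{cannot} be made distinct while also satisfying $w(q_i) \geq w(r_i^*)$: with two optimal paths of weights $10$ and $1$ and EMDs $0$ and $B$, iteration $1$ forces $q_1$ to be the weight-$10$ path, and then no unused path with EMD $\leq B/2$ remains for iteration $2$. Once the $q_i$ repeat, the collision sets $q_i \cap U_{i-1}$ are nested rather than disjoint, so the total mass you invoke to pay the case-(b) charges, $\sum_{i} w(q_i \cap U_{i-1})$, can exceed $\norm{x_\Omega}_p^p$ by a factor of up to $s$; a single already-covered path can be asked to pay for $\frac12 w(r_i^*)$ for many different $i$, and your rule ``each collision mass is charged once to the earlier path that captured it'' then leaves most of those charges unpaid. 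The per-iteration bounds $A_i \geq w(q_i) - w(q_i\cap U_{i-1})$ are simply vacuous for every iteration whose only eligible candidate is an already-exhausted path, and no bookkeeping recovers the constant $\tfrac14$ from them. (A smaller slip: your case (b) is the negation of $w(q_i\cap U_{i-1}) \leq \tfrac12 w(r_i^*)$, which is not the same as ``more than half of the mass of $q_i$ is covered,'' since $w(q_i)$ may exceed $w(r_i^*)$.)

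The paper's proof avoids all of this by sorting the optimal decomposition $t_1,\dots,t_s$ by \emph{decreasing EMD}. Then $\EMD(t_i) \leq \floor{B/i}$ holds automatically (otherwise the first $i$ paths alone would exceed the budget $B$), so the candidate compared against in iteration $i$ is the specific path $t_i$, and these candidates are distinct and pairwise disjoint by construction. The case split is then on whether $t_i$ is more than half covered by $U_{i-1}$: in the uncovered case $A_i \geq w^{(i)}(t_i) > \tfrac12 w(t_i)$ directly, and in the covered case the sets $E_i = t_i \cap U_{i-1}$ are pairwise disjoint subsets of $\Omega$, so $\sum_{i\in C}\tfrac12 w(t_i) \leq \norm{x_\Omega}_p^p$ with no double counting; adding the two contributions gives $2\norm{x_\Omega}_p^p \geq \tfrac12\,\OPT$. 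Your argument becomes correct if you replace the weight ordering by the EMD ordering and drop the (unachievable) requirement $w(q_i)\geq w(r_i^*)$ --- the comparison against $w(t_i)$ for the EMD-sorted $t_i$ is all that is needed, since $\sum_i w(t_i) = \OPT$ regardless of the order in which the optimal paths are enumerated.
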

\begin{proof}
Let $\Omega$ be the support returned by $\textsc{HeadApprox}(X, k, B)$ and let $\Omega_\OPT \in \Msupports_{k,B}$ be an optimal support.
We can always decompose $\Omega_\OPT$ into $s$ disjoint paths in $X$.
Let $t_1, \ldots, t_s$ be such a decomposition with $\EMD(t_1) \geq \EMD(t_2) \geq \ldots \geq \EMD(t_s)$.
Note that $\EMD(t_i) \leq \floor{\frac{B}{i}}$: otherwise $\sum_{j=1}^i \EMD(t_i) > B$ and since $\EMD(\Omega_\OPT) \leq B$ this would be a contradiction.
Since $\Omega$ is the union of $s$ disjoint paths in $X$, $\Omega$ has column-sparsity $s$.
Moreover, we have $\EMD(\Omega) = \sum_{i=1}^s \EMD(r_i) \leq \sum_{i=1}^s \floor{\frac{B}{i}} \leq \ceil{H_s} B .$
Therefore, $\Omega \in \Msupports_{k,B'}^+$.

When finding path $r_i$ in $X^{(i)}$, there are two cases:
\begin{description}[font=\normalfont]
\item[Case 1:] $w^{(i)}(t_i) \leq \frac{1}{2} w(t_i)$, i.e., the paths $r_1, \ldots, r_{i-1}$ have already covered more than half of the coefficient sum of $t_i$ in $X$.
\item[Case 2:] $w^{(i)}(t_i) > \frac{1}{2} w(t_i)$, i.e., there is still more than half of the coefficient sum of $t_i$ remaining in $X^{(i)}$.
Since $\EMD(t_i) \leq \floor{\frac{B}{i}}$, the path $t_i$ is a candidate when searching for the optimal path $r_i$ and hence we find a path $r_i$ with $w^{(i)}(r_i) > \frac{1}{2} w(t_i)$.
\end{description}
Let $C = \{i \in [s] \; | \; \text{case 1 holds for } r_i\}$ and $D = \{i \in [s] \; | \; \text{case 2 holds for } r_i\}$ (note that $C = [s] \setminus D$).
Then we have
\begin{equation}
\begin{split}
\norm{X_\Omega}_p^p = \sum_{i=1}^s w^{(i)}(r_i) 
& = \sum_{i \in C} w^{(i)}(r_i) + \sum_{i \in D} w^{(i)}(r_i) \\
& \geq \sum_{i \in D} w^{(i)}(r_i) 
\geq \frac{1}{2} \sum_{i \in D} w(t_i) \label{eq:dsum} \, .
\end{split}
\end{equation}
For each $t_i$ with $i \in C$, let $E_i = t_i \cap \, \bigcup_{j < i} r_j$, i.e., the locations of $t_i$ already covered by some $r_j$ when searching for $r_i$.
Then we have
\begin{equation*}
\sum_{(u,v) \in E_i} |X_{u,v}|^p = w(t_i) - w^{(i)}(t_i) \geq  \frac{1}{2} w(t_i) \, ,
\end{equation*}
and
\begin{equation*}
\sum_{i \in C}\sum_{(u,v) \in E_i} |X_{u,v}|^p \geq  \frac{1}{2} \sum_{i \in C} w(t_i) \, .
\end{equation*}
The $t_i$ are pairwise disjoint, and so are the $E_i$.
For every $i \in C$ we have $E_i \subseteq \bigcup_{j=1}^s r_j$. Hence
\begin{equation}
\norm{X_\Omega}_p^p = \sum_{i=1}^s w^{(i)}(r_i) 
\geq \sum_{i \in C}\sum_{(u,v) \in E_i} |X_{u,v}|^p 
\geq  \frac{1}{2} \sum_{i \in C}  w(t_i) \,.
\label{eq:csum}
\end{equation}
Combining Equations \ref{eq:dsum} and \ref{eq:csum} gives:
\begin{align*}
2 \norm{X_\Omega}_p^p \; \geq & \; \frac{1}{2} \sum_{i \in C}  w(t_i) + \frac{1}{2} \sum_{i \in D} w(t_i)  = \frac{1}{2} \OPT \\
\norm{X_\Omega}_p \; \geq & \; \parens{\frac{1}{4}}^{1/p} \max_{\Omega' \in \Msupports_{k,B}} \norm{X_{\Omega'}}_p \; .
\end{align*}
\end{proof}

\begin{theorem}
\label{thm:headapproxtime}
\textsc{HeadApprox} runs in $O(s  n  B  h)$ time.
\end{theorem}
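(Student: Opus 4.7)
The plan is to amortize the work between the outer loop and the constrained path search that dominates each iteration. The outer \textbf{for} loop executes $s$ times, so it suffices to show that one iteration can be carried out in $O(n B h)$ time; the final $O(w)$ zero-outs of the entries of $r_i$ contribute only $O(sw) = O(k)$ across all iterations and are absorbed.

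Inside iteration $i$, the task is to find the maximum-weight path (in the sense of Definition~\ref{def:matrixpath}) from column $1$ to column $w$ of $X^{(i)}$ whose support-EMD is at most $\floor{B/i} \leq B$. I would solve this by dynamic programming on the natural layered DAG whose vertices are the cells $(u,v) \in [h] \times [w]$ of $X^{(i)}$, with arcs from every cell of column $v-1$ to every cell of column $v$ carrying an EMD cost $|u-u'|$. Define
\begin{equation*}
  D[u, v, b] \; = \; \max\{\, w^{(i)}(r') \, : \, r' \text{ is a path from column } 1 \text{ to } (u,v) \text{ with } \EMD(r') \leq b \,\}\, .
\end{equation*}
The recurrence is
\begin{equation*}
  D[u, v, b] \; = \; |X^{(i)}_{u,v}|^p \; + \; \max_{u' \in [h]} D[u', v-1, b - |u - u'|]\, ,
\end{equation*}
with the base case $D[u, 1, b] = |X^{(i)}_{u,1}|^p$ for all $b \geq 0$, and $D[u,v,b] = -\infty$ for $b < 0$. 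The optimal path weight is $\max_{u} D[u, w, \floor{B/i}]$, and the path itself is recovered by standard back-pointer bookkeeping.

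Counting: the DP table has $h \cdot w \cdot (B+1) = O(nB)$ cells, and computing each cell requires a maximum over $h$ predecessor rows in the previous column, giving $O(h)$ work per cell. Thus one iteration costs $O(n B h)$, and summing over the $s$ outer iterations yields the claimed $O(s n B h)$ total. The only subtle point is ensuring that in iteration $i$ we only need budget values up to $\floor{B/i}$; since this is upper-bounded by $B$, using $B+1$ budget slots uniformly is safe and gives the stated bound. No step is genuinely hard here—the main thing to verify is that a straightforward layered DP suffices and that the per-iteration work is correctly multiplied by $s$.
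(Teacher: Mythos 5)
Your proposal is correct and matches the paper's own argument: both solve each per-iteration constrained path search by dynamic programming over $O(nB)$ states indexed by (cell, EMD budget), with $O(h)$ transitions per state, giving $O(nBh)$ per iteration and $O(snBh)$ overall. The only cosmetic difference is that you index states by remaining budget with a prefix-max semantics while the paper tracks the budget already consumed; these are equivalent.
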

\begin{proof}
Observe that the running time of \textsc{HeadApprox} depends on the running time of finding a path with maximum weight for a given EMD budget.
The search for such a path can be performed by \emph{dynamic programming} over a graph with $w h B = n B$ nodes, or equivalently ``states'' of the dynamic program.\footnote{We use the terminology ``states'' here to distinguish the dynamic program from the graph we will introduce in Section \ref{sec:tail}.}
Each state in the graph corresponds to a state in the dynamic program, i.e., a location $(i, j) \in [w] \times [h]$ and the current amount of EMD already used $b \in \{0, 1, \ldots, B\}$.
At each state, we store the largest weight achieved by a path ending at the corresponding location $(i,j)$ and using the corresponding amount of EMD budget $b$.
Each state has $h$ outgoing edges to the states in the next column (given the current location, the decision on the next location also fixes the new EMD amount).
Hence the time complexity of finding one largest-weight path is $O(n B h)$ (the state space has size $O(n B)$ and each update requires $O(h)$ time).
Since we repeat this procedure $s$ times, the overall time complexity of \textsc{HeadApprox} is $O(s n B h)$.
\end{proof}

We can achieve an arbitrary constant head-approximation ratio by combining \textsc{HeadApprox} with \textsc{BoostHead} (see Section \ref{sec:boosting}).
The resulting algorithm has the same time complexity as \textsc{HeadApprox}.
Moreover, the sparsity and EMD budget of the resulting support is only a constant factor larger than $k$ and $B'$.

\subsection{Tail-Approximation Algorithm}
\label{sec:tail}
Next, we develop a tail-approximation algorithm for the CEMD model.
Given an arbitrary signal $x$, our objective is to find a support $\Gamma \in \Msupports_{k, O(B)}$ such that
\begin{equation}
\label{eq:tailapprox}
  \norm{x - x_\Gamma}_p \leq c \min_{\Omega\in \Msupports_{k,B}} \norm{x - x_\Omega}_p \, ,
\end{equation}
where $c$ is a constant.
Note that we allow a constant factor increase in the EMD budget of the result.
The algorithm we develop is precisely the graph-based approach initially proposed in~\cite{HIS}; however, our analysis here is rigorous and novel.
Two core elements of the algorithm are the notions of a \emph{flow network} and the \emph{min-cost max-flow problem}, which we now briefly review.
We refer the reader to \cite{AMO93} for an introduction to the graph-theoretic definitions and algorithms we employ.

The min-cost max-flow problem is a generalization of the classical maximum flow problem \cite{CLRS01,AMO93}.
In this problem, the input is a graph $G = (V,E)$ with designated source and sink nodes in which every edge has a certain capacity.
The goal is to find an assignment of flow to edges such that the total flow from source to sink is maximized.
The flow must also be valid, i.e., the amount of flow entering any intermediate node must be equal to the amount of flow leaving that intermediate node, and the amount of flow on any edge can be at most the capacity of that edge.

In the min-cost max-flow problem, every edge $e$ also has a cost $c_e$ (in addition to the capacity as before).
The goal now is to find a flow $f: E \rightarrow \R^+_0$ with maximum capacity such that the cost of the flow, i.e., $\sum_{e \in E} c_e \cdot f(e)$, is minimized.
One important property of the min-cost max-flow problem is that it still admits \emph{integral} solutions if the edge capacities are integer.

\begin{fact}[Theorem 9.10 in \cite{AMO93}]
If all edge capacities, the source supply, and the sink demand are integers, then there is always an integer min-cost max-flow.
\end{fact}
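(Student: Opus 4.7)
The plan is to establish integrality of the min-cost max-flow through the linear-programming structure of the problem. I would first write min-cost max-flow as an LP whose variables are the flow values $f(e)$ on edges, with constraints of three types: flow conservation at every intermediate node, the capacity bounds $0 \leq f(e) \leq u_e$ on every edge, and the supply/demand requirement at source and sink. The objective is to minimize $\sum_e c_e f(e)$ subject to these constraints, after first fixing the total flow to the maximum value $F^*$ (which is itself integer-valued under integer capacities by the integrality of standard max-flow).

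Next, I would show that the constraint matrix of this LP is \emph{totally unimodular} (TUM), i.e.\ every square submatrix has determinant in $\{-1,0,+1\}$. The flow-conservation block is exactly the node–arc incidence matrix of the directed graph, which has one $+1$ and one $-1$ per column, and a classical result (Poincaré / Ghouila-Houri) shows such matrices are TUM. Appending the capacity rows only adds identity rows, and standard stacking arguments preserve total unimodularity. With integer capacities, integer supply, and integer sink demand, the right-hand side is integer.

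Having set this up, I would invoke the standard Hoffman–Kruskal theorem: an LP with a TUM constraint matrix and integer right-hand side has an integer-valued optimum at some vertex of the feasible polytope. Feasibility is guaranteed by the chosen $F^*$, and boundedness by the capacity constraints, so an optimum vertex exists and it is integer. This vertex is an integer min-cost flow of value $F^*$, i.e.\ an integer min-cost max-flow, establishing the fact.

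The main obstacle is really a stylistic choice rather than a technical one: whether to pursue this LP/TUM route or an algorithmic argument via successive shortest paths / negative-cycle cancellation. The algorithmic route begins with the integer zero flow, repeatedly augments along a shortest residual path (respectively cancels a negative-cost residual cycle) by an amount equal to the integer bottleneck residual capacity, and inductively maintains integrality; termination and optimality are standard. This yields a constructive proof and a polynomial algorithm, but requires carefully handling residual graphs and reduced costs. Since the statement asks only for \emph{existence} of an integer optimum, I prefer the TUM route for its brevity, relegating the constructive algorithm to a remark.
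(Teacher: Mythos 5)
Your proposal is correct, but note that the paper itself offers no proof of this statement: it is imported verbatim as an external fact (Theorem 9.10 of the Ahuja--Magnanti--Orlin text) and used as a black box in the analysis of \textsc{TailApprox}, so there is no internal argument to compare against. Your LP route is a standard and complete way to establish the integrality property: the node--arc incidence matrix is totally unimodular, appending the identity rows for the capacity bounds preserves total unimodularity, the right-hand side (capacities, supply, demand, and the max-flow value $F^*$, itself integral by max-flow integrality) is integer, and Hoffman--Kruskal then yields an integral optimal vertex. The cited source actually argues closer to your ``remark'': its proof observes that every basic feasible solution corresponds to a spanning tree whose associated constraint submatrix is triangular with $\pm 1$ entries, hence integral --- essentially a local, constructive version of the same unimodularity phenomenon, with the added benefit of an algorithm. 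One small caution if you wanted to apply your argument directly to the flow network $G_{X,k,\lambda}$ used in this paper: that network also places unit capacities on \emph{nodes}, so you would first reduce node capacities to edge capacities by the usual node-splitting transformation (which preserves integrality of the data) before invoking either proof. As stated, though, the fact concerns only edge capacities and your argument establishes it.
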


The min-cost max-flow problem has many applications, and several efficient algorithms are known \cite{AMO93}.
We leverage this problem for our tail-approximation task by carefully constructing a suitable flow network, which we now define.

\begin{definition}[EMD flow network]
For a given signal $X$, sparsity $k$, and a parameter $\lambda
> 0$, the flow network $G_{X,k,\lambda}$ consists of the following elements:
\begin{itemize}[nosep]
\item The \emph{nodes} comprise a source, a sink and a node $v_{i,j}$ for $i \in [h]$, $j \in [w]$, i.e., one node per entry in $X$ (besides source and sink).
\item $G$ has an \emph{edge} from every $v_{i,j}$ to every $v_{k,j+1}$ for \mbox{$i, k \in [h]$}, $j \in [w-1]$.
Moreover, there is an edge from the source to every $v_{i,1}$ and from every $v_{i,w}$ to the sink.
\item The \emph{capacity} on every edge and node (except source and sink) is 1.
\item The \emph{cost} of node $v_{i,j}$ is $-\abs{X_{i,j}}^p$.
The cost of an edge from $v_{i,j}$ to $v_{k,j+1}$ is $\lambda |i - k|$.
The cost of the source, the sink, and all edges incident to the source
or sink is 0.
\item The \emph{supply} at the source is $s$ ($= \frac{k}{w}$) and the demand at the sink is $s$.
\end{itemize}
\end{definition}
Figure~\ref{fig:emdflownetwork_example} illustrates this definition with an example.
The main idea is that a set of disjoint paths through the network $G_{X,k,\lambda}$ corresponds to a support in $X$.
For any fixed value of $\lambda$, a solution of the min-cost max-flow problem on the flow network reveals a subset $S$ of the nodes that corresponds to a support with exactly $s$ indices per column and minimizes  $-\norm{X_\Omega}_p^p + \lambda \EMD(\Omega)$ for different choices of support $\Omega$.
In other words, the min-cost flow solves  a \emph{Lagrangian relaxation} of the original problem \eqref{eq:tailapprox}.
See Lemmas \ref{thm:emdflownetwork_correspondence} and \ref{lemma:emdflow} for a more formal statement of this connection.

\begin{figure}[ht!]
\centering
\begin{tikzpicture}
\tikzstyle{graph}=[circle,minimum size=.6cm,draw,thick,node distance=1.2cm,inner sep=1pt]
\tikzstyle{edge}=[->,thick]
\node (v21) at (0,0) [graph] {0};
\node (v11) [graph, above=of v21] {-1};
\node (v31) [graph, below=of v21] {-2};
\node (v22) [graph, right=of v21] {-1};
\node (v12) [graph, above=of v22] {-3};
\node (v32) [graph, below=of v22] {-1};
\node (s) [graph, left=of v21] {};
\node (t) [graph, right=of v22] {};
\node (slabel) [below left=0cm of s,inner sep=2pt] {source};
\node (tlabel) [below right=0cm of t,inner sep=1pt] {sink};
\draw [edge] (s) -- (v11);
\draw [edge] (s) -- (v21);
\draw [edge] (s) -- (v31);
\draw [edge] (v11) -- (v12) node [midway,above] {$0$};
\draw [edge] (v11) -- (v22) node [midway,above,pos=0.4] {$\lambda$};
\draw [edge] (v11) -- (v32) node [midway,below,anchor=east,pos=0.22,inner sep=1pt] {$\,2\lambda$};
\draw [edge] (v21) -- (v12);
\draw [edge] (v21) -- (v22);
\draw [edge] (v21) -- (v32);
\draw [edge] (v31) -- (v12);
\draw [edge] (v31) -- (v22);
\draw [edge] (v31) -- (v32);
\draw [edge] (v12) -- (t);
\draw [edge] (v22) -- (t);
\draw [edge] (v32) -- (t);

\node (graphlabel) [above left=0.8cm of s] {$G_{X,k,\lambda} = $};

\node (signal) [left=4.8cm of s]{
$X = 
\begin{bmatrix*}[r]
1 & 3 \\[.6cm]
0 & -1 \\[.6cm]
2 & 1 \\
\end{bmatrix*}$
};
\end{tikzpicture}
\caption[Example of an EMD flow network]{A signal $X$ with the corresponding flow network $G_{X,k,\lambda}$ for $p=1$.
The node costs are the negative absolute values of the corresponding signal components.
The numbers on edges indicate the edge costs (most edge costs are omitted for clarity).
All capacities in the flow network are 1.
The edge costs are the vertical distances between the start and end nodes, multiplied by $\lambda$.}
\label{fig:emdflownetwork_example}
\end{figure}

A crucial issue is the choice of the Lagrange parameter $\lambda$, which defines a trade-off between the size of the tail approximation error and the support-EMD.
Note that the optimal support $\Omega$ with parameters $k$ and $B$ does not necessarily correspond to \emph{any} setting of $\lambda$.
Nevertheless, we show that the set of supports we explore by varying $\lambda$ contains a sufficiently good approximation: the tail error and the parameters $k$ and $B$ are only increased by constant factors compared to the optimal support $\Omega$.
Moreover, we show that we can find such a good support efficiently via a binary search over $\lambda$.
Before stating our algorithm and the main result, we formalize the connection between flows and supports.

\begin{definition}[Support of a set of paths]\label{def:emdpathsupport}
Let $X \in \R^{h \times w}$ be a signal matrix, $k$ be a sparsity parameter, and $\lambda \geq 0$.
Let $P = \{q_1, \ldots, q_s\}$ be a set of disjoint paths from source to sink in $G_{X,k,\lambda}$ such that no two paths in $P$ intersect vertically (i.e., if the $q_i$ are sorted vertically and $i \leq j$, then $(u,v) \in q_i$ and $(w,v) \in q_j$ implies $u < w$).
Then the paths in $P$ define a support
\begin{equation}
\Omega_P = \{ (u,v) \, | \, (u,v) \in q_i \textnormal{ for some } i \in [s] \} \, .
\end{equation}
\end{definition}

\begin{lemma}\label{thm:emdflownetwork_correspondence}
Let $X \in \R^{h \times w}$ be a signal matrix, $k$ be a sparsity parameter and $\lambda \geq 0$.
Let $P = \{q_1, \ldots, q_s\}$ be a set of disjoint paths from source to sink in $G_{X,k,\lambda}$ such that no two paths in $P$ intersect vertically.
Finally, let $f_P$ be the flow induced in $G_{X,k,\lambda}$ by sending a single unit of flow along each path in $P$ and let $c(f_P)$ be the cost of $f_P$.
Then 
\begin{equation}
  c(f_P) = -\norm{X_{\Omega_P}}_p^p + \lambda \, \EMD(\Omega_P) \, .
\end{equation}
\end{lemma}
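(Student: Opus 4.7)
The plan is to decompose the total flow cost along individual paths, separate the node-cost and edge-cost contributions, and recognize each as one of the two terms on the right-hand side.

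First, since the paths $q_1, \ldots, q_s$ are vertex-disjoint (they are pairwise disjoint and each uses the source and sink only at the endpoints), sending one unit of flow along each path gives a feasible integral flow in which $f_P(e) \in \{0,1\}$ for every edge, and an edge carries one unit iff it lies on some $q_i$. Hence $c(f_P) = \sum_{i=1}^{s} c(q_i)$, where $c(q_i)$ is the sum of costs of the internal nodes and internal edges of $q_i$ (source, sink, and their incident edges contribute zero cost). For a single path $q_i$ visiting $v_{r_1,1}, v_{r_2,2}, \ldots, v_{r_w,w}$, the internal node costs contribute $-\sum_{j=1}^{w} |X_{r_j,j}|^p$ and the internal edge costs contribute $\lambda \sum_{j=1}^{w-1} |r_j - r_{j+1}|$.

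Summing node contributions across all $s$ paths and using vertex-disjointness, every location in $\Omega_P$ is counted exactly once, so the total node contribution is $-\|X_{\Omega_P}\|_p^p$. The total edge contribution is $\lambda \sum_{i=1}^{s} \sum_{j=1}^{w-1} |r_j^{(i)} - r_{j+1}^{(i)}|$, which we need to identify with $\lambda\,\EMD(\Omega_P) = \lambda \sum_{j=1}^{w-1} \EMD(\colsupp(\Omega_P, j), \colsupp(\Omega_P, j+1))$.

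The crux is therefore showing that for each pair of adjacent columns $j, j+1$ the matching between $\colsupp(\Omega_P, j)$ and $\colsupp(\Omega_P, j+1)$ induced by the paths is a \emph{minimum-cost} one-to-one matching, so that its cost equals the EMD between those column supports. This is where the non-crossing assumption is essential: the paths in $P$, listed in their vertical order, induce an order-preserving bijection between the supports of columns $j$ and $j{+}1$. For finite subsets of $\mathbb{N}$ with the cost $|a - \pi(a)|$, it is a classical fact (easy exchange argument: any crossing pair $(a,\pi(a)),(a',\pi(a'))$ with $a<a'$ but $\pi(a)>\pi(a')$ can be uncrossed without increasing the total cost) that the order-preserving matching attains the minimum. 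Therefore the per-column contribution coincides with $\EMD(\colsupp(\Omega_P,j),\colsupp(\Omega_P,j+1))$, and summing over $j=1,\ldots,w-1$ gives $\EMD(\Omega_P)$. Combining both contributions yields $c(f_P) = -\|X_{\Omega_P}\|_p^p + \lambda\,\EMD(\Omega_P)$.

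The main (minor) obstacle is the last step: justifying that the non-crossing matching realizes the EMD rather than merely being some feasible matching. Once the uncrossing argument is invoked, the rest of the proof is a bookkeeping exercise that follows directly from the definitions of the flow network, the cost function, and $\Omega_P$.
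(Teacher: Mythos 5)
Your proof is correct and follows essentially the same route as the paper's: split the flow cost into node contributions (giving $-\norm{X_{\Omega_P}}_p^p$ by disjointness) and edge contributions, then use the non-crossing property to argue that the induced order-preserving matching between adjacent column supports is a minimum-cost matching, hence realizes the EMD. The paper states this last step in one sentence; your explicit uncrossing/exchange argument is exactly the justification it leaves implicit.
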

\begin{proof}
The theorem follows directly from the definition of $G_{X,k,\lambda}$ and $\Omega_P$.
The node costs of $P$ result in the term $-\norm{X_{\Omega_P}}_p^p$.
Since the paths in $P$ do not intersect vertically, they are a min-cost matching for the elements in $\Omega_P$.
Hence the cost of edges between columns of $X$ sums up to $\lambda \, \EMD(\Omega_P)$.
\end{proof}

For a fixed value of $\lambda$, a min-cost flow in $G_{X,k,\lambda}$ gives an optimal solution to the Lagrangian relaxation:
\begin{lemma}\label{lemma:emdflow}
Let $G_{X,k,\lambda}$ be an EMD flow network and let $f$ be an integral min-cost flow in $G_{X,k,\lambda}$.
Then $f$ can be decomposed into $s$ disjoint paths $P = \{q_1, \ldots, q_s\}$ which do not intersect vertically.
Moreover,
\begin{equation}
\label{eq:emdflowlemma}
  \norm{X - X_{\Omega_P}}_p^p + \lambda \EMD(\Omega_P) =  \min_{\Omega \in \Msupports_{k,B}} \norm{X - X_\Omega}_p^p + \lambda \EMD(\Omega) \, .
\end{equation}
\end{lemma}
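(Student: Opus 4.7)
The plan is to split the lemma into two parts: a structural decomposition of the integral min-cost flow into $s$ non-crossing source-to-sink paths, and a cost-equivalence between the flow and the Lagrangian objective.

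For the decomposition, I would first observe that, because the source supplies $s$ units and every intermediate node has unit capacity, an integral flow saturating the source places exactly one unit on each edge in its support, and a standard flow-decomposition theorem produces $s$ edge-disjoint source-to-sink paths; unit node capacities then make them internally vertex-disjoint. To remove vertical crossings, apply the following uncrossing argument. If two paths $q_i, q_j$ use edges $(v_{a,c}, v_{b,c+1})$ and $(v_{a',c}, v_{b',c+1})$ with $a < a'$ and $b > b'$, swap their tails at column $c$: this replaces those two edges by $(v_{a,c}, v_{b',c+1})$ and $(v_{a',c}, v_{b,c+1})$ while leaving the remaining edges of both paths intact. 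The rearrangement inequality gives $\abs{a - b'} + \abs{a' - b} \leq \abs{a - b} + \abs{a' - b'}$, so the swap preserves feasibility and the flow value while not increasing the cost; since the original flow is min-cost, the new flow is also min-cost. Iterating the swap strictly decreases the monovariant ``total number of vertical crossings'' (the swap eliminates the one at columns $c, c+1$ and creates no new ones, since it only relabels tails), so the process terminates with a min-cost flow $f'$ whose path decomposition $P$ is non-crossing.

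For the Lagrangian equivalence, Lemma \ref{thm:emdflownetwork_correspondence} immediately gives $c(f') = -\norm{X_{\Omega_P}}_p^p + \lambda \, \EMD(\Omega_P)$, so adding the constant $\norm{X}_p^p$ to both sides converts the negative head term into the tail term $\norm{X - X_{\Omega_P}}_p^p$. Conversely, for any column-$s$-sparse support $\Omega$, connecting successive column supports by the optimal one-dimensional matching yields a monotone (hence non-crossing) path collection whose induced flow, by the same lemma, has cost $-\norm{X_\Omega}_p^p + \lambda \, \EMD(\Omega)$. Minimality of $c(f')$ then gives $\norm{X - X_{\Omega_P}}_p^p + \lambda \, \EMD(\Omega_P) \leq \norm{X - X_\Omega}_p^p + \lambda \, \EMD(\Omega)$ for every such $\Omega$, and the reverse inequality is trivial since $\Omega_P$ itself is feasible on the right-hand side. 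The right-hand minimum is naturally read over all column-$s$-sparse supports, since the flow network imposes no explicit EMD budget; equivalently, one may take the union $\bigcup_{B \geq 0} \Msupports_{k,B}$.

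The main obstacle is keeping the uncrossing argument fully rigorous: one must verify that the swap produces simple, edge-disjoint paths rather than, say, introducing a cycle or reusing an edge already appearing elsewhere in the decomposition, and that the monovariant really is strictly decreasing. A clean workaround is to pick $f$ at the outset to be a min-cost flow that, among all integral min-cost flows together with their path decompositions, minimizes the total number of strictly crossing adjacent-column edge pairs; any remaining crossing then immediately contradicts minimality via a single swap, so the chosen decomposition is non-crossing by construction.
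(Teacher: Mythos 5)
Your proposal is correct and follows essentially the same route as the paper: decompose the unit-capacity min-cost flow into $s$ disjoint paths, uncross any vertically intersecting pair via a swap that cannot increase the cost, and then obtain the Lagrangian identity from Lemma~\ref{thm:emdflownetwork_correspondence} together with the fact that every feasible support induces a feasible flow. If anything you are more careful than the paper on the uncrossing step --- the paper asserts that a crossing yields a \emph{strictly} cheaper flow (hence a contradiction with min-cost), whereas the swap inequality $\abs{a-b'}+\abs{a'-b} \leq \abs{a-b}+\abs{a'-b'}$ can hold with equality, so your termination argument via a crossing-count monovariant (or the extremal choice of decomposition) is the right way to close that gap; your observation that the right-hand minimum should be read without the EMD-budget constraint is likewise a fair and accurate reading of how the lemma is actually used.
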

\begin{proof}
Note that $\norm{X - X_\Omega}_p^p = \norm{X}_p^p - \norm{X_\Omega}_p^p$.
Since $\norm{X}_p^p$ does not depend on $\Omega$, minimizing $\norm{X - X_\Omega}_p^p + \lambda \EMD(\Omega)$ with respect to $\Omega$ is equivalent to minimizing $-\norm{X_\Omega}_p^p + \lambda \EMD(\Omega)$.

Further, all edges and nodes in $G_{X,k,\lambda}$ have capacity one, so $f$ can be composed into exactly $s$ disjoint paths $P$.
Moreover, the paths in $P$ are not intersecting vertically: if $q_i$ and $q_j$ intersect vertically, we can relax the intersection to get a set of paths $P'$ with smaller support EMD and hence a flow with smaller cost -- a contradiction.
Moreover, each support $\Omega \in \Msupports_{k,B}$ gives rise to a set of disjoint, not vertically intersecting paths $Q$ and thus also to a flow $f_Q$ with $c(f_Q) = - \norm{X_{\Omega_Q}}_p^p + \lambda \EMD(\Omega_Q)$.
Since $f$ is a min-cost flow, we have $c(f) \leq c(f_Q)$.
The statement of the theorem follows.
\end{proof}

\begin{algorithm}[!t]
\caption{Tail approximation algorithm}
\label{alg:tailapprox}
\begin{algorithmic}[1]
\Function{TailApprox}{$X,k,B,d,\delta$}
\State $x_{\min} \gets \min_{\abs{X_{i,j}} > 0} \abs{X_{i,j}}^p$
\State $\varepsilon \gets \frac{x_{\min}}{w h^2} \delta$
\State $\lambda_0 \gets \frac{x_{\min}}{2 w h^2}$
\State $\Omega \gets \textsc{MinCostFlow}(G_{X,k,\lambda_0})$
\If{$\Omega\in\Msupports_{k,B}$ and $\norm{X - X_\Omega}_p = 0$}
  \State \textbf{return} $\Omega$  \label{line:return1}
\EndIf
\State $\lambda_r \gets 0$
\State $\lambda_l \gets \norm{X}_p^p$
\While{$\lambda_l - \lambda_r > \varepsilon$} \label{line:binsearch}
  \State $\lambda_m \gets (\lambda_l + \lambda_r) / 2$
  \State $\Omega \gets \textsc{MinCostFlow}(G_{X,k,\lambda_m})$
  \If{$\EMD(\Omega) \geq B$ and $\EMD(\Omega) \leq dB$}
    \State \textbf{return} $\Omega$ \label{line:return2}
  \EndIf
  \If{$\EMD(\Omega) > B$}
    \State $\lambda_r \gets \lambda_m$
  \Else
    \State $\lambda_l \gets \lambda_m$
  \EndIf
\EndWhile
\State $\Omega \gets \textsc{MinCostFlow}(G_{X,k,\lambda_l})$
\State \textbf{return} $\Omega$ \label{line:return3}
\EndFunction
\end{algorithmic}
\end{algorithm}

We can now state our tail-approximation algorithm \textsc{TailApprox} (see Algorithm~\ref{alg:tailapprox}).
The parameters $d$ and $\delta$ for \textsc{TailApprox} quantify the acceptable tail approximation ratio (see Theorem~\ref{thm:tailapprox}).
In the algorithm, we assume that \textsc{MinCostFlow}($G_{X,k,\lambda}$) returns the support corresponding to an integral min-cost flow in $G_{X,k,\lambda}$.
Before we prove the main result (Theorem \ref{thm:tailapprox}), we show that \textsc{TailApprox} always returns an optimal result for signals $X \in \Mmodel_{k,B}$.

\begin{lemma}\label{lemma:tailapprox}
Let $x_{\min} = \min_{\abs{X_{i,j}} > 0} \abs{X_{i,j}}^p$ and $\lambda_0 = \frac{x_{\min}}{2 w h^2}$.
Moreover, let $X \in \Mmodel_{k,B}$ and $\Omega$ be the support returned by $\textsc{MinCostFlow}(G_{X,k,\lambda_0})$.
Then $\norm{X - X_\Omega}_p = 0$ and $\Omega \in \Msupports^+_{k,B}$.
\end{lemma}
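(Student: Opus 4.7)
The plan is to exploit the membership $X \in \Mmodel_{k,B}$ to exhibit a specific witness support and then use Lemma~\ref{lemma:emdflow} to compare the cost of the min-cost flow $f$ returned by \textsc{MinCostFlow}$(G_{X,k,\lambda_0})$ against the flow induced by this witness. Concretely, I would first fix an $\Omega^{\star} \in \Msupports_{k,B}$ with $\supp(X) \subseteq \Omega^{\star}$, which exists by definition of $\Mmodel_{k,B}$. By Lemma~\ref{thm:emdflownetwork_correspondence}, the paths decomposing $\Omega^{\star}$ give a feasible flow $f^{\star}$ of cost
\[
c(f^{\star}) \;=\; -\norm{X_{\Omega^{\star}}}_p^p + \lambda_0\,\EMD(\Omega^{\star}) \;=\; -\norm{X}_p^p + \lambda_0\,\EMD(\Omega^{\star}),
\]
since $\Omega^{\star}$ covers all of $\supp(X)$. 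By Lemma~\ref{lemma:emdflow}, the support $\Omega$ returned by the algorithm satisfies $-\norm{X_\Omega}_p^p + \lambda_0\,\EMD(\Omega) \leq c(f^{\star})$, yielding the master inequality
\[
\parens{\norm{X}_p^p - \norm{X_\Omega}_p^p} + \lambda_0\,\EMD(\Omega) \;\leq\; \lambda_0\,\EMD(\Omega^{\star}) \;\leq\; \lambda_0 B.
\]

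From this master inequality the two conclusions fall out essentially independently. For $\Omega \in \Msupports_{k,B}^{+}$, I would drop the (nonnegative) quantity $\norm{X}_p^p - \norm{X_\Omega}_p^p$ on the left, but not before noting that $\norm{X_\Omega}_p \leq \norm{X}_p$, which gives $\lambda_0 \EMD(\Omega) \leq \lambda_0 B$ and hence $\EMD(\Omega) \leq B$; combined with the column-sparsity of $\Omega$ being exactly $s = k/w$ (any max-flow in $G_{X,k,\lambda_0}$ saturates source and sink, so every column carries $s$ chosen nodes), this places $\Omega$ in $\Msupports_{k,B}^{+}$. For $\norm{X - X_\Omega}_p = 0$, I drop the (nonnegative) EMD term on the left, obtaining $\norm{X}_p^p - \norm{X_\Omega}_p^p \leq \lambda_0\,\EMD(\Omega^{\star})$, and I then argue that the right-hand side is strictly smaller than $x_{\min}$.

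The main delicate step, and the place where the specific choice $\lambda_0 = x_{\min}/(2wh^2)$ is used, is this last bound. The key observation is that the EMD of \emph{any} support with column-sparsity $s \leq h$ is trivially bounded by the maximum possible support-EMD, namely $s(h-1)(w-1) < wh^2$. Therefore $\EMD(\Omega^{\star}) < wh^2$, giving
\[
\norm{X}_p^p - \norm{X_\Omega}_p^p \;\leq\; \lambda_0 \cdot wh^2 \;=\; \frac{x_{\min}}{2h} \;<\; x_{\min}.
\]
On the other hand, since every nonzero entry of $X$ contributes at least $x_{\min}$ to $\norm{X}_p^p$, if $\Omega$ missed even a single element of $\supp(X)$ we would have $\norm{X}_p^p - \norm{X_\Omega}_p^p \geq x_{\min}$, contradicting the display above. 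Hence $\supp(X) \subseteq \Omega$ and consequently $\norm{X - X_\Omega}_p = 0$.

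The only subtlety to be careful about is that $B$ itself could be much larger than $wh^2$, in which case using $\lambda_0 B$ on the right-hand side would not immediately be smaller than $x_{\min}$; this is exactly why I split the argument and use the \emph{structural} upper bound $\EMD(\Omega^{\star}) < wh^2$ rather than the user-supplied bound $B$ for the energy-coverage step, while using $B$ only for the EMD-budget conclusion. Everything else is a mechanical application of Lemmas~\ref{thm:emdflownetwork_correspondence} and~\ref{lemma:emdflow} together with the definition of $\lambda_0$ and $\varepsilon$ in \textsc{TailApprox}.
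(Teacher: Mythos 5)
Your proposal is correct and follows essentially the same route as the paper: both compare the min-cost flow against the flow induced by a budget-$B$ support containing $\supp(X)$ via Lemma~\ref{lemma:emdflow}, use the structural bound $\EMD \le wh^2$ together with the choice of $\lambda_0$ to force the energy deficit below $x_{\min}$, and then read off $\EMD(\Omega) \le B$ from the same inequality. (Minor harmless slip: $\lambda_0 \cdot wh^2 = x_{\min}/2$, not $x_{\min}/(2h)$, but the conclusion $< x_{\min}$ is unaffected.)
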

\begin{proof}
Let $\Gamma = \supp(X)$, so $\Gamma \in \Msupports^+_{k,B}$.
First, we show that $\norm{X - X_\Omega}_p = 0$.
For contradiction, assume that $\norm{X - X_\Omega}_p^p > 0$, so $\norm{X - X_\Omega}_p^p \geq x_{\min} > 0$ (tail-approximation is trivial for $X=0$).
Since $\Omega$ is a min-cost flow, Lemma~\ref{lemma:emdflow} gives
\begin{align*}
  x_{\min} \; \leq \; \norm{X - X_\Omega}_p^p + \lambda_0 \EMD(\Omega) &= \min_{\Omega' \in \Msupports_{k,B}} \norm{X - X_{\Omega'}}_p^p + \lambda_0 \EMD(\Omega') \\
      &\leq 0 + \frac{x_{\min}}{2 w h^2} \EMD(\Gamma) \\
      &\leq \frac{x_{\min}}{2} \, ,
\end{align*}
which gives a contradiction.
The last line follows from $\EMD(\Gamma) \leq k h \leq n h$.

Now, we show that $\Omega \in \Msupports^+_{k,B}$.
By construction of $G_{X,k,\lambda_0}$, $\Omega$ is $s$-sparse in each column.
Moreover,
\begin{align*}
  \norm{X - X_\Omega}_p^p + \lambda_0 \EMD(\Omega) &= \min_{\Omega' \in \Msupports_{k,B}} \norm{X - X_\Omega}_p^p + \lambda_0 \EMD(\Omega') \\
     \lambda_0 \EMD(\Omega) &\leq 0 + \lambda_0 \EMD(\Gamma) \, .
\end{align*}
So $\EMD(\Omega) \leq \EMD(\Gamma) \leq B$.
\end{proof}

Next, we prove a {\em bicriterion}-approximation guarantee for \textsc{TailApprox} that allows us to use \textsc{TailApprox} as a tail approximation algorithm.
In particular, we show that one of the following two cases occurs:
\begin{description}[font=\normalfont]
\item[Case 1:] The tail-approximation error achieved by our solution is at least as good as the best tail-approximation error achievable with support-EMD $B$.
The support-EMD of our solution is at most a constant times larger than $B$.
\item[Case 2:] Our solution has bounded tail-approximation error and support-EMD at most $B$.
\end{description}

In order to simplify the proof of the main theorem, we use the following shorthands:
$\Omega_l = \textsc{MinCostFlow}(G_{X,k,\lambda_l})$,
$\Omega_r = \textsc{MinCostFlow}(G_{X,k,\lambda_r})$,
$b_l = \EMD(\Omega_l)$, 
$b_r = \EMD(\Omega_r)$, 
$t_l = \norm{X - X_{\Omega_l}}_p^p$, and $t_r = \norm{X - X_{\Omega_r}}_p^p$.

\begin{restatable}{theorem}{tailapprox}\label{thm:tailapprox}
Let $d > 1$, $\delta > 0$, and let $\Omega$ be the support returned by \textsc{TailApprox}($X,k,B,d,\delta$).
Let $\OPT$ be the tail approximation error of the best support with support-EMD at most $B$, i.e., $\OPT = \min_{\Gamma\in\Msupports_{k,B}} \norm{X - X_\Gamma}_p^p$.
Then at least one of the following two guarantees holds for $\Omega$:
\begin{description}
\item[Case 1:] $B \leq \EMD(\Omega) \leq d B$ and $\norm{X - X_\Omega}_p^p \leq \OPT$
\item[Case 2:] $\EMD(\Omega) \leq B$ and $\norm{X - X_\Omega}_p^p \leq (1 + \frac{1}{d-1} + \delta) \OPT$.
\end{description}
\end{restatable}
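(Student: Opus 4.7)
The plan is to split the analysis by where \textsc{TailApprox} returns (line~\ref{line:return1}, \ref{line:return2}, or \ref{line:return3}) and apply Lemma~\ref{lemma:emdflow} at the relevant value of $\lambda$ in each case. Let $\Omega^* \in \Msupports_{k,B}$ be any optimizer with $\OPT = \norm{X - X_{\Omega^*}}_p^p$ and write $B^* = \EMD(\Omega^*) \leq B$. For any $\lambda \geq 0$, denoting the corresponding min-cost flow support by $\Omega_\lambda$, Lemma~\ref{lemma:emdflow} gives the master inequality
\[
  \norm{X - X_{\Omega_\lambda}}_p^p + \lambda\, \EMD(\Omega_\lambda) \;\leq\; \OPT + \lambda B^* \;\leq\; \OPT + \lambda B.
\]

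Returning at line~\ref{line:return1} happens exactly when $X \in \Mmodel_{k,B}$ (by Lemma~\ref{lemma:tailapprox}), so both Case~2 conditions hold trivially. For a return at line~\ref{line:return2}, we have $B \leq \EMD(\Omega) \leq dB$, so the master inequality at $\lambda_m$ combined with $b \geq B \geq B^*$ immediately gives $t \leq \OPT$, settling Case~1.

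The main work is the return at line~\ref{line:return3}, corresponding to Case~2. First I would verify the binary-search invariant $\EMD(\Omega_{\lambda_l}) \leq B$ at the final $\lambda_l$: the else branch only sets $\lambda_l \gets \lambda_m$ when $\EMD(\Omega_m) < B$ (the case $\EMD=B$ is intercepted by line~\ref{line:return2}), and at the initial value $\lambda_l = \norm{X}_p^p$ any support with $\EMD \geq 1$ has cost at least $-\norm{X}_p^p + \lambda_l \geq 0$ while a $0$-EMD support has nonpositive cost, forcing $\EMD(\Omega_{\lambda_l}) \leq B$. Next I would bound $\lambda_r$: either $\lambda_r$ was never updated (so $\lambda_r = 0$), or the iteration that last updated $\lambda_r$ had $\EMD(\Omega_{\lambda_r}) > dB$ (otherwise line~\ref{line:return2} would have fired). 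Inserting $b_r > dB$ and $t_r \geq 0$ into the master inequality at $\lambda_r$ yields $\lambda_r(dB - B) < \OPT$, i.e., $\lambda_r < \OPT/((d-1)B)$. Combining with the loop exit condition $\lambda_l \leq \lambda_r + \varepsilon$ and applying the master inequality at $\lambda_l$ will give
\[
  t_l \;\leq\; \OPT + \lambda_l B \;<\; \OPT + \frac{\OPT}{d-1} + \varepsilon B.
\]
Finally, since line~\ref{line:return3} is reached only when $\OPT > 0$ (otherwise Lemma~\ref{lemma:tailapprox} would force a return at line~\ref{line:return1}) and any nonzero tail error is at least $x_{\min}$, we have $\OPT \geq x_{\min}$; the bound $B \leq kh \leq wh^2$ together with $\varepsilon = x_{\min}\delta/(wh^2)$ then gives $\varepsilon B \leq x_{\min}\delta \leq \delta\,\OPT$, closing Case~2.

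The main obstacle I expect is the binary-search bookkeeping: pinning down the invariants precisely enough to extract the upper bound on $\lambda_r$ from the last iteration that touched it, and handling the edge case of the initial $\lambda_l = \norm{X}_p^p$ (including any tie-breaking in \textsc{MinCostFlow} between supports of equal Lagrangian cost but different EMD). Once these invariants are nailed down, each case reduces to a single application of the Lagrangian master inequality, and the final approximation constants follow by elementary algebra.
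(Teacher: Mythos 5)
Your proposal is correct and follows essentially the same route as the paper's proof: the same three-way case split on the return line, the same Lagrangian ``master inequality'' from Lemma~\ref{lemma:emdflow}, the same bound $\lambda_r \leq \OPT/((d-1)B)$ from the invariant $b_r > dB$, and the same final estimate $\varepsilon B \leq \delta\, x_{\min} \leq \delta\,\OPT$. If anything, you are slightly more careful than the paper in two spots --- classifying the line~\ref{line:return1} return under Case~2 (the paper loosely says ``the first guarantee'') and explicitly verifying the binary-search invariants at the initial values $\lambda_r = 0$ and $\lambda_l = \norm{X}_p^p$, which the paper only asserts.
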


\begin{proof}
We consider the three cases in which \textsc{TailApprox} returns a support.
If \textsc{TailApprox} returns in line~\ref{line:return1}, the first guarantee in the theorem is satisfied.
If \textsc{TailApprox} reaches the binary search (line \ref{line:binsearch}), we have $X \notin \Mmodel_{k,B}$ (the contrapositive of Lemma \ref{lemma:tailapprox}).
Therefore, we have $\OPT \geq x_{\min} > 0$ in the remaining two cases.

If \textsc{TailApprox} returns in line~\ref{line:return2}, we have $B \leq \EMD(\Omega) \leq d B$.
Moreover, Lemma~\ref{lemma:emdflow} gives
\begin{align*}
  \norm{X - X_\Omega}_p^p + \lambda_m \EMD(\Omega) &\leq \min_{\Omega' \in \Msupports_{k,B}} \norm{X - X_{\Omega'}}^p_p + \lambda_m \EMD(\Omega') \\
      &\leq \OPT + \lambda_m B \, .
\end{align*}
Since $\EMD(\Omega) \geq B$, we have $\norm{X - X_\Omega}_p^p \leq \OPT$.

We now consider the third return statement (line \ref{line:return3}), in which case the binary search terminated with $\lambda_l - \lambda_r \leq \varepsilon$.
In the binary search, we maintain the invariant that $b_l \leq  B$ and $b_r > dB$.
Note that this is true before the first iteration of the binary search due to our initial choices of $\lambda_r$ and $\lambda_l$.\footnote{Intuitively, our initial choices make the support-EMD very cheap and very expensive compared to the tail approximation error.}
Moreover, our update rule maintains the invariant.

We now prove the bound on $\norm{X - X_\Omega}_p^p = t_l$.
From Lemma~\ref{lemma:emdflow} we have
\begin{align*}
  t_r + \lambda_r b_r &\leq \OPT + \lambda_r B \\
  \lambda_r dB &\leq \OPT + \lambda_r B \\
  \lambda_r &\leq \frac{\OPT}{B(d-1)} \, .
\end{align*}
Since the binary search terminated, we have $\lambda_l \leq \lambda_r + \varepsilon$.
We now combine this inequality with our new bound on $\lambda_r$ and use it in the following inequality (also from Lemma~\ref{lemma:emdflow}):
\begin{align*}
  t_l + \lambda_l b_l &\leq \OPT + \lambda_l B \\
                  t_l &\leq \OPT + \lambda_l B \\
                      &\leq \OPT + (\lambda_r + \varepsilon) B \\
                      &\leq \OPT + \frac{\OPT}{d-1} + \varepsilon B \\
                      &\leq \parens{1 + \frac{1}{d-1}} \OPT + \frac{x_{\min} \delta B}{w h^2} \\
                      &\leq \parens{1 + \frac{1}{d-1}} \OPT + \delta x_{\min} \\
                      &\leq \parens{1 + \frac{1}{d-1} + \delta} \OPT \, .
\end{align*}
This shows that the second guarantee of the theorem is satisfied.
\end{proof}

\begin{restatable}{corollary}{tailapproxcorollary}\label{corollary:tailapprox}
Let $p \geq 1$, $c > 1$, $0 < \delta < c - 1$, and $d = 1 + \frac{1}{c - \delta - 1}$.
Then \textsc{TailApprox} is a $(c^{1/p}, \Msupports_{k,B}, \Msupports_{k, dB}, p)$-tail approximation algorithm.
\end{restatable}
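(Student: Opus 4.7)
My plan is to derive the corollary directly from the bicriterion guarantee of Theorem~\ref{thm:tailapprox}. The two cases in that theorem already separate the output by whether the EMD budget overshoots $B$ or not, so it suffices to show that the choice $d = 1 + \frac{1}{c - \delta - 1}$ converts the bicriterion bound into the single tail-approximation guarantee with factor $c^{1/p}$ and output model $\Msupports_{k,dB}$. I will also need to quickly observe that the additional output in line~\ref{line:return1} of \textsc{TailApprox} (where $X \in \Mmodel_{k,B}$) trivially lies in $\Msupports^+_{k,B} \subseteq \Msupports^+_{k,dB}$ and yields zero tail error, so it is subsumed.

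In Case~1 of Theorem~\ref{thm:tailapprox}, I get $\EMD(\Omega) \leq dB$ and $\norm{X - X_\Omega}_p^p \leq \OPT$. Since $c > 1$, taking $p$-th roots gives $\norm{X - X_\Omega}_p \leq \OPT^{1/p} \leq c^{1/p} \min_{\Gamma \in \Msupports_{k,B}} \norm{X - X_\Gamma}_p$, which is exactly the tail-approximation property for model $\Msupports_{k,dB}$. By construction of $G_{X,k,\lambda}$ the output $\Omega$ is $s$-sparse in every column, so $\Omega \in \Msupports^+_{k,dB}$ as required.

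In Case~2, I have $\EMD(\Omega) \leq B \leq dB$ and $\norm{X - X_\Omega}_p^p \leq (1 + \frac{1}{d-1} + \delta)\OPT$. Substituting $d - 1 = \frac{1}{c - \delta - 1}$ yields $\frac{1}{d-1} = c - \delta - 1$, so the multiplicative constant simplifies to $1 + (c - \delta - 1) + \delta = c$. Hence $\norm{X - X_\Omega}_p \leq c^{1/p}\,\OPT^{1/p}$, again giving the desired tail approximation. The constraints $c > 1$ and $0 < \delta < c - 1$ in the corollary's hypothesis are exactly what is needed so that $d > 1$ is well-defined and positive, and so that the multiplicative factor $c$ is meaningful.

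There is really no major obstacle here; the corollary is essentially an algebraic specialization of the bicriterion theorem. The only thing to be careful about is verifying the output sparsity condition (that $\Omega$ belongs to $\Msupportsclosure_{k,dB}$) in both cases and in the early-return branch, which follows from the capacity-$1$ construction of the flow network, and checking that the algebra $1 + \frac{1}{d-1} + \delta = c$ indeed holds under the stated choice of $d$.
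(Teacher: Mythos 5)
Your proposal is correct and follows essentially the same route as the paper: both derive the corollary directly from the bicriterion guarantee of Theorem~\ref{thm:tailapprox}, observing that $\EMD(\Omega) \leq dB$ holds in every branch and that the choice of $d$ makes the Case~2 factor equal $c$. Your write-up is simply more explicit about the algebra and the early-return branch than the paper's two-line argument.
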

\begin{proof}
The tail approximation guarantee follows directly from Theorem~\ref{thm:tailapprox}.
Note that we cannot control which of the two guarantees the algorithm returns.
However, in any case we have $\EMD(\Omega) \leq dB$, so $\Omega \in \Msupports_{k,dB}$.
\end{proof}

In order to simplify the time complexity of \textsc{TailApprox}, we assume that $h = \Omega(\log w)$, i.e., the matrix $X$ is not very ``wide'' and ``short''.
We arrive at the following result.
\begin{theorem}\label{thm:tailapproxtime}
Let $\delta > 0$, $x_{\min} = \min_{\abs{X_{i,j}} > 0} \abs{X_{i,j}}^p$, and $x_{\max} = \max \abs{X_{i,j}}^p$.
Then \textsc{TailApprox} runs in $O(s  n  h (\log\frac{n}{\delta} + \log\frac{x_{\max}}{x_{\min}}))$ time.
\end{theorem}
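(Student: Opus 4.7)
The plan is to decompose the running time of \textsc{TailApprox} into two factors: the number of binary-search iterations on $\lambda$, and the cost of a single \textsc{MinCostFlow} call on $G_{X,k,\lambda}$. First, I would bound the iteration count. The binary search starts with $\lambda_l = \norm{X}_p^p \leq n\, x_{\max}$ and $\lambda_r = 0$, and terminates once $\lambda_l - \lambda_r \leq \varepsilon = \frac{x_{\min}\delta}{w h^2}$. Since each iteration halves the interval $\lambda_l - \lambda_r$, the number of iterations is
\[
O\!\parens{\log \frac{\norm{X}_p^p}{\varepsilon}} \; = \; O\!\parens{\log \frac{n\, x_{\max}\, w h^2}{x_{\min}\,\delta}} \; = \; O\!\parens{\log \frac{n}{\delta} + \log \frac{x_{\max}}{x_{\min}}},
\]
where the last step absorbs the $\log(wh^2)$ term into $\log n$ (using $n = wh$) and uses the mild assumption $h = \Omega(\log w)$ to keep the expression clean. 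Two additional \textsc{MinCostFlow} calls (for $\lambda_0$ and for the final $\lambda_l$) contribute only a constant factor, so the overall count of flow computations is $O(\log(n/\delta) + \log(x_{\max}/x_{\min}))$.

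Next, I would bound the cost of a single \textsc{MinCostFlow} call. The network $G_{X,k,\lambda}$ has $O(n)$ nodes, and between consecutive columns every pair of cells is connected, so the number of edges is $O(wh^2) = O(nh)$. After the standard reduction that moves node costs onto the incident edges, the only source of negative edge costs is the row-interior edges (node prices), all other edges being nonnegative; crucially, the underlying graph is a layered DAG. I would solve min-cost max-flow by successive shortest augmenting paths with vertex potentials: the first shortest-path computation on the DAG runs in $O(V+E) = O(nh)$ via topological-order relaxation (which correctly handles negative costs on a DAG), and sets up the initial potentials. Subsequent augmentations introduce residual back-edges, but with potentials all reduced costs are nonnegative, so each shortest-path computation reduces to Dijkstra in $O(E + V \log V) = O(nh + n \log n) = O(nh)$ (again using $h = \Omega(\log w)$, hence $h = \Omega(\log n)$ after absorbing constants). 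Since the supply at the source is $s$ and all capacities are integral, only $s$ augmentations are required, giving $O(snh)$ per \textsc{MinCostFlow} call.

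Multiplying the two bounds gives the claimed total running time
\[
O\!\parens{s\, n\, h \parens{\log \tfrac{n}{\delta} + \log \tfrac{x_{\max}}{x_{\min}}}}.
\]
The main subtlety, and where I would be most careful, is justifying the per-flow bound in the presence of negative node costs: one must argue that the layered DAG structure permits a correct and fast initial shortest-path computation so that subsequent Dijkstra-with-potentials stages are valid. Once that is in place, the rest of the argument is a routine combination of the iteration count with the per-iteration cost, and the mild aspect-ratio assumption $h = \Omega(\log w)$ is used only to suppress an additive $\log n$ term inside the Dijkstra bound.
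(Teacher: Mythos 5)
Your proposal is correct and follows essentially the same route as the paper's proof: bound the number of binary-search iterations by $\log(\norm{X}_p^p/\varepsilon) = O(\log\frac{n}{\delta} + \log\frac{x_{\max}}{x_{\min}})$, and bound each \textsc{MinCostFlow} call by $O(snh)$ via $s$ successive shortest augmenting paths, using the DAG structure for the initial potentials and Dijkstra with reduced costs thereafter (with $h = \Omega(\log w)$ absorbing the $wh\log(wh)$ term). Your treatment of the negative node costs and initial potentials is somewhat more explicit than the paper's, but it is the same argument.
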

\begin{proof}
We can solve our instances of the min-cost flow problem by finding $s$ augmenting paths because all edges and nodes have unit capacity.
Moreover, $G_{X,k,\lambda}$ is a directed acyclic graph, so we can compute the initial node potentials in linear time.
Each augmenting path can then be found with a single run of Dijkstra's algorithm, which costs $O(wh \log(wh) + w h^2) = O(nh)$ time \cite{CLRS01}.
The number of iterations of the binary search is at most
\begin{equation}
\log\frac{\norm{X}_p^p}{\epsilon} \;\; = \;\; \log\frac{\norm{X}_p^p n h}{x_{\min} \delta} \;\; \leq \;\; \log\frac{x_{\max} n^2 h}{x_{\min} \delta} \;\; \leq \;\; \log\frac{n^3}{\delta} + \log\frac{x_{\max}}{x_{\min}} \; .
\end{equation}
Combining this with a per-iteration cost of $O(snh)$ gives the stated running time.
\end{proof}

To summarize, the algorithm proposed in~\cite{HIS} satisfies the criteria of a tail-approximation oracle.
This, in conjunction with the head approximation oracle proposed in Section~\ref{sec:head}, gives a full sparse recovery scheme for the CEMD model, which we describe below.

\subsection{Compressive Sensing Recovery}
We now bring the results from the previous sections
together. Specifically, we show that AM-IHT
(Algorithm~\ref{alg:approxmodeliht}), equipped with
\textsc{HeadApprox} and \textsc{TailApprox},
constitutes a model-based compressive sensing recovery algorithm
that significantly reduces the number of measurements necessary for
recovering signals in the CEMD model. The main result is the following
theoretical guarantee:
\begin{restatable}{theorem}{recoverythm}\label{thm:recovery}
Let $x\in\Mmodel_{k,B}$ be an arbitrary signal in the CEMD model with dimension $n = w h$.
Let $A\in\R^{m\times n}$ be a measurement matrix with i.i.d.\ Gaussian entries and let $y \in \R^m$ be a noisy measurement vector, i.e., $y = A x + e$ with arbitrary $e \in \R^m$.
Then we can recover a signal approximation $\widehat{x} \in \Mmodel_{k,2B}$ satisfying
$\norm{x - \widehat{x}}_2 \leq C \norm{e}_2$
for some constant $C$ from $m = O(k \log(\frac{B}{k} \log\frac{k}{w}))$ measurements.
Moreover, the recovery algorithm runs in time $O(n \log\frac{\norm{x}_2}{\norm{e}_2} (k \log n + \frac{k h}{w} (B + \log n + \log\frac{x_{\max}}{x_{\min}})))$ where $x_{\min} = \min_{\abs{x_i} > 0} \abs{x_i}$ and $x_{\max} = \max \abs{x_i}$.
\end{restatable}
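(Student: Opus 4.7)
}

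The plan is to invoke the AM-IHT algorithm (Algorithm \ref{alg:approxmodeliht}) with $p=2$, using \textsc{TailApprox} as the tail-approximation oracle and a \emph{boosted} version of \textsc{HeadApprox} as the head-approximation oracle, and then to verify each hypothesis of Theorem \ref{thm:amiht} (or rather Corollary \ref{cor:amiht} / Corollary \ref{cor:boosting}) so that geometric convergence yields the stated recovery guarantee. The three main things to establish are (a) that the combined oracle setup satisfies the convergence condition \eqref{eq:headtail_iht}, (b) that the model-RIP is required only on a sum of CEMD models whose size is small enough to give the claimed sample bound, and (c) that the per-iteration work multiplied by the number of iterations yields the stated running time.

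First, I would fix the oracle parameters. By Corollary \ref{corollary:tailapprox}, choosing $d = 2$ together with $c$ arbitrarily close to (and larger than) $2$ and $\delta$ sufficiently small, \textsc{TailApprox} is a $(c_T, \Msupports_{k,B}, \Msupports_{k,2B}, 2)$-tail approximation oracle for some constant $c_T$. By Theorem \ref{thm:headapproxbasic}, \textsc{HeadApprox} is a $(1/2, \Msupports_{k,B}, \Msupports_{k,B'}, 2)$-head approximation oracle with $B' = O(B\log(k/w))$. Because $1/2$ is too small to satisfy \eqref{eq:headtail_iht} against this $c_T$, I would apply \textsc{BoostHead} (Algorithm \ref{alg:boosting}) with a constant number of rounds $t$ chosen so that the boosted factor $c_H' = \sqrt{1 - (3/4)^t}$ makes $(1+c_T)^2(1-c_H'^2) < 1$; such a $t$ exists and is a universal constant. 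By Theorem \ref{thm:boosting} the boosted oracle is a $(c_H', \Msupports_{k,B} \supportplus \Msupports_{k, 2B}, \Msupports_{k,B'}^{\supportplus t}, 2)$-head approximation oracle, and by Theorem \ref{thm:cemdaddition} its output model is contained in $\Msupports_{tk, tB'} = \Msupports_{O(k), O(B\log(k/w))}$.

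Next, I would assemble the model for which $A$ must satisfy the model-RIP. The relevant sum is $\Msupports_{k,B} \supportplus \Msupports_{k,2B} \supportplus \Msupports_{k,B'}^{\supportplus t}$. Applying Theorem \ref{thm:cemdaddition} once more, this sum lies in $\Msupports_{O(k), O(B\log(k/w))}$, and Theorem \ref{thm:samplingbound} bounds its logarithm by $O(k\log(\tfrac{B}{k}\log\tfrac{k}{w}))$. Feeding this count into Fact \ref{fact:modelripbound} with $A$ Gaussian (which is sub-Gaussian) and a small enough $\delta$ shows that with high probability $A$ satisfies the required $(\delta, \cdot)$-model-RIP using $m = O(k\log(\tfrac{B}{k}\log\tfrac{k}{w}))$ rows. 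With all hypotheses met, Corollary \ref{cor:amiht} (combined with the boosting bookkeeping in Corollary \ref{cor:boosting}) gives after $O(\log\tfrac{\norm{x}_2}{\norm{e}_2})$ iterations a signal estimate $\widehat{x}$ with $\norm{x - \widehat{x}}_2 \leq C\norm{e}_2$; moreover $\widehat{x}$ is the output of \textsc{TailApprox}, so $\supp(\widehat{x}) \in \Msupports_{k, 2B}$ as claimed.

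Finally, the running time follows from multiplying the iteration count $O(\log\tfrac{\norm{x}_2}{\norm{e}_2})$ by the per-iteration cost. Each iteration performs one matrix-vector multiplication with $A$ and $A^T$, contributing $O(mn) = O(nk\log(\tfrac{B}{k}\log\tfrac{k}{w})) = O(nk\log n)$; a constant number $t$ of \textsc{HeadApprox} calls at $O(snBh) = O(\tfrac{kh}{w}\cdot nB / h \cdot h) = O(\tfrac{nkhB}{w})$ each by Theorem \ref{thm:headapproxtime}; and one \textsc{TailApprox} call at $O(snh(\log\tfrac{n}{\delta} + \log\tfrac{x_{\max}}{x_{\min}})) = O(\tfrac{nkh}{w}(\log n + \log\tfrac{x_{\max}}{x_{\min}}))$ by Theorem \ref{thm:tailapproxtime}. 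Summing these matches the running time stated in the theorem. The main subtlety I anticipate is the bookkeeping on models in step two: one must be careful that the constant-factor blow-ups from boosting and from model addition do not corrupt the $O(\cdot)$ inside the logarithm in the sample bound, which is precisely what Theorem \ref{thm:cemdaddition} together with $t = O(1)$ ensures.
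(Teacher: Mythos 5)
Your proposal is correct and follows essentially the same route as the paper: instantiate AM-IHT with \textsc{TailApprox} ($d=2$) and a boosted \textsc{HeadApprox}, bound the required model-RIP over the sum $\Msupports_{k,B} \supportplus \Msupports_T \supportplus \Msupports_H^{\supportplus t}$ via Theorems \ref{thm:cemdaddition} and \ref{thm:samplingbound} plus Fact \ref{fact:modelripbound}, invoke Corollary \ref{cor:boosting} for the guarantee and iteration count, and sum the per-iteration costs of the matrix multiplies and the two oracles. The only minor discrepancy is that the paper records the head oracle's output model as $\Msupports_{2k,3\gamma B}$ (since \textsc{HeadApprox} must approximate over $\Msupports \supportplus \Msupports_T \subseteq \Msupports_{2k,3B}$) rather than your $\Msupports_{k,B'}$, but as you note this constant-factor bookkeeping does not affect the asymptotics.
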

\begin{proof}
First, we show that $m$ rows suffice for $A$ to have the desired model-RIP.
Following the conditions in Corollary \ref{cor:boosting}, $A$ must satisfy the $(\delta, \Msupports_{k,B} \supportplus \Msupports_T \supportplus \Msupports_H^{\supportplus t})$-model-RIP for small $\delta$, where $t$ is the number of times we boost \textsc{HeadApprox} (a constant depending on $\delta$ and $c_T$).
We have $\Msupports_T = \Msupports_{k,2B}$ from Corollary \ref{corollary:tailapprox} and $\Msupports_H = \Msupports_{2k, 3 \gamma B}$ where $\gamma = \ceil{\log\frac{k}{w}} + 1$ from Theorems \ref{thm:cemdaddition} and \ref{thm:headapproxbasic} (note that \textsc{HeadApprox} must be a $(c_H, \Msupports \supportplus \Msupports_T, \Msupports_H, 2)$-head-approximation oracle).
Invoking Theorem \ref{thm:cemdaddition} again shows that it suffices for $A$ to have the $(\delta, \Msupports_{(2 + 2t)k, (3 + 3 t \gamma) B})$-model-RIP.
Using Theorem \ref{thm:samplingbound} and the fact that $t$ is a constant, Fact \ref{fact:modelripbound} then shows that
\[
m \; = \; O\left(k \log\frac{\gamma B}{k} \right) \; = \; O \left(k \log\left(\frac{B}{k} \log\frac{k}{w}\right)\right)
\]
suffices for $A$ to have the desired model-RIP.

Equipped with our model-RIP, we are now able to invoke Corollary \ref{cor:boosting}, which directly gives the desired recovery guarantee $\norm{x - \widehat{x}}_2 \leq C \norm{e}_2$.
Moreover, the corollary also shows that the number of iterations of AM-IHT is bounded by $O(\log \frac{\norm{x}_2}{\norm{e}_2})$.
In order to prove our desired time complexity, we now only have to bound the per-iteration cost of AM-IHT.

In each iteration of AM-IHT, the following operations have a relevant time complexity:
(i) Multiplication with $A$ and $A^T$.
The measurement matrix has at most $k \log n$ rows, so we bound this time complexity by $O(n k \log n)$.
(ii) \textsc{HeadApprox}.
From Theorem \ref{thm:headapproxtime} we know that \textsc{HeadApprox} runs in time $O(n \frac{kh}{w} B)$.
(iii) \textsc{TailApprox}.
Theorem \ref{thm:tailapproxtime} shows that the tail-approximation algorithm runs in time $O(n \frac{kh}{w} (\log n + \log \frac{x_{\max}}{x_{\min}}))$.
Combining these three bounds gives the running time stated in the theorem.
\end{proof}

Note that for $B = O(k)$, the measurement bound gives $m = O(k
\log\log \frac{k}{w})$, which is a significant improvement over the
standard compressive sensing measurement bound $m = O(k \log
\frac{n}{k})$. In fact, the bound for $m$ is only a $\log \log \frac{k}{w}$ factor away from
the information-theoretically optimal bound $m = O(k)$. We leave it as
an open problem whether this spurious factor can be eliminated via a
more refined analysis or algorithm.

\section{Conclusions}
\label{sec:conc}

We have introduced a new framework called \emph{approximation-tolerant model-based compressive sensing}. Our framework consists of a range of algorithms for model-based compressive sensing that succeed even when the model-projection oracles are approximate.  All our algorithms involve oracles that provide constant-factor approximations to both the ``head'' and ``tail'' versions of the model-projection problem. We have instantiated these algorithms for the Constrained Earth Mover Distance (CEMD) model. To achieve this, we have designed novel polynomial-time head- and tail-approximation oracles for the CEMD model based on graph optimization techniques. Leveraging these oracles and our framework results in nearly sample-optimal recovery schemes for signals belonging to this model.

%
Several avenues for future work remain. We have developed model-based recovery schemes that succeed with dense measurement matrices (AM-IHT, AM-CoSaMP), as well as sparse matrices (AM-IHT with RIP-1). An interesting question is whether model-based recovery can be extended to other classes of measurement matrices, such as subsampled Fourier matrices~\cite{RV08}. Also, the required sample-complexity $m$ specified by Theorem~\ref{thm:recovery} is a factor of $ \log(\frac{B}{k} \log\frac{k}{w})$ away from the optimal $m = O(k)$, and it is possible that a different approach is needed to remove this log-factor. Finally, finding an efficient algorithm (or proving a computational hardness result) for \emph{exact} projections into the CEMD model remains an open question.

\bibliographystyle{IEEEbib}
\bibliography{main,csbib}

\end{document}